\documentclass[11pt]{article}
\usepackage[margin=1in]{geometry}
\usepackage[latin2]{inputenc}
\usepackage[english]{babel}
\usepackage[cmex10]{amsmath}
\usepackage[all=normal,paragraphs=tight,bibliography=tight]{savetrees}
\usepackage{todonotes}
\usepackage{amsthm}
\usepackage{amssymb}
\usepackage{microtype}
\usepackage{xspace}
\usepackage{comment}
\usepackage{letltxmacro}
\usepackage{comment}
\usepackage{enumerate}
\usepackage{enumitem}
\usepackage{tikz}
\usetikzlibrary{calc}
\usepackage{url}

\usepackage{graphicx}
\usepackage{caption}
\usepackage{subcaption}
\usepackage{float}

\newtheorem{theorem}{Theorem}[section]
\newtheorem{lemma}[theorem]{Lemma}
\newtheorem{myclaim}[theorem]{Claim}
\newtheorem{corollary}[theorem]{Corollary}

\theoremstyle{definition}

\newcommand{\iflncs}[2]{\ifthenelse{\equal{\uselncs}{yes}}{#1}{#2}}
\newcommand{\onlylncs}[1]{\iflncs{#1}{}}
\newcommand{\onlyfull}[1]{\iflncs{}{#1}}
\newcommand{\uselncs}{no}

\newcommand{\maybeqed}{\onlylncs{\qed}}

\newcommand{\cH}{{\mathcal{H}}}
\newcommand{\cB}{{\mathcal{B}}}
\newcommand{\cV}{{\mathcal{V}}}


\newcommand{\Oh}{\ensuremath{\mathcal{O}}}

\def\cqedsymbol{\ifmmode$\lrcorner$\else{\unskip\nobreak\hfil
\penalty50\hskip1em\null\nobreak\hfil$\lrcorner$
\parfillskip=0pt\finalhyphendemerits=0\endgraf}\fi} 

\newcommand{\cqed}{\renewcommand{\qed}{\cqedsymbol}}


\newcommand{\executeiffilenewer}[3]{%
\ifnum\pdfstrcmp{\pdffilemoddate{#1}}%
{\pdffilemoddate{#2}}>0%
{\immediate\write18{#3}}\fi%
} 
\newcommand{%
\executeiffilenewer{figures/.svg}{figures/.pdf}%
{inkscape -z -D --file=figures/.svg %
--export-pdf=figures/.pdf --export-latex}%
{\input{figures/.pdf_tex}}}[1]{%
\executeiffilenewer{figures/#1.svg}{figures/#1.pdf}%
{inkscape -z -D --file=figures/#1.svg %
--export-pdf=figures/#1.pdf --export-latex}%
{\input{figures/#1.pdf_tex}}}%

\definecolor{Eblue}{RGB}{90,150,190}
\tikzset{%
	v/.style={circle,draw=black!75,inner sep=0pt,minimum size=18pt},
	p/.style={circle,fill=black!75,inner sep=2pt},
	s/.style={circle,fill=black!75,inner sep=0pt},
	E/.style={ultra thick,draw=Eblue},
	non/.style={dashed},
	bag/.style={fill=gray!18!white!78!yellow},
	bagl/.style={fill=gray!10!white},
	Ebag/.style={ultra thick, black!56!yellow!40!gray},
}



\newcommand{\clawdiamond}{\ensuremath{\{\mbox{claw},\mbox{diamond}\}}\xspace}

\newcommand{\cdedgedeletion}{{\sc{\{claw,diamond\}-free Edge Deletion}}\xspace}
\newcommand{\cdedgedeletionfull}{\cdedgedeletion}

\title{Polynomial kernelization for removing induced claws and diamonds\thanks{The research was supported by Polish National Science Centre grants DEC-2013/11/D/ST6/03073 (Micha\l{} Pilipczuk and Marcin Wrochna) and DEC-2012/05/D/ST6/03214 (Marek Cygan and Marcin Pilipczuk).
  Micha\l{} Pilipczuk is currently holding a post-doc position at Warsaw Center of Mathematics and Computer Science.}}
\author{
  Marek Cygan\thanks{Institute of Informatics, University of Warsaw, Poland, \texttt{\{cygan,michal.pilipczuk,m.wrochna\}@mimuw.edu.pl}.}
  \and
  Marcin Pilipczuk\thanks{Department of Computer Science, University of Warwick, UK, \texttt{m.pilipczuk@dcs.warwick.ac.uk}.}
  \and
  Micha\l{} Pilipczuk$^{\dagger}$
  \and 
  Erik Jan van Leeuwen\thanks{Max-Planck Institut f\"{u}r Informatik, Saarbr\"{u}cken, Germany, \texttt{erikjan@mpi-inf.mpg.de}.}
  \and 
  Marcin Wrochna$^{\dagger}$
  }

\date{}

\begin{document}

\maketitle

\begin{abstract}
A graph is called \clawdiamond-free if it contains neither a claw (a $K_{1,3}$) nor a diamond (a $K_4$ with an edge removed) as an induced subgraph. Equivalently, \clawdiamond-free graphs can be characterized as line graphs of triangle-free graphs, or as \emph{linear dominoes}, i.e., graphs in which every vertex is in at most two maximal cliques and every edge is in exactly one maximal clique. 

In this paper we consider the parameterized complexity of the {\sc{\{claw,diamond\}-free Edge Deletion}} problem, where given a graph $G$ and a parameter $k$, the question is whether one can remove at most $k$ edges from $G$ to obtain a \clawdiamond-free graph. Our main result is that this problem admits a polynomial kernel. We complement this finding by proving that, even on instances with maximum degree $6$, the problem is NP-complete and cannot be solved in time $2^{o(k)}\cdot |V(G)|^{\Oh(1)}$ unless the Exponential Time Hypothesis fails.

\end{abstract}

\section{Introduction}\label{sec_intro}
Graph modification problems form a wide class of problems, where one is asked to alter a given graph using a limited number of modifications in order to achieve a certain target property, for instance the non-existence of some forbidden induced structures. Depending on the allowed types of modification and the choice of the target property, one can consider a full variety of problems. Well-studied problems that can be expressed in the graph modification paradigm are {\sc{Vertex Cover}}, {\sc{Feedback Vertex Set}}, and {\sc{Cluster Editing}}, among others.

It is natural to consider graph modification problems from the parameterized perspective, since they have an innate parameter: the number of allowed modifications, which is expected to be small in applications. As far as the set of allowed modifications is concerned, the most widely studied variants are vertex deletion problems (allowing only removing vertices), edge deletion problems (only removing edges), completion problems (only adding edges), and editing problems (both adding and removing edges). It is very easy to see that as long as the target property can be expressed as the non-existence of induced subgraphs from some finite, fixed list of forbidden subgraphs $\mathcal{F}$ (in other words, belonging to the class of $\mathcal{F}$-free graphs), then all the four variants can be solved in time $c^k\cdot |V(G)|^{\Oh(1)}$ via a straightforward branching strategy, where the constant $c$ depends on $\mathcal{F}$ only. This observation was first pronounced by Cai~\cite{cai1996fixed}.

From the perspective of kernelization, again whenever the property is characterized by a finite list of forbidden induced subgraphs, then a standard application of the sunflower lemma gives a polynomial kernel for the vertex deletion variant. The same observation, however, does not carry over to the edge modification problems. The reason is that altering one edge can create new obstacles from $\mathcal{F}$, which need to be dealt with despite not being contained in the original graph $G$. Indeed, Kratsch and Wahlstr\"om~\cite{kratsch2009two} have shown a simple graph $H$ on $7$ vertices such that the edge deletion problem for the property of being $H$-free does not admit a polynomial kernel unless $\textrm{NP}\subseteq \textrm{coNP}/\textrm{poly}$. Later, the same conclusion was proved by Guillemot et al.~\cite{guillemot2013non} for $H$ being a long enough path or cycle.

This line of study was continued by Cai and Cai~\cite{cai2013incompressibility} (see also the full version in the master's thesis of Cai~\cite{cai2012master}), who took up an ambitious project of obtaining a complete classification of graphs $H$ on which edge modification problems for the property of being $H$-free admit polynomial kernels. The project was very successful: for instance, the situation for $3$-connected graphs $H$ is completely understood, and among trees there is only a finite number of remaining unresolved cases. In particular, the study of Cai and Cai revealed that the existence of a polynomial kernel for edge modification problems is actually a rare phenomenon that appears only for very simple graphs $H$.

One of the most tantalizing questions that is still unresolved is the case $H=K_{1,3}$, i.e., the {\sc{Claw-free Edge Deletion}} problem (as well as the completion and editing variants). The study of this particular case is especially interesting in light of the recent powerful decomposition theorem for claw-free graphs, proved by Chudnovsky and Seymour~\cite{ChudnovskyS08c}. For many related problems, having an equivalent structural view on the considered graph class played a crucial role in the design of a polynomial kernel, and hence there is hope for a positive result in this case as well. For this reason, determining the existence of a polynomial kernel for {\sc{Claw-free Edge Deletion}} was posed as an open problem during Workshop on Kernels (WorKer) in 2013, along with the same question for the related {\sc{Line Graph Edge Deletion}} problem~\cite{worker-opl}.

\medskip

\noindent{\bf{Our results.}} As an intermediate step towards showing a polynomial kernel for \textsc{Claw-free Edge Deletion}, we study a related variant, where we forbid \emph{diamonds} as well.\footnote{A more detailed discussion of the relation between these two problems is provided in the conclusions section.}
  By a {\em{diamond}} we mean a $K_4$ with one edge removed, and \clawdiamond-free graphs are exactly graphs that do not contain claws or diamonds as induced subgraphs. This graph class is equal to the class of line graphs of triangle-free graphs, and to the class of \emph{linear dominoes} (graphs in which every vertex is in at most two maximal cliques and every edge is in exactly one maximal clique)~\cite{KloksKM94,MetelskyT03}.

In this paper, we consider the \cdedgedeletionfull problem (\cdedgedeletion for short) where, given a graph $G$ and an integer $k$, one is asked to determine whether there exists a subset $F$ of the edges of $G$ with $|F|\leq k$ such that $G-F$ is \clawdiamond-free; such a set $F$ is also called an {\em{HDS}}. 

Our main result is that \cdedgedeletion admits a polynomial kernel.

\begin{theorem}\label{thm:realmain}
\cdedgedeletion admits a polynomial kernel.
\end{theorem}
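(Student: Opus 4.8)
The plan is to exploit the structural characterization of \clawdiamond-free graphs as \emph{linear dominoes}: graphs in which every edge lies in exactly one maximal clique and every vertex lies in at most two maximal cliques. Fix, for the sake of analysis, a hypothetical HDS $F$ with $|F| \le k$, and let $A \subseteq V(G)$ be the set of at most $2k$ endpoints of edges of $F$; since deleting edges neither destroys an induced claw or diamond that avoids those edges nor creates new adjacencies, every induced claw or diamond of $G$ uses an edge of $F$ and hence meets $A$. Before the main work I would apply cheap preprocessing: delete any connected component that is already \clawdiamond-free (an HDS never needs to touch it, and adding it back to a solution for the rest preserves \clawdiamond-freeness), returning a trivial yes-instance if $G$ becomes empty; and argue that an optimal HDS can be assumed not to delete edges lying in no claw and no diamond, which focuses the analysis on the subgraph spanned by ``obstruction edges'' and their immediate surroundings. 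The overall goal is then to show that, away from a $\poly(k)$-sized core, $G$ already has the linear-domino shape, so the two remaining sources of unbounded size — wide cliques and long clique-chains — can be compressed by local reductions whose safeness is certified by exchange arguments on $F$.

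The heart of the argument is the treatment of maximal cliques. Deleting $k$ edges cannot enlarge a clique, and shrinks a maximal clique $C$ of $G$ only by vertices of $C\cap A$: the edges missing inside $C$ after deletion form a graph with at most $2k$ non-isolated vertices, so $C\setminus A$ remains a clique of $G - F$, contained in a unique maximal clique $M$ of $G-F$. When $|C|$ exceeds a suitable $\poly(k)$ threshold this rigidity sharply constrains how $F$ may act on $C$, since each vertex of $C\setminus A$ lies in $M$ and in at most one further maximal clique of $G-F$. I would use this to (i) bound the number of large maximal cliques appearing in $G-F$, and (ii) design a reduction replacing every large maximal clique of $G$ by one of size $\Oh(k)$: group its vertices by their external neighbourhoods and by their membership in other maximal cliques, keep $\Oh(k)$ representatives of each large ``type'' together with the bounded set of boundary vertices that have neighbours outside, and delete the rest. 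Safeness is an exchange argument in both directions — an HDS for the reduced instance is lifted by treating each deleted clone as a true twin of a surviving representative, and an HDS for the original is projected onto the reduced instance using surviving untouched representatives.

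Once every maximal clique has size $\Oh(k)$, it remains to bound the \emph{number} of maximal cliques. The subtle point is that a linear domino may contain arbitrarily long chains of cliques glued at cut vertices — in particular arbitrarily long induced paths, which contain no claw or diamond at all — and such a chain may connect two genuinely problematic regions, so it cannot simply be discarded. Here I would add a reduction that short-circuits any long induced path, or more generally any long chain of degree-two cliques whose interior is obstruction-free, replacing it by a chain of length $\Oh(k)$; the only way an HDS can interact with such a chain is to cut it into at most $k$ segments, which again yields an exchange argument. Combining this with a sunflower-style count of the maximal cliques that lie close to the at most $2k$ vertices of $A$ bounds the total number of maximal cliques by $\poly(k)$; since each has size $\Oh(k)$, the resulting instance has $\poly(k)$ vertices and edges.

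The step I expect to be the main obstacle is making the two compression directions — shrinking wide cliques and shortening long chains — coexist while every reduction stays safe. An HDS can route its deletions through these regions in non-obvious ways (splitting a wide clique into several cliques of a component, breaking or fusing chains, relocating cut vertices), so the exchange arguments rely on a structural normalization: an optimal HDS may be assumed to interact with each large clique and each long chain in one of only $\poly(k)$ canonical local patterns. Proving such a normalization — in effect a local replacement theorem for \clawdiamond-free edge deletion around a rigid dense region — is where the linear-domino characterization, rather than the raw forbidden-subgraph description, does the essential work.
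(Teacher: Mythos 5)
There are genuine gaps here, and the proposal does not follow the paper's route. The most basic one is that your entire analysis is anchored to the endpoint set $A$ of a hypothetical solution $F$: the algorithm cannot locate $A$, so a ``sunflower-style count of the maximal cliques that lie close to $A$'' is not something a kernelization can perform. The paper instead computes, in polynomial time, a modulator $X$ of size at most $4k$ by greedily packing edge-disjoint obstructions (Lemma~\ref{lem:greedy-modulator}), with the stronger property that every induced claw or diamond has an \emph{edge} inside $X$; all subsequent structure (bags of $G-X$, attachment of bags to vertices of $X$) is measured against this computable set. Second, your preprocessing claim that an optimal HDS ``can be assumed not to delete edges lying in no claw and no diamond'' is unjustified and, in general, false: deleting edges creates new claws, and a minimal solution may be forced to delete edges that lie in no obstruction of the original $G$. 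This is exactly the phenomenon that makes edge-deletion kernelization hard, and the paper's own analysis (Lemma~\ref{lem:all-or-nothing} and the third marking rule in Lemma~\ref{lem:compression}) exists precisely because minimal solutions do route deletions through bags that contain no obstruction of $G$ but acquire claws after other deletions.

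Third, the step you yourself flag as the obstacle --- the ``local replacement theorem'' certifying that solutions interact with large cliques and long chains in $\poly(k)$ canonical ways --- is the actual content of the proof and is absent. Your clique-shrinking rule also does not bound anything as stated: in a linear domino, every vertex of a huge maximal clique may lie in its own private second clique, so the number of ``external-neighbourhood types'' is not $\poly(k)$, and the ``bounded set of boundary vertices'' is unbounded. The paper avoids all of this by proving rigidity statements against the modulator rather than compressing the graph: each $x\in X$ has at most two attached bags (Lemmas~\ref{lem:two-in-bag-implies-all}--\ref{lem:leq2-bags-attached}), no HDS of size at most $k$ touches a clique of size at least $2k+2$ (Lemma~\ref{lem:no-big-bag-deletions}), and every minimal small HDS lives inside $E(S)$ for a marked vertex set $S$ of size $\Oh(k^4)$ (Lemma~\ref{lem:compression}). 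It then applies a generic compression to an annotated instance (Lemma~\ref{lem:compression-to-annotated}) and, because the annotations cannot be expressed by gadgets, composes with the NP-hardness reduction of Theorem~\ref{thm:lower-bound} to land back in \cdedgedeletion. Your plan, if completed, would yield a more direct reduction-rule kernel, but the exchange arguments it rests on are precisely what is not proved.
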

In order to prove Theorem~\ref{thm:realmain}, we give a \emph{polynomial-time compression} of \cdedgedeletion into a problem in NP. 
By a polynomial-time compression into an unparameterized problem $R$ we mean a polynomial-time algorithm that, given an instance $(G,k)$ of \cdedgedeletion, outputs an equivalent instance $y$ of $R$ such that $|y|\leq f(k)$, for some computable function $f$ called the {\em{size}} of the compression. 

\begin{theorem}\label{thm:main}
\cdedgedeletion admits a polynomial-time compression algorithm into a problem in NP, where the size of the compression is $\Oh(k^{24})$.
\end{theorem}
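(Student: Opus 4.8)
The plan is to build the compression through a sequence of reduction rules that progressively control the structure of the instance, exploiting the "linear domino" characterization: a graph is \clawdiamond-free exactly when every vertex lies in at most two maximal cliques and every edge lies in exactly one maximal clique. First I would observe that an HDS $F$ of size at most $k$ touches at most $2k$ vertices, so all of $G$ except a bounded-size "affected" part must already decompose cleanly into maximal cliques meeting pairwise in at most one vertex. The natural first step is therefore to identify, in the input graph, the large maximal cliques and the "clean" regions far from any obstruction; these clean regions form long induced paths of cliques (the line-graph-of-a-path structure), and such a path of cliques of length much larger than some poly$(k)$ bound can be safely shortened by a reduction rule, since an optimal solution cannot afford to cut it and it transmits no nontrivial interaction between its two ends. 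This is the standard "irrelevant path/clique" shrinking, and I expect it to be routine once the right invariants are set up.

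Next I would bound the number of "branching" cliques — maximal cliques that interact with three or more others, or that are incident to vertices of high degree — by arguing that each such configuration either is hit by $F$ or yields a claw/diamond that must be destroyed, charging it to one of the $k$ deleted edges. Combined with the path-shrinking rule, this should reduce to an instance in which there are $\Oh(\poly(k))$ maximal cliques of "bounded" size, plus possibly a few cliques that are genuinely large (size $\omega(\poly(k))$). Large cliques are the delicate case: a clique on $N$ vertices contains $\binom{N}{3}$ potential claw-centers once an edge inside it is removed, so we cannot delete edges inside a large clique without creating diamonds, and the rule must be that edges strictly inside a maximal clique are essentially never in a minimal HDS unless the clique is small. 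I would formalize a rule that caps the size of each maximal clique at some poly$(k)$: vertices in a large clique that have no neighbor outside the clique are interchangeable, and all but poly$(k)$ of them can be deleted (or marked as "undeletable bulk") without changing the answer, since an optimal $F$ treats them symmetrically.

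The hardest part will be handling the vertices of a large clique that \emph{do} have neighbors outside — these are the "attachment points" through which the clique communicates with the rest of the graph. Here the clique structure no longer shields us: an attachment vertex together with two of its outside-neighbors and a clique-neighbor can form a claw, and deleting the wrong edges creates diamonds elsewhere in the clique. I expect this to require a careful gadget-style argument: group the attachment vertices by their outside-neighborhood "type", show that within a type only boundedly many representatives matter, and then re-encode the interaction between the clique and its exterior as a small auxiliary constraint (this is where leaving the realm of \cdedgedeletion and landing in a more general problem "in NP" becomes essential — the compressed instance $y$ need not itself be a graph-deletion instance). Once the number of maximal cliques is poly$(k)$, each of size poly$(k)$, with poly$(k)$ attachment data per clique, the total bit-size is poly$(k)$, and tracking the exponents carefully through the $\Oh(1)$-many multiplications of the reduction rules gives the claimed $\Oh(k^{24})$ bound; the equivalence of the compressed instance is then verified rule by rule, and membership of the target problem in NP is immediate since a solution is a set of at most $k$ edges plus the auxiliary choices, checkable in polynomial time.
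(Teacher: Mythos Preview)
Your proposal has a genuine gap: it never introduces the \emph{vertex modulator}, which is the engine that makes the whole argument run. You want to work directly on $G$, identifying ``clean regions far from any obstruction'' and then shrinking long clique-paths, but $G$ is \emph{not} \clawdiamond-free, so it has no linear-domino structure to exploit; its maximal cliques may overlap arbitrarily, and ``branching cliques'' are not a well-defined finite set you can charge to~$F$. Your first paragraph implicitly assumes that away from a bounded-size affected part the graph already decomposes into bags meeting pairwise in one vertex --- but you never say how to \emph{find} that bounded-size part, and that is precisely the content of the modulator step.

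The paper's route is: greedily pack edge-disjoint claws/diamonds to obtain $X\subseteq V(G)$ with $|X|\le 4k$ such that every induced obstacle has an edge in $E(X)$; then $G-X$ is \clawdiamond-free and its bag decomposition exists and is well-behaved. The key structural lemma --- which your proposal has no analogue of --- is that each $x\in X$ is \emph{attached} to at most two bags of $G-X$ (Lemma~\ref{lem:leq2-bags-attached}); this immediately gives only $\Oh(k)$ ``important'' bags, with no path-shrinking or irrelevant-vertex rules needed at all. From there one shows (Lemma~\ref{lem:all-or-nothing}) that a minimal HDS meets an unattached bag only by isolating some of its vertices, and that big cliques are untouched (Lemma~\ref{lem:no-big-bag-deletions}); a short marking argument (Lemma~\ref{lem:compression}) then yields a set $S$ of size $\Oh(k^4)$ containing the endpoints of every minimal solution, after which a generic annotated-problem compression (Lemma~\ref{lem:compression-to-annotated}) gives $\Oh(k^{12})$ vertices and the claimed $\Oh(k^{24})$ bound. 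Your ``hardest part'' --- handling attachment points of large cliques by grouping outside-neighborhood types --- is exactly what the modulator plus the two-bags-per-$x$ lemma sidesteps: once you have~$X$, the attachment structure is already $\Oh(k)$-sized and there is no need for type-classification or gadget encoding.
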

The problem in NP that Theorem~\ref{thm:main} refers to actually is an annotated variant of \cdedgedeletion. Unfortunately, we are unable to express the annotations in a clean manner using gadgets. Therefore, we compose the polynomial-time compression of Theorem~\ref{thm:main} with the NP-hardness reduction that we present for \cdedgedeletion (see Corollary~\ref{cor:lower-bound} discussed below) in order to derive Theorem~\ref{thm:realmain}. 

To prove Theorem~\ref{thm:main}, we apply the vertex modulator technique. We first greedily pack edge-disjoint claws and diamonds in the input graph. If more than $k$ such obstacles can be packed, then we immediately infer that we are dealing with a no-instance. Otherwise, we obtain a set $X\subseteq V(G)$ with $|X|\leq 4k$ such that every induced claw and diamond in $G$ has at least one edge with both endpoints in $X$; in particular, $G-X$ is \clawdiamond-free. This means that we can start to examine the structure of $G-X$ understood as a line graph of a triangle-free graph: it consists of a number of maximal cliques (called henceforth {\em{bags}}) that can pairwise share at most a single vertex, and for two intersecting bags $B_1,B_2$ there is no edge between $B_1\setminus B_2$ and $B_2\setminus B_1$. Next, we prove that the neighborhood of every vertex $x\in X$ in $G-X$ is contained only in at most $2$ bags, which gives us at most $8k$ bags that are important from the viewpoint of neighborhoods of vertices in $X$. The crux of the proof lies in observing that an optimum deletion set $F$ consists only of edges that are close to these important bags. Intuitively, all the edges of $F$ lie either in important bags or in bags adjacent to the important ones. A more precise combinatorial analysis leads to a set $S\subseteq V(G)$ of size polynomial in~$k$ such that every edge of $F$ has both endpoints in $S$. After finding such a set $S$, a polynomial-time compression for the problem can be constructed using a generic argument that works for every edge modification problem with a finite list of forbidden induced subgraphs.

On a high level, our approach uses a vertex modulator technique that is similar to one used by Drange and Pilipczuk~\cite{DrangeP14} for their recent polynomial kernel for {\sc{Trivially Perfect Editing}}. However, since we are dealing with a graph class with fundamentally different structural properties, the whole combinatorial analysis of the instance with the modulator $X$ (which forms the main part of the paper) is also fundamentally different. We also remark that Cai~\cite{cai2012master} obtained a kernel for the {\sc{Diamond-free Edge Deletion}} problem with $\Oh(k^4)$ vertices. However, the techniques used in that result seem unusable in our setting: their core observation is that a diamond can either be already present in the original graph $G$ or be created by removing an edge of a $K_4$, and thus one can analyze an auxiliary `propagation graph' with diamonds and $K_4$s of the original graph $G$ as nodes. In our setting, we also forbid claws, and the core combinatorial properties of this propagation graph become much too complicated to handle.

Finally, we complement our positive result by proving that \cdedgedeletion is NP-hard and does not admit a subexponential-time parameterized algorithm unless the Exponential Time Hypothesis of Impagliazzo et al.~\cite{ImpagliazzoPZ01} fails.

\begin{theorem}\label{thm:lower-bound}
There exists a polynomial-time reduction that, given an instance~$\phi$ of \textsc{3Sat} with $n$ variables and $m$ clauses, outputs an instance $(G,k)$ of \cdedgedeletion such that (a) $(G,k)$ is a yes-instance if and only if $\phi$ is satisfiable, (b) $|V(G)|,k=\Oh(n+m)$, and (c) $\Delta(G)=6$.
\end{theorem}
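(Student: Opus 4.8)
The plan is to reduce from a suitable NP-hard variant of \textsc{3Sat} with bounded occurrences per variable — concretely, \textsc{(3,4)-Sat} or \textsc{3Sat} with at most a constant number of literal occurrences — so that the gadget connections do not blow up the degree. Since \clawdiamond-free graphs are exactly line graphs of triangle-free graphs, I would design the reduction in the ``root graph'' picture whenever convenient: an edge deletion in $G$ corresponds to splitting a vertex in the root graph (or to certain local surgeries), and we want to force the optimum HDS to encode a truth assignment. The target graph $G$ will be a disjoint-ish union of \emph{variable gadgets} and \emph{clause gadgets}, glued along low-degree ``wires'' that transmit a Boolean value; the budget $k$ is set to exactly the number of edge deletions needed when every gadget is resolved optimally, so that a solution of size $\le k$ must resolve each gadget locally and consistently.

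First I would build the variable gadget: a small fixed graph (for instance a short cycle of diamonds/$K_4$'s, or a gadget based on line graphs of small trianglefree graphs) that is \emph{not} \clawdiamond-free, in which there are exactly two minimum-size ways to destroy all induced claws and diamonds, corresponding to the values \texttt{true} and \texttt{false}; moreover the two solutions should differ in which ``port edges'' of the gadget survive, so the choice propagates outward. Next the clause gadget for a clause $(\ell_1\vee\ell_2\vee\ell_3)$: a fixed graph attached to the three wires carrying $\ell_1,\ell_2,\ell_3$ such that it can be made \clawdiamond-free within its allotted budget if and only if at least one incoming wire carries \texttt{true}. Consistency across multiple occurrences of a variable is handled by a copy/fan-out gadget on the wire, again with a fixed budget, that forces all copies to agree. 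Throughout, I must keep $\Delta(G)\le 6$: this is the binding design constraint, so each gadget's internal vertices have small fixed degree, and each wire endpoint contributes at most a bounded number of extra incident edges. Using a version of \textsc{3Sat} with at most $3$ (or $4$) occurrences of each variable keeps the fan-out gadgets of bounded size and bounded degree; the exact degree bound of $6$ will come out of a careful accounting of diamond/$K_4$ overlaps, since a vertex in two maximal cliques of size $3$ already has degree $4$.

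With the gadgets in place, correctness splits into the two standard directions. ($\Rightarrow$) Given a satisfying assignment $\sigma$, take in each variable gadget the minimum solution corresponding to $\sigma$, in each fan-out gadget the propagating solution, and in each clause gadget a local solution consistent with a fixed satisfied literal; summing the fixed per-gadget costs gives exactly $k$ edges whose removal makes $G$ \clawdiamond-free (this uses that gadgets interact only through the designated wires and that removing all chosen edges leaves each gadget, and hence $G$, a linear domino). ($\Leftarrow$) Given an HDS $F$ with $|F|\le k$, a counting/exchange argument shows $F$ must spend at least the fixed cost in each gadget, hence exactly that cost, hence $F$ restricted to each gadget is one of its prescribed minimum solutions; the wire constraints then force these local choices to encode a single globally consistent assignment $\sigma$, and the clause gadgets being resolved forces $\sigma$ to satisfy $\phi$. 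Finally, the size bounds $|V(G)|,k=\Oh(n+m)$ are immediate since we use $\Oh(1)$ vertices and budget per variable, per occurrence, and per clause. The ETH consequence then follows by the textbook route: a $2^{o(k)}\cdot|V(G)|^{\Oh(1)}$ algorithm composed with this linear-parameter reduction would give a $2^{o(n+m)}$ algorithm for \textsc{3Sat}, contradicting ETH (using the Sparsification Lemma to pass to linearly many clauses); NP-hardness is immediate from (a).

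The main obstacle I anticipate is the simultaneous satisfaction of three competing demands in the gadget design: (i) the gadget must have \emph{exactly} two (resp.\ the right number of) minimum HDSs with the correct ``port signature,'' with no cheaper or sideways solutions that could cheat the budget — this rigidity is what makes the $(\Leftarrow)$ direction work and is the delicate part; (ii) gluing gadgets along wires must not create new claws or diamonds that a local solution fails to kill, nor new cheap solutions — so the wire interface must be ``inert'' (e.g.\ the wire edge lies in its own maximal clique whose only obstructions are internal to one gadget); and (iii) all of this within $\Delta(G)\le 6$, which is tight enough that every incidence must be budgeted. I expect the cleanest route is to do the entire construction in the root-graph (triangle-free graph) model, where deleting an edge of $G$ is ``cutting'' an edge of the line-graph preimage, design small trianglefree gadget graphs there, and only at the end translate to $G=L(\cdot)$ and verify the degree bound; the careful verification that no unintended induced claw/diamond survives a budget-$k$ deletion is where most of the technical work will lie.
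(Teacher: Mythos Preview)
Your high-level plan---variable gadgets with exactly two optimum local solutions, clause gadgets satisfiable within budget iff some literal is true, and a tight global budget forcing local optimality everywhere---is precisely the skeleton the paper uses. But several design choices you make diverge from the paper in ways worth flagging.

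First, you reach for bounded-occurrence \textsc{3Sat} and separate fan-out gadgets to keep degrees small. The paper does neither: it reduces from plain \textsc{3Sat}. Its variable gadget for $x$ is an even cycle on $2(p(x)+q(x)+1)$ ``$t$-vertices'' (with a pendant on each), where $p(x),q(x)$ are the numbers of positive and negative occurrences of $x$. Each edge of the cycle sits in a claw centered at its endpoint, so any HDS must delete at least half the cycle edges, and equality forces exactly the even or exactly the odd edges---this is the two-way choice. The point is that the gadget \emph{scales in length} with the number of occurrences while every vertex keeps degree at most~$4$; this entirely obviates fan-out gadgets and the bounded-occurrence preprocessing. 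The degree-$6$ bound is attained only inside the clause gadget (at the central vertex $u_c$ and at the $v_c^x$'s).

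Second, the paper uses a trick you do not mention: it builds $G$ to be $\{K_4,\text{diamond}\}$-free from the outset. In such a graph no edge deletion can ever create a diamond, so the instance is simultaneously an instance of \textsc{Claw-free Edge Deletion} and of \cdedgedeletion with identical solution sets. This lets the entire correctness argument deal with claws only, which is a substantial simplification over verifying that no diamonds survive or appear.

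Third, your suggestion to work in the root-graph picture is unlikely to help and is slightly misleading as stated: deleting an edge of $G=L(H)$ does \emph{not} correspond to a clean local surgery on $H$ (it is not ``splitting a vertex''), and $G-F$ need not be a line graph at intermediate stages. The paper works directly in $G$ throughout.

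The genuine gap is that your proposal contains no concrete gadgets, and the rigidity you correctly identify as the crux (exactly two optimum solutions for the variable gadget; exactly the right port behaviour at the clause gadget; no cross-gadget interference) is the entire content of the proof. The paper's clause gadget, for instance, is a specific $19$-vertex graph of nine triangles in which a careful case analysis (Claim~4.2) shows that any HDS spending only~$7$ non-port edges must leave at least one ``literal'' edge $t_c^x\tilde t_c^x$ undeleted. Without such explicit constructions and their verification, the proposal remains a plan rather than a proof.
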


\begin{corollary}\label{cor:lower-bound}
Even on instances with maximum degree $6$, \cdedgedeletion is NP-complete and does not admit algorithms with running time $2^{o(k)}\cdot |V(G)|^{\Oh(1)}$ or $2^{o(|V(G)|)}$ unless the Exponential Time Hypothesis fails.
\end{corollary}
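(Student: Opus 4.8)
I would prove Theorem~\ref{thm:lower-bound} by a classical gadget reduction, and then read off Corollary~\ref{cor:lower-bound} from it together with the Exponential Time Hypothesis. Given $\phi$ I construct $G$ by gluing together constant-size pieces: one \emph{wire cycle} per variable, one \emph{clause gadget} per clause, and \emph{connectors} that attach a wire to the clauses in which the variable occurs. Since there are $\Oh(n+m)$ pieces, each of constant size, and the budget $k$ will be a sum of constant contributions, both $|V(G)|$ and $k$ are $\Oh(n+m)$. The design principle everywhere is that long induced paths and cycles are already \clawdiamond-free (they are line graphs of triangle-free graphs), so each gadget is such a ``base'' piece carrying one \emph{active} obstruction — a claw or a diamond — that every HDS is forced to destroy, and the cheapest ways of destroying it encode the Boolean information we want to transmit.

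For a variable $x$ occurring $t$ times I take a cycle built from $\Theta(t)$ identical constant-size \emph{link} gadgets. Each link contains one induced diamond on vertices $\{a,b,c,d\}$ (the two triangles $abc$ and $abd$ sharing the edge $ab$), attached to its neighbouring links and to one clause connector through buffer paths, in such a way that the only ways to locally destroy the diamond by deleting a single edge, without creating a new claw or diamond near the attachment points, are to delete $ac$ (call this the \emph{true} state) or to delete $bd$ (the \emph{false} state); all other single-edge deletions inside a link either leave an obstruction or push one into a neighbouring link, so that a minimum-cost HDS is forced to pick the \emph{same} state in every link of a given cycle. The \emph{true} state of a link that sits at an occurrence of the literal $x$ (respectively the \emph{false} state at an occurrence of $\lnot x$) exposes a designated edge that the corresponding clause connector can also use.

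The clause gadget for $c=\ell_1\lor\ell_2\lor\ell_3$ is a constant-size graph joined through connectors to the three literal occurrences, and it is arranged to contain one active claw or diamond that cannot be destroyed ``for free'': if at least one of $\ell_1,\ell_2,\ell_3$ is set to true by its wire cycle, the edge exposed there lets us simultaneously repair the clause obstruction at no extra cost, whereas if all three literals are false the clause gadget forces exactly one additional edge deletion. I set $k$ equal to the total number of forced deletions summed over all wire cycles and all clause gadgets (counting the intended one-per-link plus one-per-clause, without any slack). A satisfying assignment then yields an HDS of size exactly $k$: pick the matching state in each wire cycle and charge each clause to one of its true literals. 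Conversely, by a rigidity/counting argument — the per-gadget lower bound on deletions is met only by the intended all-true-links or all-false-links configuration of each cycle, and by clause repairs that genuinely use a true literal — any HDS of size at most $k$ must be globally consistent on every wire cycle and must satisfy every clause, so $\phi$ is satisfiable. The degree bound $\Delta(G)=6$ is enforced by spreading each variable's occurrences around its own cycle, so a wire vertex sees only its $\Oh(1)$ neighbours inside one or two adjacent links plus at most one connector; a direct inspection of the finitely many distinct gadget vertices (link, clause, connector) confirms that $6$ is the maximum and that the gadgets can be drawn to meet it.

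The main obstacle will be exactly this rigidity analysis. Because deleting an edge inside one gadget can create a fresh claw or diamond straddling a gadget boundary, each link and clause gadget must contain enough buffer — long induced paths between the active obstruction and the attachment points — that no cheap HDS can gain anything by acting globally, and the local lower bounds must be shown to be tight \emph{only} for the intended configurations (including ruling out the adversary saving a deletion by cutting across two gadgets or by repairing two obstructions with one clever deletion that also breaks consistency). Once the local lower bounds are proved and seen to sum to exactly $k$ with equality forcing a satisfying assignment, Theorem~\ref{thm:lower-bound} follows. Corollary~\ref{cor:lower-bound} is then immediate: \cdedgedeletion is in NP and the reduction proves NP-hardness already for $\Delta(G)=6$; and since $|V(G)|,k=\Oh(n+m)$, a $2^{o(k)}\cdot|V(G)|^{\Oh(1)}$ or $2^{o(|V(G)|)}$ algorithm would give a $2^{o(n+m)}$ algorithm for \textsc{3Sat}, contradicting the Exponential Time Hypothesis (after sparsifying so that $m=\Oh(n)$).
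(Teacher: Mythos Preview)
Your derivation of the corollary from Theorem~\ref{thm:lower-bound} is correct and matches the paper's: NP-membership is immediate, the reduction gives NP-hardness on $\Delta=6$ instances, and the linear bounds $|V(G)|,k=\Oh(n+m)$ together with the Sparsification Lemma yield both ETH lower bounds.

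For the reduction itself your high-level template---per-variable cycle gadgets, constant-size clause gadgets, a linear budget, and rigidity via per-gadget counting---agrees with the paper, but the concrete gadgets differ in an instructive way. The paper does not use diamonds as obstructions at all: it constructs $G$ to be $\{K_4,\mbox{diamond}\}$-free from the outset, so no diamond can ever be created by an edge deletion and the instance is literally an instance of \textsc{Claw-free Edge Deletion} on the same graph. The variable gadget is simply an even cycle of $t$-vertices, each with one pendant leaf, so every $t$-vertex is the center of a claw; a counting argument (each cycle edge kills two claws, pendant edges only one) forces any tight solution to delete a perfect alternating set $E_\top^x$ or $E_\bot^x$. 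The clause gadget is an explicit $19$-vertex triangle-based graph containing seven edge-disjoint claws, and the ``at least one literal true'' condition drops out of a short case analysis once the budget of $7$ is tight.

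This buys the paper a much shorter rigidity analysis than the one you anticipate. Because diamonds are globally absent and cannot reappear, there is no need for buffer paths to insulate gadgets and no risk of a deletion creating a diamond across a gadget boundary; the only interactions to check are claws, and those are localized by construction. Your diamond-based link idea could plausibly be completed, but as you correctly flag, its hard part is precisely the cross-gadget interaction (a single deletion in a diamond has five options, not two, and ruling out the others relies entirely on unstated attachment details), whereas the paper sidesteps the issue entirely by the $\{K_4,\mbox{diamond}\}$-free design choice.
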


Corollary~\ref{cor:lower-bound} shows that, contrary to recent discoveries for a number of edge modification problems related to subclasses of chordal graphs~\cite{bliznets2014interval,bliznets2014proper,drange2014exploring,fomin2012subexponential,ghosh2013faster}, \cdedgedeletion does not enjoy the existence of subexponential-time parameterized algorithms. The reduction of Theorem~\ref{thm:lower-bound} resembles constructions for similar edge modification problems (see e.g.~\cite{drange2014exploring,DrangeP14,komusiewicz2012cluster}): every variable is replaced by a cyclic variable gadget that has to be completely broken by the solution in one of two possible ways, and variable gadgets are wired together with constant-size clause gadgets that verify the satisfaction of the clauses. 

                                                                                                                                                                                                                                                                                                                                                                                                                                       \onlylncs{In this extended abstract, we provide an almost complete proof of Theorem~\ref{thm:main}, with some simpler proofs
and illustrating figures expelled to Appendices~\ref{app:omitted} and~\ref{app:figures}, respectively.
The proof of Theorem~\ref{thm:lower-bound} is provided in Appendix~\ref{sec:hardness}.}

\section{Preliminaries}\label{sec:prelims}
\onlyfull{%
\paragraph*{Graphs}
We consider finite, undirected, simple graphs $G$ with vertex set $V(G)$ and edge set $E(G)$.
Edges $\{u,v\}\in E(G)$ will be written as $uv$ for short.
For a subset of vertices $S\subseteq V(G)$, the \emph{subgraph of $G$ induced by $S$}, denoted $G[S]$, is the graph with vertex set $S$ and edge set $\{uv\in E(G) \mid u,v\in S\}$.
We write $G-S$ for $G[V(G)\setminus S]$.
For a subset of edges $F\subseteq E(G)$, we write $G-F$ for the subgraph of $G$ obtained by deleting $F$, that is, $V(G-F)=V(G)$ and $E(G-F)=E(G)\setminus F$.
Two disjoint sets $X,Y \subseteq V(G)$ are \emph{fully adjacent} if for every $x \in X$ and $y \in Y$, the vertices $x$ and $y$
are adjacent. If one of these sets is a singleton, say $X = \{v\}$, then we say that $v$ and $Y$ are fully adjacent.

For a vertex $v\in V(G)$, the (open) neighborhood $N_G(v)$ of $v$ is the set $\{u \mid uv \in E(G)\}$.
The \emph{closed neighborhood} $N_G[v]$ of $v$ is defined as $N_G(v)\cup\{v\}$.
For a subset of vertices $S\subseteq V(G)$, we denote by $E_G(S)$ the set of edges of $G$ with both endpoints in $S$.
In this work $N_G$ and $E_G$ will always pertain to the graph named $G$, so we drop the subscript.}

\onlylncs{%
In this work we use standard graph notation, and standard notation from parameterized complexity;
a brief overview is provided in Appendix~\ref{app:omitted}.}

\onlyfull{\paragraph*{Cliques, claws and diamonds}}
A \emph{clique} of $G$ is a set of vertices that are pairwise adjacent in $G$; we often identify cliques with the complete subgraphs induced by them.
A \emph{maximal clique} is a clique that is not a proper subset of any other clique.
A \emph{claw} is a graph on four vertices $\{c,u,v,w\}$ with edge set $\{cu,cv,cw\}$, called \emph{legs} of the claw; we call $c$ the \emph{center} of the claw, and $u,v,w$ the \emph{leaves} of the claw.
When specifying the vertices of a claw we always give the center first.
A diamond is a graph on four vertices $\{u,v,w,x\}$ with edge set $\{uv,uw,vw,vx,wx\}$.

\onlyfull{%
\paragraph*{Parameterized complexity}
\def\cQ{\ensuremath{\mathcal{Q}}}
Parameterized complexity is a framework for refining the analysis of a problem's computational complexity by defining an additional ``parameter'' as part of a problem instance.
Formally, a parameterized problem is a subset $\cQ$ of $\Sigma^*\times \mathbb{N}$ for some finite alphabet $\Sigma$.
The problem is fixed parameter tractable if there is an algorithm which solves an instance $(x,k)$ of the problem in time $f(k)\cdot |x|^c$, where $f:\mathbb{N}\to\mathbb{N}$ is any computable function and $c$ is any integer.
If $f(k)=2^{o(k)}$, we say the algorithm is a subexponential parameterized algorithm.
A {\em{kernelization algorithm}} for $\cQ$ is an algorithm that takes an instance $(x,k)$ of $\cQ$ and in time polynomial in $|x|+k$ outputs an equivalent instance $(x',k')$ (i.e., $(x,k)$ is in $\cQ$ if and only if $(x',k')$ is) such that $|x'|\leq g(k)$ and $k'\leq g(k)$ for some computable function $g$.
If the \emph{size} of the kernel $g$ is polynomial, we say that $\cQ$ admits a polynomial kernel. We can relax this definition to the notion of a {\em{compression algorithm}}, where the output is required to be an equivalent instance $y$ of some unparameterized problem $\cQ'$, i.e., $(x,k)\in \cQ$ if and only if $y\in \cQ'$. The upper bound $g(k)$ on $|y|$ will be then called the {\em{size}} of the compression.
We refer the reader to the books of Downey and Fellows~\cite{DowneyF99} and of Flum and Grohe~\cite{flum2006parameterized} for a more rigorous introduction.}

\onlyfull{\paragraph*{Forbidden induced subgraphs}}
Consider any finite family of graphs $\cH$.
A graph $G$ is \emph{$\cH$-free} if for every $H\in\cH$, $G$ does not contain $H$ as an induced subgraph\onlyfull{ ($H$ is not isomorphic to $G[S]$ for any $S\subseteq V(G)$)}.
An \emph{HDS ($\cH$-free deletion set)} for $G$ is a subset of edges $F\subseteq E(G)$ such that $G-F$ is $\cH$-free. Whenever we talk about a {\em{minimal}} HDS, we mean inclusion-wise minimality.
\textsc{$\cH$-free Edge Deletion} is the parameterized problem asking, for a graph $G$ and a parameter $k$, whether $G$ has an HDS of size at most $k$.
In \textsc{Annotated $\cH$-free Edge Deletion} we are additionally given a set $S\subseteq V(G)$ and the question is whether $G$ has an HDS of size at most $k$ that is contained in $E(S)$.

\onlyfull{\bigskip}

Let $(G,k)$ be an instance of \textsc{$\cH$-free Edge Deletion}.
Recall that we can easily find a subset $X$ of the \emph{vertices} of $G$ of size polynomial in $k$ such that (in particular) $G-X$ is $\cH$-free.
We refer to such a set as a \emph{modulator} of $G$.
The construction here is basically the same as in Lemma~3.3 of~\cite{DrangeP14}, and a slightly stronger construction based on the Sunflower Lemma can be found in \cite{FominSV13}.

\begin{lemma}\label{lem:greedy-modulator}\onlylncs{\appsign{}\footnote{The proof of all statements marked with \appsign{} can be found in Appendix~\ref{app:omitted}.}}
Let $c = \max\{|V(H)| : H\in\cH\}$.
Then one can in polynomial time either find a subset $X\subseteq V(G)$ of size at most $c\cdot k$ such that every induced $H\in\cH$ in $G$ has an edge in $E(X)$, or conclude that $(G,k)$ is a no-instance.
\end{lemma}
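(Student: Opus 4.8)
The plan is to do a greedy packing of edge-disjoint copies of forbidden subgraphs. First I would iterate the following step: while $G$ still contains some $H \in \cH$ as an induced subgraph (which can be checked in polynomial time, since $\cH$ is finite and each $H$ has at most $c$ vertices, so one simply tries all $\binom{|V(G)|}{c}$ vertex subsets), pick any such induced copy, add all of its vertices to a growing set $X$, and add all of its edges to a growing collection $\cF$ of edge sets, and then delete those edges from the current graph. Because we delete the edges of the chosen copy, the copies we collect are pairwise edge-disjoint, so $|\cF|$ never exceeds the number of iterations.

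The key observation is that each iteration that actually finds a forbidden subgraph ``uses up'' at least one edge that an HDS must delete, in the following precise sense. Suppose the process runs for more than $k$ iterations, producing more than $k$ pairwise edge-disjoint induced copies $H_1, H_2, \dots$ of members of $\cH$ \emph{in $G$ itself} — note that although we delete edges from the working graph, every chosen copy was an induced subgraph of the graph at the time it was found, but I need each to be induced in the original $G$; this holds because deleting edges only destroys induced subgraphs, never creates new ones, so an induced copy of $H$ in the working graph is also an induced copy of $H$ in $G$. Now if $F$ is any HDS of $G$, then $G-F$ is $\cH$-free, so $F$ must contain at least one edge of each $H_i$; since the $H_i$ are edge-disjoint, $|F| > k$. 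Hence if the process runs for more than $k$ iterations we may safely report a no-instance, and otherwise it terminates after at most $k$ iterations with $|X| \le c \cdot k$ (each iteration adds at most $c$ vertices).

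Finally I would check the two promised properties of the output set $X$ in the case the process terminated normally. When it stops, the working graph $G' = G - E(\cF)$ contains no induced $H \in \cH$, i.e. $G'$ is $\cH$-free; and since $E(G - X) \subseteq E(G) \setminus E(X) \subseteq E(G')$ (every edge of $\cF$ has both endpoints in $X$, hence is not an edge of $G-X$), and $G-X$ is an induced subgraph of $G'$ with no edges removed relative to $G'$, $\cH$-freeness of $G'$ is inherited by $G-X$. More importantly, every induced copy of some $H \in \cH$ in $G$ must have been ``hit'': if some induced $H$ in $G$ had no edge in $E(X)$, then all of its edges survive in $G'$ and its vertices induce the same graph there (again, edge deletions do not add edges), contradicting that $G'$ is $\cH$-free. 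Thus every induced $H \in \cH$ in $G$ has an edge in $E(X)$, as required.

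I do not expect a genuine obstacle here; the only point that needs care is the direction of the ``induced'' argument — one must use that deleting edges cannot create new induced subgraphs, so that copies found in the shrinking working graph are faithfully reflected in $G$, while copies of $\cH$ in $G$ that avoid $E(X)$ would persist in the final working graph. The vertex-count bound $|X| \le c\cdot k$ and the polynomial running time (at most $k+1$ iterations, each a brute-force search over $c$-subsets) are immediate.
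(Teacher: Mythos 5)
There is a genuine gap, and it sits exactly where you flagged the argument as needing care: the claim that ``deleting edges only destroys induced subgraphs, never creates new ones'' is false. Deleting an edge $uv$ turns $uv$ into a \emph{non-edge}, which can create a new induced copy of some $H\in\cH$ on a vertex set that did not induce a forbidden graph in $G$ (this is precisely the phenomenon the paper's introduction identifies as the source of difficulty for edge-deletion kernelization). Consequently, the copies $H_2,H_3,\dots$ that your procedure finds in the shrinking working graph need not be induced subgraphs of $G$, and your lower-bound argument collapses: if $H_i$ is not induced in $G$, an HDS $F$ with $F\cap E(H_i)=\emptyset$ does not force a forbidden induced subgraph in $G-F$, because $(G-F)[V(H_i)]$ may contain extra edges. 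A concrete counterexample with $\cH=\{\mbox{claw}\}$ and $k=1$: take $V(G)=\{c,u,v,w,z,t\}$ and $E(G)=\{cu,cv,cw,zc,zu,zt\}$. Deleting the single edge $cv$ makes $G$ claw-free, so $(G,1)$ is a yes-instance. Yet your procedure may first pick the induced claw $\{c;u,v,w\}$ and delete $cu,cv,cw$; in the resulting working graph $\{z;c,u,t\}$ becomes an induced claw (it is not induced in $G$ because $cu\in E(G)$), edge-disjoint from the first, so the procedure finds two copies and wrongly declares a no-instance.

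The paper avoids this by packing copies that are induced \emph{in $G$ itself}: it takes an inclusion-wise maximal collection of pairwise edge-disjoint induced subgraphs from $\cH$ \emph{of $G$}, found greedily by repeatedly searching $G$ (not a modified graph) for an induced copy whose edges avoid all previously used edges. For such copies the lower bound is sound: if $H_i$ is induced in $G$ and $F$ misses $E(H_i)$, then $(G-F)[V(H_i)]=H_i$, so any HDS hits every copy and has size at least the number of copies. Your final ``hitting'' argument (that any induced $H$ in $G$ with no edge in $E(X)$ would survive into the final working graph) is correct and carries over unchanged, since that direction only uses that edge deletion cannot destroy the non-edges of $H$. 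So the fix is small --- mark used edges instead of deleting them, and always test inducedness in $G$ --- but as written the algorithm is incorrect.
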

\onlyfull{
\begin{proof}
Let $\{H_1,H_2,\dots,H_m\}$ be an inclusion-wise maximal set of edge-disjoint induced subgraphs from $\cH$ in $G$ (such a set can be greedily found in polynomial time).
Since any HDS must contain an edge from each of the subgraphs, it must have size at least $m$.
If $m>k$, then we can conclude that $(G,k)$ is a no-instance.
Otherwise, let $X=\bigcup_i V(H_i)$ be the union of their vertex sets;
clearly $|X|\leq \sum_{i=1}^m |V(H_i)| \leq c\cdot k$.
By the maximality of our choice, every induced $H\in\cH$ in $G$ has an edge in common with one of $H_1,\dots,H_m$ and thus an edge in $E(X)$.
Hence $X$ satisfies the claim.
\end{proof}
}

\noindent We finish this section by showing that it suffices to find a set $S$ of vertices of size polynomial in $k$ such that every minimal solution (every minimal HDS of size at most~$k$) is contained in $E(S)$.
Given such a set, we can compress the \textsc{$\cH$-free Edge Deletion} instance in polynomial time to an instance of the annotated version with $\Oh(|S|^{c-1})$ vertices, where $c = \max\{|V(H)| : H\in\cH\}$ (we assume $c>1$, as otherwise the problem is trivial).
Since the annotated version is in NP (as an unparameterized problem), this compression, together with an algorithm to obtain $S$, concludes the proof of Theorem~\ref{thm:main}.
Note that we do not require inclusion-wise minimal HDSs of size larger than $k$ to be contained in $E(S)$.

\begin{lemma}\label{lem:compression-to-annotated}\onlylncs{\appsign{}}
There is an algorithm that, given an instance $(G,k)$ of \textsc{$\cH$-free Edge Deletion} and a set $S \subseteq V(G)$ such that every inclusion-wise minimal HDS of size at most $k$ is contained in $E(S)$,
outputs in polynomial time a set $U$, where $S\subseteq U \subseteq V(G)$ and $|U|\leq \Oh(|S|^{c-1})$, such that $(G,k)$ is a yes-instance if and only if 
$(G[U],S,k)$ is a yes-instance of \textsc{Annotated $\cH$-free Edge Deletion}.
\end{lemma}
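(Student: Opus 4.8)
The plan is to build $U$ by adding to $S$, for every small subset $Z$ of $S$, a bounded number of ``witness'' vertices outside $S$ that certify the possible ways in which a forbidden induced subgraph meeting $S$ exactly in $Z$ could arise after a deletion confined to $E(S)$. Concretely, for every $Z\subseteq S$ with $|Z|\le c-1$, every $H\in\cH$, and every subset $E'\subseteq E(Z)$, I would check, by brute force over the at most $\Oh(|V(G)|^{c})$ candidate vertex tuples, whether there is a set $W\subseteq V(G)$ with $W\cap S=Z$, $|W|=|V(H)|$, and $G[W]-E'$ isomorphic to $H$; if so, fix one such $W$ and add $W\setminus Z$ to $U$. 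Since the number of choices of $Z$ is $\Oh(|S|^{c-1})$, the number of pairs $(H,E')$ is bounded by a constant depending only on $\cH$ (recall $c$ is a constant), and each step adds at most $c$ vertices, we obtain $|U|\le |S|+\Oh(|S|^{c-1})=\Oh(|S|^{c-1})$, and the whole construction runs in polynomial time.

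For correctness, observe first that $S\subseteq U$ gives $E(S)=E_{G[U]}(S)$, so any HDS $F\subseteq E(S)$ of $(G,k)$ of size at most $k$ is a feasible deletion set for the annotated instance, and $G[U]-F$, being an induced subgraph of the $\cH$-free graph $G-F$, is $\cH$-free. Combined with the hypothesis that every inclusion-wise minimal HDS of size at most $k$ lies in $E(S)$, this yields the forward implication.

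For the converse, suppose $F\subseteq E(S)$ with $|F|\le k$ is such that $G[U]-F$ is $\cH$-free, and assume towards a contradiction that $G-F$ contains an induced $H\in\cH$ on a vertex set $W$. If $W\subseteq S\subseteq U$ we get an immediate contradiction, so assume $Z:=W\cap S$ satisfies $|Z|\le |W|-1\le c-1$. Since $F\subseteq E(S)$, the only edges of $G[W]$ that $F$ can delete lie in $E(Z)$; writing $E':=F\cap E(Z)$, we have $(G-F)[W]=G[W]-E'\cong H$. Hence, when the construction processed the triple $(Z,H,E')$, it found a witness and fixed some $W_0$ with $W_0\cap S=Z$, $|W_0|=|V(H)|$ and $G[W_0]-E'\cong H$, adding $W_0\setminus Z$ to $U$; in particular $W_0\subseteq U$. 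Moreover $F\cap E(W_0)=F\cap E(W_0\cap S)=F\cap E(Z)=E'$, so $(G-F)[W_0]=G[W_0]-E'\cong H$, exhibiting an induced $H$ in $G[U]-F$ --- a contradiction. Therefore $G-F$ is $\cH$-free, and $F$ is an HDS of $(G,k)$ of size at most $k$, so $(G,k)$ is a yes-instance.

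The only mildly delicate point is that $F$ is not available at the time $U$ is constructed, which is why the construction ranges over all subsets $E'\subseteq E(Z)$ of ``already-deleted'' edges inside $Z$; this keeps the witness bookkeeping finite, and the identity $F\cap E(W_0)=E'$ --- which holds precisely because $W_0$ meets $S$ exactly in $Z$ --- is what guarantees that the transplanted witness $W_0$ still induces a forbidden subgraph after deleting $F$. Everything else is routine counting and a bounded brute-force search.
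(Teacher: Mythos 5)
Your proposal is correct and follows essentially the same route as the paper: the witness set $U$ is built by iterating over all subsets $Z\subseteq S$ of size at most $c-1$ and all possible deleted edge sets $E'\subseteq E(Z)$, storing one representative forbidden induced subgraph meeting $S$ exactly in $Z$, and the correctness argument hinges on the same identity $F\cap E(W_0)=F\cap E(Z)=E'$ to transplant the obstruction into $G[U]-F$. The only cosmetic difference is that you keep one witness per triple $(Z,H,E')$ whereas the paper keeps one per pair $(Z,E')$ (allowing the stored witness to be a different member of $\cH$), which changes nothing in the bound or the argument.
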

\onlyfull{
\begin{proof}
Let $(G,k)$ and $S$ be as in the statement,
we construct $U$ in the following way.
Add all vertices of $S$ to $U$.
For every set $M$ of at most $c-1$ vertices (possibly empty) in $S$ and every subset $F\subseteq E(M)$, add to $U$ all vertices of up to one subgraph $H\in \cH$ induced in $G-F$ such that $V(H)\cap S = M$ (if there is more than one, choose any).
Clearly $S\subseteq U \subseteq V(G)$, $|U|\leq |S|+c^2 \cdot 2^{\binom{c-1}{2}} \cdot |S|^{c-1}$, and $U$ can be constructed in polynomial time.
We claim that $(G,k)$ is a yes-instance if and only if 
$(G[U],S,k)$ is a yes-instance of \textsc{Annotated $\cH$-free Edge Deletion}.

Suppose first that $(G,k)$ is a yes-instance, that is, there is a set of edges $F\subseteq E(G)$ of size at most $k$ such that $G-F$ is $\cH$-free; without loss of generality suppose that $F$ is inclusion-wise minimal. By the claim's assumption, $F$ is contained in $E(S)$.
Since $G[U]$ is an induced subgraph of $G$, $G[U]-F$ is also $\cH$-free.
Thus $(G[U], S, k)$ is a yes-instance of \textsc{Annotated $\cH$-free Edge Deletion}.

Suppose then that $(G[U], S, k)$ is a yes-instance of \textsc{Annotated $\cH$-free Edge Deletion}, that is, there is a set of edges $F\subseteq E(S)$ of size at most $k$ such that $G[U]-F$ is $\cH$-free.
We claim $F$ is an HDS of $G$, too.
Suppose that, to the contrary, $G-F$ has an induced subgraph $H\in\cH$. If $V(H)\subseteq S$, then because $S\subseteq U$ we would have that $H$ is an induced subgraph of $G[U]-F$, a contradiction.
Let then $M=V(H)\cap S$, and note that $|M|\leq c-1$.
Since $F\subseteq E(S)$, observe that the non-edges of $H$ deleted by $F$ are all in $E(M)$ and hence $H$ is an induced subgraph in $G-(F\cap E(M))$ as well.
By the construction of $U$, there is an induced subgraph $H'\in\cH$ in $G-(F\cap E(M))$ such that $V(H')\cap S = M$ and all of the vertices of $H'$ were added to $U$.
Since $F\setminus E(M)$ deletes only edges in $E(S)\setminus E(M)$, $H'$ is an induced subgraph in $G-F$ as well.
But all the vertices of $H'$ are in $U$, so $H'$ is an induced subgraph in $G[U]-F$, a contradiction.
This shows $F$ must be an HDS of $G$ and hence $(G,k)$ is a yes-instance of \textsc{$\cH$-free Edge Deletion}.
\end{proof}
}

\section{Kernel}\label{sec:kernel}
In this section, we prove Theorem~\ref{thm:main}. As discussed below the statement of Theorem~\ref{thm:main}, this yields the proof of Theorem~\ref{thm:realmain} and thus the kernel. 
Throughout, let $(G,k)$ to be an instance of \cdedgedeletionfull.

\onlyfull{\medskip\noindent}
We first define a simple decomposition of \clawdiamond-free graphs, which follows from the fact that they are precisely the line graphs of triangle-free graphs, as shown by Metelsky and Tyshkevich~\cite{MetelskyT03}.
For a \clawdiamond-free graph~$G'$, let $\cB(G')$ be the family of vertex sets, called \emph{bags},
\iflncs{containing every maximal clique of $G'$ and a singleton $\{v\}$ for each simplicial vertex $v$ of $G'$
	(i.e., each vertex whose neighborhood is a clique).}{containing:
\begin{itemize}
	\item every maximal clique of $G'$, and
	\item a singleton $\{v\}$ for each simplicial vertex $v$ of $G'$\\
	(i.e., each vertex whose neighborhood is a clique).
\end{itemize}}

\onlylncs{From the definitions in Section~3 and Theorem 5.2 of~\cite{MetelskyT03}, it follows that \clawdiamond-free graphs are precisely the \emph{linear r-minoes} for $r=2$, that is, graphs $G'$ such that every vertex belongs to at most two maximal cliques and every edge belongs to exactly one maximal clique. As a direct consequence, we obtain the following.}

\begin{lemma}\label{lem:bag-decomposition}\onlylncs{\appsign{}}
Let $G'$ be a \clawdiamond-free graph.
Consider the family $\cB(G')$ of bags of $G'$.
Then:
\begin{enumerate}[label=(\alph*)]
	\item\label{p:ver} every non-isolated vertex of $G'$ is in exactly two bags;
	\item\label{p:edg} for every edge $uv\in E(G')$ there is exactly one bag containing both $u$ and $v$;
	\item\label{p:inters} every two bags have at most one vertex in common;
	\item\label{p:nonadj} if two bags $A,B$ have a common vertex $v$, then there is no edge between $A-v$ and $B-v$.
\end{enumerate}
Moreover, $|\cB(G')|\leq |V(G')|+|E(G')|$ and the family $\cB(G')$ can be computed in polynomial time.
\end{lemma}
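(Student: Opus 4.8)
The plan is to use the well-known fact that \clawdiamond-free graphs are exactly the line graphs of triangle-free graphs. So first I would fix an isomorphism $G' \cong L(H)$ for some triangle-free graph $H$; here we may assume $H$ has no isolated vertices, and (after identifying twin leaves or removing isolated edges appropriately) that $H$ is connected or at least has a clean local structure. Under this identification, vertices of $G'$ correspond to edges of $H$, and two vertices of $G'$ are adjacent iff the corresponding edges of $H$ share an endpoint. The key translation is: maximal cliques of $L(H)$ correspond exactly to stars $E_H(v) := \{e \in E(H) : v \in e\}$ of vertices $v$ of $H$ of degree at least $2$ (here triangle-freeness of $H$ is precisely what rules out the ``triangle'' maximal cliques $K_3$ that occur in general line graphs), while simplicial vertices of $L(H)$ correspond to edges $uv$ of $H$ where at least one of $u,v$ has degree $1$. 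Hence the family $\cB(G')$ is in bijection with $\{E_H(v) : v \in V(H),\ \deg_H(v) \geq 1\}$, possibly with some degree-$1$ stars $E_H(v) = \{e\}$ represented as the singleton bags $\{v\}$ — so it is natural to index bags by the vertices of $H$.

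With this dictionary in hand, properties \ref{p:ver}--\ref{p:nonadj} become elementary statements about $H$. For \ref{p:ver}: a non-isolated vertex of $G'$ is an edge $e = uv$ of $H$, and it lies in exactly the two bags $E_H(u)$ and $E_H(v)$ (these are distinct since $H$ is simple and triangle-free, and each is indeed a bag — either a maximal clique if the endpoint has degree $\geq 2$, or a singleton bag if it has degree $1$; one must just double-check that the definition of $\cB(G')$ via ``maximal cliques plus singletons of simplicial vertices'' yields precisely this, handling the corner case where both endpoints of $e$ have degree $1$, i.e. $e$ is an isolated edge of $H$ and hence $e$ an isolated vertex of $G'$ — excluded by hypothesis). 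For \ref{p:edg}: an edge of $G'$ is a pair of $H$-edges $e, f$ sharing a vertex $w$; since $H$ is triangle-free they share \emph{only} $w$, so the unique bag containing both is $E_H(w)$. For \ref{p:inters}: two bags $E_H(v)$, $E_H(w)$ with $v \neq w$ share the vertex of $G'$ corresponding to an edge $vw \in E(H)$ if such an edge exists, and nothing else — there can be at most one such edge because $H$ is simple. For \ref{p:nonadj}: if bags $A = E_H(v)$ and $B = E_H(w)$ share a common vertex, that common vertex is the edge $vw \in E(H)$; an edge of $G'$ between $A - vw$ and $B - vw$ would be a pair consisting of an $H$-edge $e \ni v$, $e \neq vw$, and an $H$-edge $f \ni w$, $f \neq vw$, with $e$ and $f$ sharing an endpoint — but that shared endpoint together with $v$ and $w$ would form a triangle in $H$ (or force $e = f$ incident to both $v$ and $w$, i.e. a multi-edge), contradicting triangle-freeness. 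Each of these is a one- or two-line argument once the correspondence is set up.

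For the ``moreover'' part: the bound $|\cB(G')| \leq |V(G')| + |E(G')|$ follows directly from the definition of $\cB(G')$ — there is one bag per maximal clique and one per simplicial vertex, the number of simplicial vertices is at most $|V(G')|$, and the number of maximal cliques of a \clawdiamond-free (indeed of any line graph of a triangle-free) graph is at most $|E(G')|$ (each maximal clique, being a star $E_H(v)$ with $\deg_H v \geq 2$, can be charged injectively to, say, the lexicographically smallest edge of $G'$ inside it, or one can just invoke that in a linear domino every edge lies in exactly one maximal clique so maximal cliques of size $\geq 2$ inject into edges). Computability in polynomial time follows because a line-graph representation $H$ of $G'$ can be found in polynomial time (Roussopoulos / Lehot), or more directly because one can enumerate all maximal cliques of $G'$ in polynomial time here: in a linear domino every edge is in exactly one maximal clique, so greedily growing each edge to its unique maximal clique and also listing the simplicial vertices gives $\cB(G')$ in polynomial time.

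The main obstacle, such as it is, is bookkeeping around degenerate cases rather than any real difficulty: making sure the definition of $\cB(G')$ (maximal cliques together with singleton bags for simplicial vertices) matches the ``star'' description on the nose — in particular an isolated edge of $H$ versus a degree-$1$ endpoint of a larger star, and ensuring every non-isolated vertex of $G'$ really sits in \emph{exactly} two bags and not, say, one bag counted twice. If one prefers to avoid even invoking the line-graph characterization, the whole lemma can alternatively be obtained by quoting the linear-domino characterization of Metelsky and Tyshkevich~\cite{MetelskyT03} directly: \clawdiamond-free graphs are exactly the graphs in which every vertex lies in at most two maximal cliques and every edge lies in exactly one maximal clique, which immediately gives \ref{p:ver}--\ref{p:edg}, and \ref{p:inters}--\ref{p:nonadj} then follow by a short argument ruling out a claw or diamond. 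Either route is short; I would present the line-graph version as it is the most transparent.
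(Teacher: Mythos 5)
Your proof is correct, and it reaches the lemma by a genuinely different route than the paper. The paper never constructs the triangle-free preimage $H$: it quotes the Metelsky--Tyshkevich characterization in its \emph{linear domino} form (every vertex of $G'$ lies in at most two maximal cliques, every edge in exactly one) and argues entirely inside $G'$. From that, (b) is immediate; (a) follows by a short case split on whether $N(v)$ is a clique (if so, $N[v]$ is the unique maximal clique and $\{v\}$ is the second bag; if not, two non-adjacent neighbours already force two distinct maximal cliques, and the 2-mino property caps it at two); (c) follows from (b) since two bags sharing two vertices would share an edge; and (d) is proved by a direct forbidden-subgraph argument: an edge $ab$ between $A-v$ and $B-v$, together with some $a'\in A$ non-adjacent to $b$ (which exists by maximality of $A$), yields an induced diamond on $a,a',b,v$. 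Your route instead pushes everything through the line-graph dictionary (bags $\leftrightarrow$ stars $E_H(w)$, vertices of $G'$ $\leftrightarrow$ edges of $H$), after which each item reduces to a one-line statement about a simple triangle-free graph; the price is the bookkeeping you correctly flag around degree-one endpoints, isolated edges, and the identification of maximal cliques with stars (which itself uses triangle-freeness). Both arguments are sound and both lean on the same cited theorem of~\cite{MetelskyT03}, just on different halves of it; your closing paragraph in fact describes the paper's actual proof as the ``alternative''. The ``moreover'' part (singleton bags charged to vertices, edge-containing bags charged to edges via (b), and greedy computation of the unique maximal clique through each edge) is essentially identical in both versions.
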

\onlyfull{
\begin{proof}From the definitions of Section~3 and Theorem 5.2 of~\cite{MetelskyT03} it follows that \clawdiamond-free graphs are precisely the \emph{linear r-minoes} for $r=2$, that is, graphs $G'$ such that every vertex belongs to at most two maximal cliques and every edge belongs to exactly one maximal clique.
In particular every edge of $G'$ is contained in exactly one bag, which proves \ref{p:edg}.

Let $v$ be any non-isolated vertex of $G'$.
If the neighborhood of $v$ is a clique in $G'$, then $N[v]$ is the only maximal clique containing $v$ -- hence $v$ is in exactly two bags: the maximal clique and the singleton $\{v\}$, by definition.
If the neighborhood of $v$ is not a clique, then $v$ has neighbors $a,b$ that are not adjacent -- hence $v$ is contained in at least two bags: the maximal clique containing $va$ and the (different) maximal clique containing $vb$. As $G'$ is a linear $2$-mino, $v$ is not in any other maximal clique. Since $v$ is not simplicial, by the definition of $\cB(G')$ we conclude that also in this case $v$ is in exactly two bags. This concludes the proof of \ref{p:ver}.

Since all bags induce cliques in $G'$, two different bags cannot have more than one vertex in common, as this would imply that an edge joining them is contained in both of them. This proves \ref{p:inters}.

Finally, if two bags $A,B$ had a common vertex $v$ and there was an edge between $a\in A-v$ and $b\in B-v$, then since $A$ is a maximal clique not containing $b$, there would be a vertex $a'\in A$ non-adjacent to $b$.
But then the vertices $a,a',b,v$ would induce a diamond subgraph in $G'$, a contradiction. This proves \ref{p:nonadj}.

To see that $|\cB(G')|\leq |V(G')|+|E(G')|$, note that every bag of $\cB(G')$ is either a singleton bag or it contains an edge. The number of singleton bags is bounded by $|V(G')|$, while the number of bags containing an edge is bounded by $|E(G')|$ due to \ref{p:edg}. In order to compute $\cB(G')$, it suffices to construct first singleton bags for all simplicial and isolated vertices, and then for every edge of $G$ add the unique maximal clique containing it, constructed in a greedy manner.
\maybeqed\end{proof}
}

Now run the algorithm of Lemma~\ref{lem:greedy-modulator} on instance $(G,k)$. In case the algorithm concludes it is a no-instance, we return a trivial no-instance of \textsc{Annotated \clawdiamond-free Edge Deletion} as the output of the compression. Otherwise, let $X$ to be the obtained modulator;
that is, $X$ is a subset of $V(G)$ of size at most $4k$ such that every induced claw and diamond in $G$ has an edge in $E(X)$.
In particular, $G-X$ is a $\clawdiamond$-free graph, so using Lemma~\ref{lem:bag-decomposition} we compute in polynomial time the family of bags $\cB(G-X)$.
When referring to bags, we will refer to $\cB(G-X)$ only, and implicitly use Lemma~\ref{lem:bag-decomposition} to identify, for each non-isolated vertex $v$ in $G-X$, \emph{the two bags containing $v$}, and for each edge $e$ of $G-X$, \emph{the bag containing $e$}.

Knowing the structure of $G-X$, we proceed by describing the adjacencies between $X$ and $G-X$. The following definition will play a central role.
For $x\in X$, we call a bag $B$ of $G-X$ \emph{attached} to $x$ if:
\begin{itemize}
	\item $B$ is fully adjacent to $x$, and
	\item if $B=\{v\}$ for some vertex $v$ which is not isolated in $G-X$, 
		then the other bag containing $v$ is not fully adjacent to $x$.
\end{itemize}
We call a bag \emph{attached} if it is attached to some $x\in X$.
The next two propositions show that adjacencies between $X$ and $G-X$ are fully determined by the attachment relation, see Figure~\ref{fig:attached}.

\onlyfull{\iflncs{\input{figureAttached-wg}}{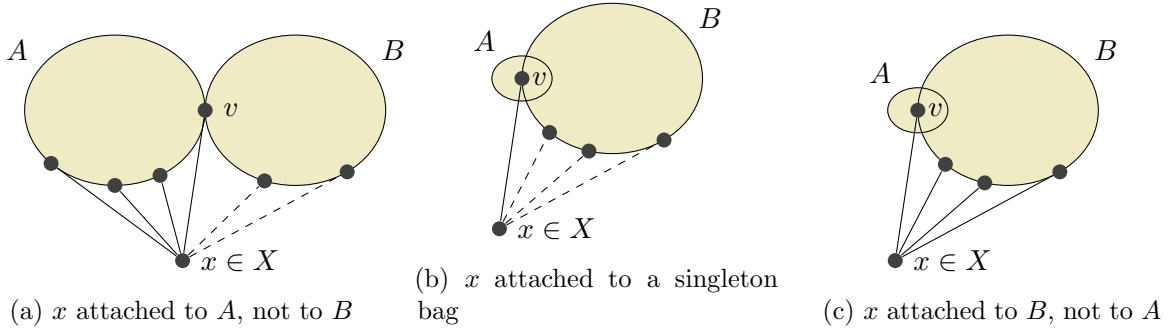
\begin{figure}[t]
\centering
\begin{subfigure}[b]{0.3\textwidth}
\centering
\begin{tikzpicture}
	\draw[bag] (-1.2,0) ellipse (1.2 and 1);
	\draw[bag] ( 1.2,0) ellipse (1.2 and 1);
	\node at (-2.5,0.8) {$A$};
	\node at (2.5,0.8) {$B$};
	\node[p,label=right:$x\in X$] (px) at (-0.3,-2) {};
	\node[p,label=right:$v$] (pv) at ($(1.2,0)+(180:1.2 and 1)$) {};
	\node[p] (p1) at ($(-1.2,0)+(-135:1.2 and 1)$) {};
	\node[p] (p2) at ($(-1.2,0)+(-90:1.2 and 1)$) {};
	\node[p] (p3) at ($(-1.2,0)+(-60:1.2 and 1)$) {};
	\node[p] (q1) at ($(1.2,0)+(-55:1.2 and 1)$) {};
	\node[p] (q2) at ($(1.2,0)+(-110:1.2 and 1)$) {};
	\draw (px) to (pv);
	\draw (px) to (p1) (px) to (p2) (px) to (p3);
	\draw[non] (px) to (q1) (px) to (q2);
\end{tikzpicture}
\caption{$x$ attached to $A$, not to $B$}
\end{subfigure}
\hspace{1em}
\begin{subfigure}[b]{0.29\textwidth}
\centering
\begin{tikzpicture}
	\draw[draw=none,bag] ( 1.2,0) ellipse (1.2 and 1);
	\draw[bag] (0,0) ellipse (0.4 and 0.3);
	\draw ( 1.2,0) ellipse (1.2 and 1);
	\node at (-0.5,0.5) {$A$};
	\node at (2.5,0.8) {$B$};
	\node[p,label=right:$x\in X$] (px) at (-0.3,-2) {};
	\node[p,label={[label distance=-3]right:$v$}] (pv) at ($(1.2,0)+(180:1.2 and 1)$) {};
	\node[p] (q1) at ($(1.2,0)+(-55:1.2 and 1)$) {};
	\node[p] (q2) at ($(1.2,0)+(-105:1.2 and 1)$) {};
	\node[p] (q3) at ($(1.2,0)+(-134:1.2 and 1)$) {};
	\draw (px) to (pv);
	\draw[non] (px) to (q1) (px) to (q2) (px) to (q3);
\end{tikzpicture}
\caption{$x$ attached to a singleton bag}
\end{subfigure}
\hspace{1em}
\begin{subfigure}[b]{0.27\textwidth}
\centering
\begin{tikzpicture}
	\draw[draw=none,bag] ( 1.2,0) ellipse (1.2 and 1);
	\draw[bag] (0,0) ellipse (0.4 and 0.3);
	\draw ( 1.2,0) ellipse (1.2 and 1);
	\node at (-0.5,0.5) {$A$};
	\node at (2.5,0.8) {$B$};
	\node[p,label=right:$x\in X$] (px) at (-0.3,-2) {};
	\node[p,label={[label distance=-3]right:$v$}] (pv) at ($(1.2,0)+(180:1.2 and 1)$) {};
	\node[p] (q1) at ($(1.2,0)+(-55:1.2 and 1)$) {};
	\node[p] (q2) at ($(1.2,0)+(-105:1.2 and 1)$) {};
	\node[p] (q3) at ($(1.2,0)+(-134:1.2 and 1)$) {};
	\draw (px) to (pv);
	\draw (px) to (q1) (px) to (q2) (px) to (q3);
\end{tikzpicture}
\caption{$x$ attached to $B$, not to $A$}
\end{subfigure}
\caption{Possible ways in which a vertex in $X$ can neighbor a vertex $v$ in $G-X$ and the two bags containing it.}
\label{fig:attached}
\end{figure}
}}

\begin{lemma}\label{lem:two-in-bag-implies-all}
Let $B\in\cB(G-X)$ be a bag such that some vertex $x\in X$ has at least two neighbors in $B$.
Then $B$ is attached to $x$. 
\end{lemma}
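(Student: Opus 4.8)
The statement to prove is Lemma~\ref{lem:two-in-bag-implies-all}: if $x\in X$ has at least two neighbors $u,v$ in a bag $B$ of $G-X$, then $B$ is attached to $x$, meaning (i) $x$ is fully adjacent to $B$, and (ii) if $B=\{w\}$ is a singleton with $w$ non-isolated, then the other bag containing $w$ is not fully adjacent to $x$. Since $B$ contains two vertices, condition (ii) is vacuous, so the whole content is to show $x$ is fully adjacent to $B$. The plan is to argue by contradiction: suppose $x$ has neighbors $u,v\in B$ but is non-adjacent to some $w\in B$.

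First I would handle the generic case. Since $u,v,w$ all lie in the bag $B$, which induces a clique in $G-X$, the vertices $u,v,w$ are pairwise adjacent, forming a triangle. Now consider the four vertices $\{u,v,w,x\}$: we have edges $uv,uw,vw$ (the triangle in $B$), edges $xu,xv$ (since $u,v$ are neighbors of $x$), and non-edge $xw$. This is exactly a diamond (on vertex set $\{u,v,w,x\}$ with the missing edge $xw$). But a diamond in $G$ must, by the modulator property of $X$, have an edge with both endpoints in $X$; here the only vertex in $X$ among $u,v,w,x$ is $x$ itself, so no edge of this diamond lies in $E(X)$ — contradiction. This already shows that $x$ must be adjacent to every vertex of $B$, i.e., $B$ is fully adjacent to $x$.

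The only subtlety is whether $B$ might be so small that no third vertex $w$ exists — but then $B=\{u,v\}$, $x$ is adjacent to both $u$ and $v$, so $x$ is trivially fully adjacent to $B$, and condition (ii) is again vacuous since $|B|\ge 2$; hence $B$ is attached to $x$. So the main (and only real) step is the diamond-detection argument above; there is essentially no obstacle, as the modulator guarantee from Lemma~\ref{lem:greedy-modulator} does all the work once the diamond is exhibited. I would present it crisply: fix $u,v\in N(x)\cap B$, assume for contradiction some $w\in B$ with $xw\notin E(G)$, observe $\{u,v,w,x\}$ induces a diamond in $G$ with no edge inside $X$, contradicting the defining property of $X$; conclude $x$ is fully adjacent to $B$, and since $|B|\ge 2$ the singleton clause is void, so $B$ is attached to $x$.
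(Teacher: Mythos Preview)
Your proposal is correct and follows essentially the same argument as the paper: assume $x$ has two neighbors $u,v$ in $B$ but misses some $w\in B$, observe that $\{x,u,v,w\}$ induces a diamond with no edge in $E(X)$, contradicting the modulator property, and conclude that $B$ (having $|B|\ge 2$) is attached to $x$. The paper presents this in slightly terser form but the content is identical.
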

\begin{proof}
Suppose $x$ is adjacent to $u,v\in B$. 
If $x$ was non-adjacent to some vertex $w\in B$, 
then since $B$ induces a clique, the vertices $x,u,v,w$ would induce a diamond subgraph in $G$ (Figure~\ref{fig:one_bag_attached_proof} (a)).
However, no edge of this induced diamond would be in $E(X)$, contradicting the properties of $X$ as a modulator.
Therefore, all vertices of $B$ are adjacent to $x$ (and $|B|>1$), so $B$ is attached to $x$.
\maybeqed\end{proof}

\begin{lemma}\label{lem:one-bag-attached}
Let $v$ be a vertex in $G-X$ adjacent to a vertex $x\in X$.
Then there is exactly one bag in $\cB(G-X)$ that contains $v$ and is attached to $x$.
\end{lemma}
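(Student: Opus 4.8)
The plan is to split on whether $v$ is isolated in $G-X$. If it is, then $v$ lies in the single bag $\{v\}$; this bag is fully adjacent to $x$ since $vx\in E(G)$, and the second bullet of the definition of attachment is vacuous because $v$ is isolated, so $\{v\}$ is the unique bag containing $v$ and it is attached to $x$. From now on I assume $v$ is non-isolated, so by Lemma~\ref{lem:bag-decomposition}\ref{p:ver} it lies in exactly two distinct bags, which I call $A$ and $B$; the goal is to show that exactly one of them is attached to $x$.

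\emph{Existence.} First I would show that at least one of $A,B$ is fully adjacent to $x$. If not, then neither $A$ nor $B$ equals the singleton $\{v\}$ (which is fully adjacent to $x$ as $vx\in E(G)$), so I can pick $a\in A\setminus\{v\}$ and $b\in B\setminus\{v\}$ with $ax,bx\notin E(G)$; by Lemma~\ref{lem:bag-decomposition}\ref{p:nonadj} we have $ab\notin E(G)$, while $va,vb\in E(G)$, so $(v,a,b,x)$ is an induced claw with center $v$ whose only vertex in $X$ is $x$, hence it has no edge in $E(X)$ — contradicting the choice of $X$ in Lemma~\ref{lem:greedy-modulator}. Now take a bag, say $A$, that is fully adjacent to $x$. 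If $A$ is not attached to $x$, then since it is fully adjacent the obstruction can only be the second bullet, i.e.\ $A=\{v\}$ (recall $v\in A$) with $v$ non-isolated, and the other bag containing $v$, namely $B$, is fully adjacent to $x$. But $A=\{v\}$ with $v$ non-isolated means $v$ is simplicial in $G-X$, so $B=N_{G-X}[v]$ is not a singleton; and a non-singleton bag that is fully adjacent to $x$ satisfies the definition of attachment (its second bullet is vacuous), so $B$ is attached to $x$. Either way some bag containing $v$ is attached to $x$.

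\emph{Uniqueness.} Suppose both $A$ and $B$ are attached to $x$; then both are fully adjacent to $x$. If one of them equals $\{v\}$, the second bullet of the definition of attachment applied to it forces the other bag containing $v$ to be \emph{not} fully adjacent to $x$, a contradiction; so neither is a singleton. Pick $a\in A\setminus\{v\}$ and $b\in B\setminus\{v\}$. By Lemma~\ref{lem:bag-decomposition}\ref{p:nonadj}, $ab\notin E(G)$, while all other pairs among $v,a,b,x$ are adjacent ($va,vb$ inside their bags, and $xv,xa,xb$ by full adjacency), so $\{v,a,b,x\}$ induces a diamond in $G$ with no edge in $E(X)$ — again contradicting the choice of $X$. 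Combining the two parts, exactly one bag containing $v$ is attached to $x$. I expect the only delicate point to be the bookkeeping around singleton bags $\{v\}$, where attachment carries the extra clause about the second bag of $v$; the rest is a direct use of claw/diamond-freeness enforced by the modulator together with Lemma~\ref{lem:bag-decomposition}.
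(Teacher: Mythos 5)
Your proof is correct and follows essentially the same route as the paper's: the same claw (center $v$, leaves $a,b,x$) excludes the case where $x$ has a non-neighbor in both bags, and the same diamond on $\{v,a,b,x\}$ excludes the case where $x$ has a neighbor in both $A-v$ and $B-v$, with the singleton-bag subtlety absorbed into the second clause of the attachment definition rather than treated as a separate case. The reorganization into existence plus uniqueness, and the fact that you argue full adjacency directly instead of invoking Lemma~\ref{lem:two-in-bag-implies-all}, are only cosmetic differences.
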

\begin{proof}
If $v$ is an isolated vertex in $G-X$, then $\{v\}$ is the only bag containing $v$ and is by definition attached to $x$.

Otherwise, let $A,B$ be the two bags containing $v$.
If one of the bags is a singleton, say $A=\{v\}$, then $B$, being unequal to $A$, contains some other vertices.
If at least one vertex of $B\setminus\{v\}$ is adjacent to $x$, then it follows from Lemma~\ref{lem:two-in-bag-implies-all} that $B$ is attached to $x$ and $A$ is not. Otherwise, i.e.~if no vertices of $B \setminus\{v\}$ are adjacent to $x$, then by definition $A$ is attached to $x$ and $B$ is not.

It remains to consider the case when both $A-v$ and $B-v$ are not empty; see Figure~\ref{fig:one_bag_attached_proof}, (b) and (c). Suppose that $x$ is adjacent to a vertex $a\in A-v$ and a vertex $b\in B-v$.
Then $a,b$ are non-adjacent by Lemma~\ref{lem:bag-decomposition}\ref{p:nonadj},
so vertices $v,a,b,x$ induce a diamond subgraph in $G$.
However, no edge of this diamond is in $E(X)$, a contradiction.

Suppose $x$ is non-adjacent to a vertex $a\in A-v$ and a vertex $b\in B-v$.
Then $a,b$ are non-adjacent by Lemma~\ref{lem:bag-decomposition}\ref{p:nonadj},
so vertices $x,a,b,v$ induce a claw subgraph in $G$.
However, no edge of this claw is in $E(X)$, again a contradiction.

Therefore, if $x$ is adjacent to a vertex in $A-v$, then 
$A$ is attached to $x$ (by Lemma~\ref{lem:two-in-bag-implies-all}) and $x$ must be non-adjacent to all of $B-v$, implying $B$ is not attached to $x$.
Otherwise, if $x$ is non-adjacent to all vertices in $A-v$, then $x$ must be adjacent to every vertex of $B-v$. This means $B$ is attached to $x$ and $A$ is not.
\maybeqed\end{proof}

\onlyfull{\iflncs{\input{figureOneBagAttachedProof-wg}}{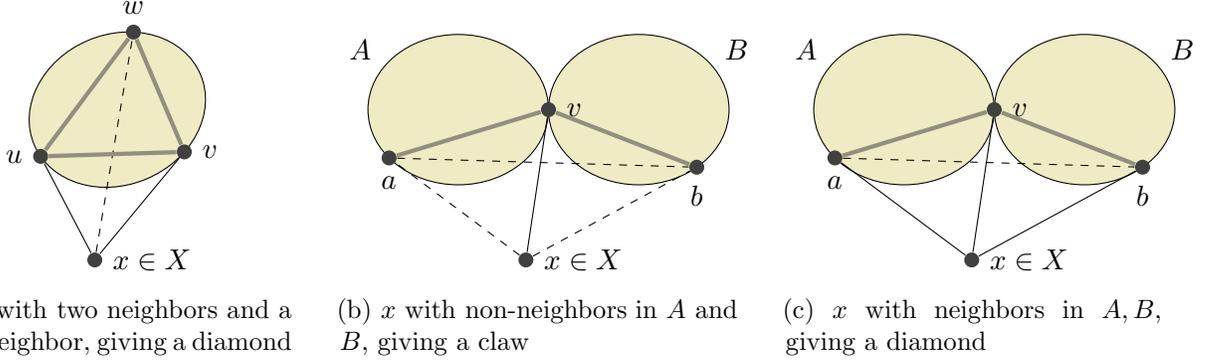
\begin{figure}[t]
\centering
\begin{subfigure}[b]{0.29\textwidth}
\centering
\begin{tikzpicture}
	\begin{scope}[rotate=23]
		\draw[bag] (0,0) ellipse (1.2 and 1);
		\node[p,label=left:$u$] (pu) at ($(-170:1.2 and 1)$) {};
		\node[p,label=right:$v$] (pv) at ($(-60:1.2 and 1)$) {};
		\node[p,label=above:$w$] (pw) at ($(60:1.2 and 1)$) {};
	\end{scope}
	\node[p,label=right:$x\in X$] (px) at (-0.3,-2) {};
	\draw    (px)--(pu) (px)--(pv);
	\draw[non] (px)--(pw);
	\draw[Ebag] (pu)--(pv) (pv)--(pw) (pw)--(pu);
\end{tikzpicture}
\caption{$x$ with two neighbors and a non-neighbor, giving a diamond}
\end{subfigure}
\hspace{1em}
\begin{subfigure}[b]{0.32\textwidth}
\centering
\begin{tikzpicture}
	\draw[bag] (-1.2,0) ellipse (1.2 and 1);
	\draw[bag] ( 1.2,0) ellipse (1.2 and 1);
	\node at (-2.5,0.8) {$A$};
	\node at (2.5,0.8) {$B$};
	\node[p,label=right:$x\in X$] (px) at (-0.3,-2) {};
	\node[p,label=right:$v$] (pv) at ($(1.2,0)+(180:1.2 and 1)$) {};
	\node[p,label=below:$a$] (pa) at ($(-1.2,0)+(-140:1.2 and 1)$) {};
	\node[p,label=below:$b$] (pb) at ($(1.2,0)+(-50:1.2 and 1)$) {};
	\draw (px) to (pv);
	\draw[non] (px) to (pb) (px) to (pa);
	\draw[non] (pa) to (pb);
	\draw[Ebag] (pv)--(pa) (pv)--(pb);
\end{tikzpicture}
\caption{$x$ with non-neighbors in $A$ and $B$, giving a claw}
\end{subfigure}
\hspace{1em}
\begin{subfigure}[b]{0.3\textwidth}
\centering
\begin{tikzpicture}
	\draw[bag] (-1.2,0) ellipse (1.2 and 1);
	\draw[bag] ( 1.2,0) ellipse (1.2 and 1);
	\node at (-2.5,0.8) {$A$};
	\node at (2.5,0.8) {$B$};
	\node[p,label=right:$x\in X$] (px) at (-0.3,-2) {};
	\node[p,label=right:$v$] (pv) at ($(1.2,0)+(180:1.2 and 1)$) {};
	\node[p,label=below:$a$] (pa) at ($(-1.2,0)+(-140:1.2 and 1)$) {};
	\node[p,label=below:$b$] (pb) at ($(1.2,0)+(-50:1.2 and 1)$) {};
	\draw (px) to (pv);
	\draw (px) to (pb) (px) to (pa);
	\draw[non] (pa) to (pb);
	\draw[Ebag] (pv)--(pa) (pv)--(pb);
\end{tikzpicture}
\caption{$x$ with neighbors in $A,B$, giving a diamond}
\end{subfigure}
\caption{Adjacencies between $X$ and $G-X$ that lead to a contradiction.}
\label{fig:one_bag_attached_proof}
\end{figure}
}}

We can now limit the number of attached bags by $2|X|$, which is linear in $k$.

\begin{lemma}\label{lem:leq2-bags-attached}
For any $x\in X$, there are at most two bags in $\cB(G-X)$ attached to $x$.
\end{lemma}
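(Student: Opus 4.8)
The statement to prove is Lemma~\ref{lem:leq2-bags-attached}: for any $x \in X$, there are at most two bags in $\cB(G-X)$ attached to $x$. The key observation is that if a bag $B$ is attached to $x$, then by definition $x$ is fully adjacent to $B$; so if three or more bags $B_1, B_2, B_3$ were all attached to $x$, then $x$ would be fully adjacent to all of them, and in particular $x$ would have many neighbors spread across these bags. The plan is to pick one vertex from each of three distinct attached bags and exhibit a forbidden induced subgraph (a claw or a diamond) none of whose edges lie in $E(X)$, contradicting that $X$ is a modulator.

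First I would fix three distinct bags $B_1, B_2, B_3$ all attached to $x$ and try to pick vertices $v_i \in B_i$ that are pairwise non-adjacent in $G-X$; then $x, v_1, v_2, v_3$ would induce a claw in $G$ with all three legs incident to $x$, hence no edge of the claw in $E(X)$ (the $v_i$ lie in $G-X$), a contradiction. The obstacle is that the $v_i$ need not be pairwise non-adjacent: two bags can share a vertex (Lemma~\ref{lem:bag-decomposition}\ref{p:inters}), or one bag could be a singleton contained in another, so a naive choice might accidentally produce adjacent vertices. To handle this I would use the structure carefully: by Lemma~\ref{lem:bag-decomposition}\ref{p:nonadj}, if $B_i$ and $B_j$ share a vertex then there is no edge between $B_i$ minus that vertex and $B_j$ minus that vertex, so edges between distinct bags only occur through shared vertices; and by Lemma~\ref{lem:bag-decomposition}\ref{p:ver}, each vertex lies in at most two bags, so a single vertex cannot be shared by all three of $B_1, B_2, B_3$. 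Thus among the at most three pairwise intersections, I can choose representatives $v_i \in B_i$ avoiding the shared vertices where possible; the only genuinely delicate case is when, say, $B_1 \cap B_2 = \{v\}$ and $B_1 = \{v\}$ is a singleton — but then the attachment definition forbids the other bag containing $v$ (which is $B_2$) from being fully adjacent to $x$, contradicting $B_2$ attached to $x$. So singleton attached bags cannot be involved in such a collision.

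Concretely, the argument splits by how many of the three pairwise intersections $B_i \cap B_j$ are nonempty. If some $B_i \cap B_j = \{v\}$ is nonempty, neither $B_i$ nor $B_j$ is the singleton $\{v\}$ (by the attachment definition, as argued above), so I can choose $v_i \in B_i - v$ and $v_j \in B_j - v$, which are non-adjacent by \ref{p:nonadj}; I also need $v_i, v_j$ to avoid the other shared vertices with $B_\ell$ (the third bag), which is possible since each $B_i$ loses at most two specific vertices and attached bags adjacent to $x$ have size at least $2$ unless they are singletons, and a singleton attached bag shares no vertex with another bag that is also attached to $x$ (again by the attachment definition). Having picked pairwise non-adjacent $v_1, v_2, v_3$ with each $v_i \in B_i$ adjacent to $x$ (full adjacency), the vertices $x, v_1, v_2, v_3$ induce a claw in $G$, none of whose edges lies in $E(X)$, contradicting Lemma~\ref{lem:greedy-modulator}.

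I expect the main obstacle to be the bookkeeping around shared vertices and singleton bags: one must verify that in every configuration of $B_1, B_2, B_3$ there exist valid representatives, which requires combining \ref{p:ver}, \ref{p:inters}, \ref{p:nonadj}, and the precise wording of the attachment definition (specifically the second bullet, which prevents an attached singleton bag from coexisting with its partner bag also being attached). Once the representatives are found, the contradiction via the induced claw is immediate, so the combinatorial case analysis on intersections is the heart of the proof.
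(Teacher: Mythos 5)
Your overall strategy --- three bags attached to $x$ would yield an induced claw centered at $x$ with no edge in $E(X)$ --- is exactly the paper's, but the step where you justify that the three representatives are pairwise non-adjacent has a genuine gap. You assert that ``edges between distinct bags only occur through shared vertices,'' deducing this from Lemma~\ref{lem:bag-decomposition}\ref{p:nonadj}. That property only constrains pairs of bags that \emph{do} share a vertex; it says nothing about disjoint bags, and in general two disjoint bags can be joined by an edge that lives in a \emph{third} bag meeting both (e.g., in the line graph of the path $a$--$b$--$c$--$d$, the singleton bag $\{ab\}$ and the bag $\{bc,cd\}$ are disjoint yet joined by the edge between $ab$ and $bc$, which lies in the bag $\{ab,bc\}$). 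So even after you have arranged disjoint representatives $v_1,v_2,v_3$, you have not shown they are non-adjacent, and the claw need not exist. Closing this gap requires the attachment hypothesis, not just the bag decomposition: if $u\in B_1$ and $v\in B_2$ were adjacent with $B_1,B_2$ disjoint and both attached to $x$, then $u$ and $v$ are both neighbors of $x$, so by Lemma~\ref{lem:two-in-bag-implies-all} the (third) bag containing the edge $uv$ is also attached to $x$, and since it shares the vertex $u$ with $B_1$, Lemma~\ref{lem:one-bag-attached} is violated. This is precisely the paper's route: it first proves that attached bags are pairwise disjoint and pairwise non-adjacent, after which any three representatives give the claw.

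A secondary remark: the case analysis you devote to attached bags sharing a vertex is vacuous. If two bags attached to $x$ shared a vertex $v$, then $v$ is a neighbor of $x$ contained in exactly two bags (Lemma~\ref{lem:bag-decomposition}\ref{p:ver}), and Lemma~\ref{lem:one-bag-attached} says exactly one of those two bags is attached to $x$ --- a contradiction. Using this observation up front removes all the bookkeeping about singletons and shared vertices, and also avoids a further problem in your text: a bag of size exactly $2$ that met both other bags would lose both of its vertices under your exclusion rule, leaving no representative to pick.
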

\begin{proof}
Let $x\in X$.
We first show that bags attached to $x$ must be pairwise disjoint and non-adjacent.
If two bags attached to $x$ contained a common vertex $v$, then $v$ would be adjacent to $x$ and, by Lemma~\ref{lem:one-bag-attached}, at most one of the bags would be attached to $x$, a contradiction.

If there was an edge $uv$ between two different bags attached to $x$,
then its endpoints $u$ and $v$ would be adjacent to $x$ and, by Lemma~\ref{lem:two-in-bag-implies-all}, the bag containing the edge $uv$ would be attached to $x$.
But we have just shown that bags attached to $x$ are disjoint, so no other bag attached to $x$ could contain $u$ or $v$, a contradiction.
Therefore, every two bags attached to $x$ are disjoint and non-adjacent.

Suppose that there are three or more bags adjacent to $x$.
Let $u,v,w$ be any vertices contained in three different bags.
By the above observations, $u,v,w$ are pairwise different and non-adjacent.
Hence, vertices $x,u,v,w$ induce a claw in $G$ that has no edges in $E(X)$,
a contradiction.
\maybeqed\end{proof}

Having limited the number of attached bags, we want to show that unattached bags intersect solutions only in a simple way.
The following technical proposition will help handle cases involving diamonds.

\begin{lemma}\label{lem:diamond_inside_bag}
Let $H$ be a subgraph (not necessarily induced) of $G$ isomorphic to a diamond.
Let $B\in\cB(G-X)$ be an unattached bag containing at least two vertices of $H$.
Then $B$ contains all vertices of $H$.
\end{lemma}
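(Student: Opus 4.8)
The plan is to extract the core of the argument as an auxiliary observation and then apply it repeatedly. The observation is: \emph{if $B\in\cB(G-X)$ is unattached, $a,b\in B$ are distinct, and $z\in V(G)\setminus\{a,b\}$ is adjacent in $G$ to both $a$ and $b$, then $z\in B$.} Granting this, the lemma follows quickly. Write the diamond $H$ on vertices $\{u,v,w,x\}$, with $v,w$ the two vertices of degree $3$ (so $v$ is adjacent in $H$ to each of $u,w,x$, and likewise for $w$, while the only possible non-edge of $H$ is $ux$). First I would show $v\in B$: otherwise the vertices of $H$ lying in $B$ --- there are at least two of them --- would all belong to $\{u,w,x\}$, and any two of them are common neighbors of $v$ in $G$ (these are diamond edges, present in $G$ because $H$ is a subgraph of $G$), so the auxiliary observation forces $v\in B$, a contradiction. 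The same argument gives $w\in B$. Now, with $v,w\in B$, each of $u$ and $x$ is a common neighbor of $v$ and $w$ in $G$, so two further applications of the observation put $u$ and $x$ into $B$; hence $V(H)\subseteq B$, as desired.

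It then remains to prove the auxiliary observation. Since $B$ is a bag of $G-X$ we have $B\cap X=\emptyset$, so in particular $a,b\notin X$. I claim $z\notin X$: if $z\in X$, then $z$ would have the two neighbors $a,b$ inside the bag $B$, so by Lemma~\ref{lem:two-in-bag-implies-all} the bag $B$ would be attached to $z$, contradicting that $B$ is unattached. Hence $z\notin X$, and since also $a,b\notin X$, the edges $az$ and $bz$ are present in $G-X$. Suppose towards a contradiction that $z\notin B$, and let $B'$ be the bag of $G-X$ containing the edge $az$ (it exists and is unique by Lemma~\ref{lem:bag-decomposition}\ref{p:edg}). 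Then $B'\neq B$ because $z\in B'\setminus B$, while $a\in B\cap B'$. By Lemma~\ref{lem:bag-decomposition}\ref{p:nonadj} there is no edge of $G-X$ between $B-a$ and $B'-a$; but $b\in B-a$, $z\in B'-a$ (here we use $z\notin\{a,b\}$), and $bz\in E(G-X)$ --- a contradiction. Therefore $z\in B$, proving the observation.

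I expect this to be short, and the only delicate point is the bag calculus inside the observation: one must use the edge-uniqueness property of Lemma~\ref{lem:bag-decomposition}\ref{p:edg} to name $B'$ correctly and then the non-adjacency property of Lemma~\ref{lem:bag-decomposition}\ref{p:nonadj} to derive the contradiction. Two further points are worth noting but cause no difficulty. First, although $H$ is only a (possibly non-induced) subgraph of $G$, we never use that $u$ and $x$ are non-adjacent --- only that the five diamond edges lie in $G$ --- so any extra edges among $V(H)$ are irrelevant. Second, the hypothesis that $B$ is unattached is used in exactly one place, namely to rule out $z\in X$ via Lemma~\ref{lem:two-in-bag-implies-all}.
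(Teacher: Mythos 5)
Your proof is correct and follows essentially the same route as the paper's: the auxiliary observation you isolate (a common neighbor of two vertices of an unattached bag must itself lie in the bag, ruled out of $X$ via Lemma~\ref{lem:two-in-bag-implies-all} and forced into $B$ via Lemma~\ref{lem:bag-decomposition}\ref{p:nonadj}) is exactly the step the paper applies twice, first to a common $H$-neighbor of the two given vertices and then to the fourth vertex. Packaging it as a standalone observation and pulling in the degree-3 vertices first is a slightly tidier bookkeeping of the same argument.
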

\begin{proof}
Let $u,v$ be two vertices of $H$ in $B$.
Let $w$ be a vertex of $H$ adjacent to $u$ and $v$ in $H$
(note that since $H$ is a diamond, there always is such a vertex).
Then $w$ is also adjacent to $u$ and $v$ in $G$.
Vertex $w$ cannot be in $X$, as otherwise Lemma~\ref{lem:two-in-bag-implies-all} would contradict the assumption that $B$ is unattached.
Hence, $w$ is in $G-X$. 
Let $A$ be the bag containing the edge $uw$.
If $w$ was not in $B$, then $B\neq A$ and $vw$ would be an edge going between $v\in B-u$ and $w\in A-u$,
contradicting Lemma~\ref{lem:bag-decomposition}\ref{p:nonadj}.
Therefore, $w\in B$.

Repeating this argument for the fourth vertex of the diamond $H$ and an appropriate pair of vertices from $\{u,v,w\}$, all the vertices of $H$ can be shown to be in $B$.
\maybeqed\end{proof}

It turns out that one may need to delete an edge of an unattached bag $B$, but in this case the intersection of any minimal HDS $F$ with the edges of $B$ has a very special structure: deleting the edges of $F$ makes some of the vertices of $B$ isolated, whereas the rest of $B$ remains a smaller clique.
This will later allow us to take only a limited number of unattached bags into account.

\begin{lemma}\label{lem:all-or-nothing}
Let $F$ be a minimal HDS of $G$ and let $B\in\cB(G-X)$ be an unattached bag.
Then $G[B]-F$ consists of a clique and a number of isolated vertices.
\end{lemma}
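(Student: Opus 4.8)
The plan is to argue that within the clique $G[B]$, the non-deleted edges of $G[B]-F$ cannot form two "independent" pieces of size at least two each, because any such configuration would let us return one edge to $F$ without creating any forbidden induced subgraph, contradicting minimality of $F$. Concretely, suppose toward a contradiction that $G[B]-F$ is not of the claimed form. Then there exist two vertices $u,v \in B$ that are adjacent in $G[B]-F$, and yet the connected component of $u$ in $G[B]-F$ still has at least two vertices after we \emph{also} delete the edge $uv$ — equivalently, $u$ has another neighbor $u' \in B$ in $G[B]-F$, or $v$ has another neighbor $v' \in B$. (If $G[B]-F$ were a clique plus isolated vertices, no such edge would exist; the failure of that form produces an edge $uv$ not incident to the "isolated part" whose removal keeps both endpoints non-isolated inside $B$.) I would set $F' := F \setminus \{uv\}$ and show $G-F'$ is still \clawdiamond-free, which contradicts minimality of $F$ since $F' \subsetneq F$.

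The key step is to show that re-adding the edge $uv$ creates no induced claw and no induced diamond in $G-F'$. Any new obstacle $H$ must use the edge $uv$. First handle diamonds: if $H$ is an induced diamond in $G-F'$ using $uv$, then in $G-F$ the set $V(H)$ induces something with one fewer edge; in particular $H$ contains at least two vertices of $B$, namely $u$ and $v$, so by Lemma~\ref{lem:diamond_inside_bag} (applied to $H$ as a — not necessarily induced — diamond subgraph of $G$, using that $B$ is unattached) all four vertices of $H$ lie in $B$. But $B$ is a clique in $G-X$, hence in $G$; so $G[V(H)]$ is a $K_4$ minus possibly some edges of $F$. For $H$ to be an \emph{induced} diamond in $G-F'$, exactly one pair among $V(H)$ must be a non-edge of $G-F'$, i.e.\ $F'$ (equivalently $F$, since $uv \notin F'$ and $uv \in E(H)$) deletes exactly one edge inside $V(H) \subseteq B$. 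Now I use the extra neighbor: say $u'$ is a neighbor of $u$ in $G[B]-F$. One then checks $u,v,u'$ together with the "isolated in $H$" vertex, or a suitable substitute inside $B$, already formed a forbidden configuration in $G-F$ itself — contradiction. The claw case is similar but easier: a new induced claw using $uv$ needs a center adjacent to both or to one of $u,v$; since $u$ and $v$ lie in the clique $B$ and have a common non-deleted neighbor structure, I trace through whether the center is in $X$, in $B$, or in another bag, and in each case either Lemma~\ref{lem:two-in-bag-implies-all}/\ref{lem:one-bag-attached} or Lemma~\ref{lem:bag-decomposition}\ref{p:nonadj} forces the claw to have already existed in $G-F$ or to be non-induced.

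The main obstacle I anticipate is the bookkeeping in the diamond case: pinning down exactly which edges of the $K_4$ on $V(H) \subseteq B$ are in $F$, and then exhibiting the pre-existing obstacle in $G-F$ from the "extra neighbor" $u'$ (or $v'$). The cleanest route is probably to first reduce to the statement "for every component $C$ of $G[B]-F$ with $|C| \ge 2$, $C$ is \emph{all} of the non-isolated part", i.e.\ there is at most one component of size $\ge 2$; then separately rule out two such components by picking one representative edge from each and showing the four endpoints, all inside the clique $B$, already induce a diamond in $G-F$ (four vertices of a clique with exactly two edges deleted, one from each component, is an induced diamond) — this is the crisp contradiction and avoids case analysis on external vertices. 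The claw-freeness check for the single re-added edge is then the only place external vertices enter, and there Lemmas~\ref{lem:two-in-bag-implies-all} and~\ref{lem:one-bag-attached} do the work.
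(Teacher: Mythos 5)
Your overall strategy---use minimality by exhibiting a proper subset $F'\subsetneq F$ that is still an HDS---is the right one, and you correctly identify Lemma~\ref{lem:diamond_inside_bag} as the tool for the diamond case. But the plan of returning a \emph{single} edge $uv\in F$ with both endpoints non-isolated in $G[B]-F$ does not work, and this is a genuine gap rather than bookkeeping. Re-adding one edge can itself create an induced diamond inside $B$: if, say, $B=\{u,v,p,q\}$ induces a $K_4$ in $G$ and $F\cap E(B)=\{uv,pq\}$ is a perfect matching, then $G[B]-F$ is a $C_4$ (every vertex of $B$ is non-isolated, so this is exactly a configuration your contradiction argument must handle), yet $G-(F\setminus\{uv\})$ contains the induced diamond $K_4-pq$ on $B$. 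At that point Lemma~\ref{lem:diamond_inside_bag} tells you $V(H)\subseteq B$, but the configuration already present in $G-F$ on those four vertices is a $C_4$ (or a paw, if the two deleted edges share a vertex), neither of which is forbidden, so no contradiction is available and the ``extra neighbour $u'$'' does not rescue you. Your proposed ``crisp'' fallback rests on a false claim: picking one present edge from each of two components of $G[B]-F$ gives four vertices inducing $2K_2$ in $G-F$ (all four cross pairs are deleted, since the components are distinct), not ``a clique with exactly two edges deleted,'' and $2K_2$ is not a diamond. Moreover, even ``at most one component of size $\ge 2$'' is weaker than the lemma: the unique non-trivial component could be, e.g., a $C_5$ inside a $K_5$ bag (delete the complementary $C_5$; the result is connected, claw- and diamond-free, but not a clique). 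Finally, your negation of the conclusion is garbled: you want a \emph{deleted} edge $uv\in F$ whose endpoints are both non-isolated in $G[B]-F$, not an edge present in $G[B]-F$ (otherwise $F\setminus\{uv\}=F$).

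The paper avoids all of this by restoring \emph{all} offending edges at once: letting $B'$ be the set of non-isolated vertices of $G[B]-F$, it sets $F'=F\setminus E(B')$, so that $B$ induces in $G-F'$ a clique on $B'$ plus isolated vertices. Any new obstacle $H$ in $G-F'$ must use an edge of $F\cap E(B')$; in the diamond case Lemma~\ref{lem:diamond_inside_bag} forces $V(H)\subseteq B$, which is immediately impossible because a clique plus isolated vertices contains no induced diamond; in the claw case no leaf can lie in $B$ (that would give an induced $P_3$ inside $B$ in $G-F'$), after which replacing the leaf $v$ by a $G[B]-F$-neighbour $w$ of the centre yields a claw already present in $G-F$, using Lemmas~\ref{lem:two-in-bag-implies-all} and~\ref{lem:bag-decomposition}\ref{p:nonadj} much as you anticipated. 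So your external-vertex analysis for claws is essentially on track; the missing idea is the simultaneous restoration, without which the diamond case cannot be closed.
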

\begin{proof}
Let $B'\subseteq B$ be the set of vertices that are not isolated in $G[B]-F$.
Consider the set $F'=F\setminus E(B')$. The graph $G-F'$ is obtained from $G-F$ by adding back all edges between vertices in $B'$.
Thus the bag $B$ induces in $G-F'$ a clique on $B'$ plus isolated vertices $B\setminus B'$.
We claim that $F'$ is an HDS.
By the minimality of $F$, this will imply that $F=F'$ and hence the claim.

Suppose to the contrary that $G-F'$ contains an induced claw or diamond $H$.
Since $G-F$ contains neither an induced claw nor a diamond, $H$ has an edge $e$ in $F\cap E(B')$.

If $H$ is a diamond in $G-F'$, then since $e$ has both endpoints in $B$, by Lemma~\ref{lem:diamond_inside_bag} we infer that all vertices of $H$ are in $B$.
But this contradicts that $B$ induces a clique plus isolated vertices in $G-F'$.

If $H$ is a claw in $G-F'$, then let $c$ be its center and $v,u_1,u_2$ its leaves, so that $e=cv$.
Since $e\in E(B')$, its endpoint $c$ is in $B'$, meaning $c$ is not isolated in $G[B]-F$.
Let $w$ be a neighbor of $c$ in $G[B]-F$.
We show that vertices $c,w,u_1,u_2$ induce a claw in $G-F$.
Consider where the leaves $u_i$ may be.
If $u_i\in B$ (for $i=1$ or $2$), then vertices $c,v,u_i$ induce two legs of a claw (a $P_3$) in $G[B]-F'$, contradicting that $G[B]-F'$ is a clique plus isolated vertices.
If $u_i \in X$, then since $u_i$ is adjacent to $c\in B$ and $B$ is not attached, by Lemma~\ref{lem:two-in-bag-implies-all} we have that $u_i$ cannot be adjacent to $w\in B$ in $G$.
If $u_i\in G-(B \cup X)$, then it is in the bag $A$ containing the edge $cu_i$
and, by Lemma~\ref{lem:bag-decomposition}, $u_i\in A-c$ is not adjacent to $w\in B-c$ in $G$.
In either case $u_1 w$ and $u_2 w$ are non-edges in $G$, thus also in $G-F$.
By assumption, $u_1u_2$ is a non-edge in $G-F'$, thus also in $G-F$.
We showed that $u_i\not\in B$, so $cu_i\in E(G-F')$ are also edges in $G-F$.
Finally, $cw\in E(G[B]) \setminus F$, so indeed the vertices $c,w,u_1,u_2$ induce a claw in $G-F$, a contradiction.
\maybeqed\end{proof}

\onlyfull{To obtain a compressed description of the problem, one ingredient remains: limiting the size of bags that may need deletions.}

\begin{lemma}\label{lem:no-big-bag-deletions}
If $K$ is a clique in $G$ with at least $2k+2$ vertices,
then every HDS $F$ of $G$ of size at most $k$ satisfies $F \cap E(K) = \emptyset$.
\end{lemma}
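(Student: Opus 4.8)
The plan is to argue by contradiction: suppose $F$ is an HDS of size at most $k$ that deletes at least one edge with both endpoints in a clique $K$ of size at least $2k+2$. First I would identify the vertices of $K$ that are "touched" by $F$, i.e.\ incident to at least one deleted edge of $E(K)$. Since $|F| \le k$, at most $2k$ vertices of $K$ are touched, so there remain at least two untouched vertices $p, q \in K$; the edge $pq$, together with all edges from $p$ and $q$ to the rest of $K$, survives in $G - F$. Thus $G - F$ still contains a clique $K'$ on $\{p, q\} \cup (K \setminus \{\text{touched vertices}\})$ of size at least $2$ — in fact, crucially, $p$ and $q$ are each still adjacent in $G-F$ to every vertex of $K$ that they were adjacent to in $G$, and in particular $p$ and $q$ have identical neighborhoods inside $K$ in $G-F$.

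The key step is then to exhibit a forbidden induced subgraph in $G - F$. Take the deleted edge $uv \in F \cap E(K)$ that we assumed exists. In $G$, the four vertices $u, v, p, q$ induce a $K_4$ (all of $K$); after deleting $F$, the edge $uv$ is gone, $pq$ survives, and the four edges $up, uq, vp, vq$ — wait, these need not all survive. Here I need to be more careful: I should choose $u, v$ so that $uv \in F$ but the edges $up, uq, vp, vq$ are not in $F$. This is exactly why $p, q$ were chosen untouched: an edge incident to $p$ or $q$ inside $K$ is never in $F$. So $up, vp, uq, vq \notin F$, and $pq \notin F$, while $uv \in F$. Hence $\{u, v, p, q\}$ induces in $G - F$ a graph with all edges except $uv$ present, which is precisely a diamond. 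This contradicts that $G - F$ is \clawdiamond-free.

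The main obstacle is simply the bookkeeping of which vertices are "safe": one must be sure that at least two untouched vertices exist (requiring $|K| \ge 2k + 2$, since $k$ deleted edges touch at most $2k$ vertices, and we need the deleted edge $uv$ plus two further untouched vertices $p,q$, all distinct — note $u,v$ themselves are touched, so they are automatically different from $p,q$), and that the diamond we produce is genuinely induced, which follows because we only need to check the six pairs among $\{u,v,p,q\}$ and we have controlled all of them. No appeal to the bag structure or to earlier lemmas about the modulator is actually needed here; this is a purely local clique argument. I would also remark that the bound $2k+2$ is essentially tight in the sense that one cannot reduce it below what is needed to guarantee the two untouched vertices.
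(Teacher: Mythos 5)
Your proof is correct, but it takes a different route from the paper's. The paper also starts from an edge $uv \in F \cap E(K)$, but then observes that for every pair of distinct $w_1,w_2 \in K \setminus \{u,v\}$ the vertices $u,v,w_1,w_2$ form a diamond in $G-uv$; since $|K \setminus \{u,v\}| \geq 2k$, one can pick $k$ disjoint pairs and obtain $k$ \emph{edge-disjoint} diamonds, each of which must be hit by $F$ using an edge other than $uv$, forcing $|F| \geq k+1$. This is a packing/counting argument that contradicts the size bound on $F$. You instead argue by pigeonhole that at most $2k$ vertices of $K$ are incident to edges of $F \cap E(K)$, pick two untouched vertices $p,q$, and exhibit the set $\{u,v,p,q\}$ as an \emph{induced diamond surviving in $G-F$}, contradicting the HDS property directly. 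Both arguments use $|K| \geq 2k+2$ in essentially the same numerical way and are equally elementary; yours has the small advantage of producing a single explicit forbidden induced subgraph in $G-F$ rather than a lower bound on $|F|$, while the paper's packing argument avoids having to track which vertices are ``safe.'' Your bookkeeping is sound: $u,v$ are automatically distinct from $p,q$ since the former are touched and the latter are not, and all five surviving pairs among $\{u,v,p,q\}$ lie in $E(K)$ and are incident to $p$ or $q$, hence not in $F$. The digression about the surviving clique $K'$ and the tightness remark are not needed, but they do no harm.
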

\begin{proof}
By contradiction, assume there exists $uv \in F$ with $u,v \in K$.
However, then for every two distinct $w_1,w_2 \in K \setminus \{u,v\}$, the subgraph induced in $G-uv$ by $u,v,w_1,w_2$ is a diamond.
As $|K| \geq 2k+2$, we can find $k$ edge-disjoint diamonds formed in this way in $G-uv$.
Consequently, $F$ needs to contains at least $k$ edges apart from $uv$, a contradiction.
\maybeqed\end{proof}

\begin{corollary}\label{cor:no-big-bag-deletions}\onlylncs{\appsign{}}
Let $B\in\cB(G-X)$ be a bag with at least $2k+2$ elements.
Then for every HDS $F$ of $G$ of size at most $k$, 
$F\cap E(B) = \emptyset$.
If furthermore $B$ is attached to $x\in X$, then 
$F\ \cap\ E(B\cup\{x\}) = \emptyset$.
\end{corollary}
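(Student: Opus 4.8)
The plan is to derive both assertions directly from Lemma~\ref{lem:no-big-bag-deletions}, which already does all the real work: it rules out deletions inside any clique of size at least $2k+2$. So the whole task reduces to exhibiting the relevant large cliques.

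For the first assertion, recall that every bag $B\in\cB(G-X)$ is by definition a clique of $G-X$, and hence also a clique of $G$. If $|B|\geq 2k+2$, then applying Lemma~\ref{lem:no-big-bag-deletions} with $K=B$ immediately gives $F\cap E(B)=\emptyset$ for every HDS $F$ of $G$ of size at most $k$. There is nothing more to check here.

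For the second assertion, assume in addition that $B$ is attached to some $x\in X$. Since $|B|\geq 2k+2>1$, the bag $B$ is not a singleton, so the second bullet in the definition of ``attached'' is vacuous and the attachment condition collapses to: $B$ is fully adjacent to $x$. Consequently $B\cup\{x\}$ is a clique of $G$, and it has at least $2k+3\geq 2k+2$ vertices. Applying Lemma~\ref{lem:no-big-bag-deletions} once more, now with $K=B\cup\{x\}$, yields $F\cap E(B\cup\{x\})=\emptyset$, which is exactly what we wanted.

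I do not expect any genuine obstacle. The only two points that require a moment's care are (i) noticing that the singleton clause of the attachment definition is irrelevant once $|B|\geq 2k+2$, so that ``attached to $x$'' really does force full adjacency of $x$ to $B$, and (ii) observing that full adjacency of $x$ to $B$ turns $B\cup\{x\}$ into a clique of $G$, which is the hypothesis needed to invoke Lemma~\ref{lem:no-big-bag-deletions}. Both are immediate from the definitions.
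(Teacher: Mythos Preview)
Your proof is correct and follows exactly the paper's approach: both assertions are immediate applications of Lemma~\ref{lem:no-big-bag-deletions}, using that $B$ (respectively $B\cup\{x\}$) is a clique of the required size. Your extra remark about the singleton clause in the definition of ``attached'' being vacuous is harmless but unnecessary, since the first bullet of that definition already guarantees full adjacency of $x$ to $B$ in all cases.
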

\onlyfull{\begin{proof}
Follows directly from Lemma~\ref{lem:no-big-bag-deletions}, since every bag $B$ is a clique,
if $B$ is attached to $x \in X$, then $B \cup \{x\}$ is a clique as well.
\maybeqed\end{proof}}
We are ready to present the main step of the compression procedure for \cdedgedeletion.

\begin{lemma}\label{lem:compression}
One can in polynomial time find a set $S\subseteq V(G)$ of size $\Oh(k^4)$ such that every minimal HDS of size at most $k$ is contained in $E(S)$.
\end{lemma}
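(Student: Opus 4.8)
The plan is to build $S$ from three ingredients: (i) the modulator $X$ itself; (ii) for each $x \in X$, all vertices lying in the at most two bags attached to $x$, but only if those bags are small (fewer than $2k+2$ vertices) — by Corollary~\ref{cor:no-big-bag-deletions} no HDS of size $\le k$ touches a large attached bag or the edges joining it to $x$, so large attached bags can be dropped; (iii) a bounded number of carefully chosen vertices from unattached bags that are adjacent to an attached region. The sizes already line up: $|X| \le 4k$, and by Lemma~\ref{lem:leq2-bags-attached} there are at most $2|X| \le 8k$ attached bags, each of the retained ones of size $< 2k+2$, contributing $\Oh(k^2)$ vertices. The bulk of the work — and the source of the $\Oh(k^4)$ bound — is controlling the unattached bags.

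First I would argue that every edge of a minimal HDS $F$ with $|F|\le k$ lies "close to" the attached bags. Fix such an $F$ and an edge $e=ab \in F$. If $a,b$ are both in $G-X$, then $e$ lies in a single bag $B$ of $G-X$ (Lemma~\ref{lem:bag-decomposition}\ref{p:edg}). The key claim is that $B$ is either attached, or adjacent to an attached bag, or adjacent to a vertex of $X$ — intuitively, deleting $e$ must be "necessary," and an isolated clique far from $X$ and from any attached bag creates no claw or diamond, so it would never be in a minimal deletion set. More precisely: if $B$ is unattached, Lemma~\ref{lem:all-or-nothing} tells us $G[B]-F$ is a clique plus isolated vertices, so the role of $e$ is to detach some vertex $v$ from $B$; I would show this detachment is only "justified" (i.e., $F$ stays minimal) if $v$ has another neighbor outside $B$ creating an obstacle, which forces $v$ — and hence $B$ — into the closed neighborhood of an attached bag or of $X$. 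Edges with at least one endpoint in $X$ are handled more directly: such an edge either stays inside $E(X) \subseteq E(S)$ (if both endpoints in $X$), or, if $e = xv$ with $x\in X, v\in G-X$, then by Lemma~\ref{lem:one-bag-attached} $v$ lies in a unique bag attached to $x$; if that bag is small it is already in $S$ via ingredient (ii), and if it is large, Corollary~\ref{cor:no-big-bag-deletions} forbids $e \in F$.

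Next I would quantify "close to attached bags." Each attached bag $B$ is a clique, and by Lemma~\ref{lem:bag-decomposition} every vertex $v \in B$ lies in at most one other bag $B_v$; the bags $B_v$ over $v \in B$ are the bags adjacent to $B$. If $B$ is small, there are at most $2k+1$ such adjacent bags per attached bag, and over all $\Oh(k)$ attached bags that is $\Oh(k^2)$ adjacent bags. For each such adjacent bag $B'$, if $B'$ is itself large we can only afford to keep a bounded-size "core": using Lemma~\ref{lem:all-or-nothing} and Lemma~\ref{lem:no-big-bag-deletions}, the at most $k$ deleted edges inside $B'$ touch at most $2k$ vertices, and the only structurally relevant vertices of $B'$ are those adjacent to the attached bag $B$ (at most one, the shared vertex) plus the up-to-$k$ "detachable" ones whose behavior matters — so a subset of $\Oh(k)$ vertices of $B'$ suffices, chosen canonically (e.g. the shared vertex, plus, for each $x\in X$, up to one vertex of $B'$ adjacent to $x$, plus up to $2k$ arbitrary padding vertices to witness diamonds as in Lemma~\ref{lem:no-big-bag-deletions}). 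Summing: $\Oh(k)$ attached bags $\times$ $\Oh(k)$ adjacent bags each $\times$ $\Oh(k)$ retained vertices per adjacent bag $= \Oh(k^3)$, and folding in the $\Oh(k)$ factor from vertices of $X$ each needing their own witnesses gives the stated $\Oh(k^4)$. Finally I would add $X$ and any residual small-degree witnesses, and verify $S$ is computable in polynomial time (all ingredients are: computing $\cB(G-X)$, the attachment relation, and the canonical choices are polynomial).

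**The hard part will be** the minimality argument in the previous-to-last paragraph: showing that a minimal HDS cannot delete an edge in an unattached bag that is "far" from the attached region, and — more delicately — that when it does delete edges in an unattached bag $B'$ adjacent to an attached bag $B$, only a bounded, canonically-selectable set of vertices of $B'$ can be endpoints of those edges or can witness the obstacles that justify the deletions. This requires a careful case analysis mirroring the proof of Lemma~\ref{lem:all-or-nothing}: given a candidate "reduced" deletion set $F'$ obtained from $F$ by re-adding edges incident to vertices of $B'$ outside the chosen core, one must show $G - F'$ is still \clawdiamond-free, exploiting Lemma~\ref{lem:diamond_inside_bag} to confine diamonds to single bags and the bag non-adjacency property (Lemma~\ref{lem:bag-decomposition}\ref{p:nonadj}) to rule out claws whose leaves straddle bag boundaries. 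Once that structural "irrelevance of far/excess vertices" lemma is in place, the counting is routine bookkeeping.
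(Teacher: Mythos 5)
Your skeleton matches the paper's: take $X$, the small attached bags, and a controlled family of unattached bags; then, for a minimal HDS $F$ of size at most $k$, re-add every deleted edge outside the kept region to get $F'\subseteq F$ and show $G-F'$ is still \clawdiamond-free, so that minimality forces $F=F'$. However, there is a genuine gap in the one place where the proof is actually hard: identifying \emph{which} unattached bags can carry deletions. Your localization claim is that such a bag must lie ``in the closed neighborhood of an attached bag or of $X$,'' and your counting then only retains unattached bags adjacent to \emph{small} attached bags. This misses the critical configuration: a claw in $G-F'$ with center $c$ in a small unattached bag $B$, one leaf $v$ with $cv\in F$, and \emph{two leaves $u_1,u_2\in X$}. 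Here the bag of $c$ attached to $u_1$ and $u_2$ is the \emph{other} bag $A\ni c$, and $A$ may be big; a big attached bag can have unboundedly many vertices and hence unboundedly many adjacent unattached bags, so ``keep the bags adjacent to the attached region'' is not a bounded selection, and restricting to neighbors of small attached bags excludes $B$ altogether. The paper resolves this with a dedicated marking rule --- for every pair $x,y\in X$, mark up to $k+1$ small unattached bags of size at least two having a vertex in $N(x)\cap N(y)$ --- together with a packing argument: if $B$ is unmarked, the $k+1$ marked bags yield $k+1$ edge-disjoint claws $c_i,w_i,u_1,u_2$ (using that $w_i$ is non-adjacent to $u_1,u_2$ because $B_i$ is unattached, via Lemma~\ref{lem:two-in-bag-implies-all}), each forcing a distinct edge of $F$, contradicting $|F|\le k$. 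Your proposal contains no analogue of this rule or of the packing argument, and ``a bounded number of carefully chosen vertices'' is never instantiated; without it the size bound and the correctness of $F'$ both fail.

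A secondary, less damaging issue: your plan to keep $\Oh(k)$-vertex ``cores'' of big bags is unnecessary for this lemma and conflates two roles. By Lemma~\ref{lem:no-big-bag-deletions} no HDS of size at most $k$ deletes any edge inside a clique on $2k+2$ vertices, so big bags never meet $F$ and need not contribute to $S$ at all; the job of retaining enough vertices to \emph{witness} obstacles is handled separately by Lemma~\ref{lem:compression-to-annotated}, not here. Dropping the cores simplifies the counting, but the missing $N(x)\cap N(y)$ rule is the substantive defect.
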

\begin{proof}
Call a bag \emph{small} if it has less than $2k+2$ vertices, \emph{big} otherwise.
\iflncs{%
We \emph{mark} every small attached bag,
every small unattached bag that shares a vertex with some small attached bag,
and furthermore, for every vertex pair $x,y\in X$, we mark up to $k+1$ small unattached bags of size at least two that have a vertex in $N(x)\cap N(y)$.}{%
We \emph{mark} the following bags:
\begin{itemize}
\item every small attached bag,
\item every small unattached bag that shares a vertex with some small attached bag,
\item for every vertex pair $x,y\in X$, we mark up to $k+1$ small unattached bags of size at least two that have a vertex in $N(x)\cap N(y)$ (if there are more such bags, we mark any $k+1$ of them).
\end{itemize}}
Let $S$ be the set of all vertices in marked bags and in $X$.
\iflncs{%
A direct calculation, provided in Appendix~\ref{app:omitted}, shows that $|S| = \Oh(k^4)$.}{%
Let us first show that $|S|=\Oh(k^4)$.
By the construction of $X$ in Lemma~\ref{lem:greedy-modulator}, we have that $|X|\leq 4k$.
By Lemma~\ref{lem:leq2-bags-attached}, there are at most $2|X|$ attached bags.
Hence, there are at most $2|X| \cdot (2k+1)$ vertices in small attached bags.
Since each vertex of $G-X$ is in at most two bags, there are at most $2|X| \cdot (2k+1)$ small unattached bags that share a vertex with small attached bags.
In the final point we mark at most $|X|^2\cdot (k+1)$ small bags.
Therefore, we mark at most $2|X|+2|X|\cdot (2k+1)+|X|^2\cdot(k+1)=\Oh(k^3)$ small bags in total.
The set $S\setminus X$ contains at most $(2k+1)$ times as many vertices in total, which together with $|X|\leq 4k$ implies that $|S|=\Oh(k^4)$.}

\onlyfull{\smallskip}

We want to show that a minimal HDS never deletes any edges in unmarked bags.
Let $Z$ be the set of edges that are either contained in a marked bag, or in $E(X)$, or connect a vertex of a marked bag with a vertex of $X$.
Note that $Z \subseteq E(S)$, but the inclusion may be strict, due to an edge going between two vertices of some marked bags that belongs to an unmarked bag.
Let $F$ be a minimal HDS of size at most $k$.
\iflncs{We show that $F' = F \cap Z$ is also an HDS, concluding the proof of the lemma.}{%
We will show that $F'=F\cap Z$ is also an HDS.
By the minimality of $F$, this will imply that $F=F'\subseteq Z \subseteq E(S)$, and hence the proof of the lemma will be concluded.}

\begin{myclaim}\label{cl:clique-indset-Fp}
If a bag does not induce a clique plus isolated vertices in $G-F'$,
then it is a small attached bag.
\end{myclaim}
\begin{proof}
First consider $G-F$. By Lemma~\ref{lem:all-or-nothing}, every unattached bag induces a clique plus isolated vertices in $G-F$.
By Corollary~\ref{cor:no-big-bag-deletions}, every big bag induces a clique in $G-F$.
Hence, if a bag does not induce a clique plus isolated vertices in $G-F$,
then it is a small attached bag.
Suppose now that a bag does not induce a clique plus isolated vertices in $G-F'$. Then it necessarily contains an edge of $F'\subseteq Z$ and thus must be marked.
We infer that this bag induces the same subgraph in $G-F$ as in $G-F'$. Therefore, it must be small and attached.
\cqed\maybeqed\end{proof}

Suppose to the contrary that $G-F'$ contains an induced claw or diamond $H$.
Since $G-F$ contained none, $H$ must have an edge $e \in F\setminus F' = F\setminus Z$. We consider the following cases depending on the location of $e$, each leading to a contradiction; see Figure~\ref{fig:compression_proof}\onlylncs{ in Appendix~\ref{app:figures}}.

\medskip
\noindent\textbf{Case 1:} edge $e$ has an endpoint in the modulator $X$.\\
Then $e=vx$ for some $x\in X$ and $v\in V(G)$.
If $v \in X$, then $e\in E(X)\subseteq Z$, contradicting $e\in F\setminus Z$.
Otherwise, by Lemma~\ref{lem:one-bag-attached}, 
there is a bag $B$ containing $v$ that is attached to $x$.
Since $e\in F$, by Corollary~\ref{cor:no-big-bag-deletions} we infer that $B$ has less than $2k+2$ elements.
But then $B$ is a small, attached, and hence marked bag, implying $e\in Z$, a contradiction.

\medskip
\noindent\textbf{Case 2:} edge $e$ has both endpoints in $G-X$ (and thus $e$ is in $G-X$).\\
Let $B$ be the bag containing $e$.
Since $e\in F$, $B$ is a small bag by Corollary~\ref{cor:no-big-bag-deletions}.
Since $e\not\in Z$, $B$ is not a marked bag.
Since small attached bags are marked, $B$ is unattached.
By Claim~\ref{cl:clique-indset-Fp}, $B$ induces a clique plus isolated vertices in $G-F'$.

\medskip
\noindent\textbf{Case 2a:} $H$ is a diamond (in $G-F'$).\\
Then the endpoints of $e$ are in $B$, hence by Lemma~\ref{lem:diamond_inside_bag} all vertices of $H$ are in $B$.
But $B$ induces a clique plus isolated vertices in $G-F'$, a contradiction.

\medskip
\noindent\textbf{Case 2b:} $H$ is a claw (in $G-F'$).\\
Let $c$ be the center of the claw $H$ and let $v,u_1,u_2$ be its leaves, so that $e=cv$.
Let $A$ be the other bag containing $c$.

If $u_i$ was in $B$ (for $i=1$ or $2$), then $B$ would not induce a clique plus isolated vertices in $G-F'$ because $u_i,c,v$ induces a $P_3$, a contradiction.

If $u_i\not\in X$, then $u_i$ is in the bag containing $cu_i$ but not in $B$, which means that $u_i$ is in $A$.
If both $u_1,u_2$ were not in $X$, then $A$ would not induce a clique plus isolated vertices in $G-F'$ (because $u_1,c,u_2$ induces a $P_3$). By Claim~\ref{cl:clique-indset-Fp}, $A$ would be a small attached bag that shares the vertex $c$ with $B$, implying that $B$ is marked, a contradiction.

If exactly one leaf of the claw is in $X$, e.g., $u_1 \in X$ and $u_2\in G-X$, then $u_2$ is in $A$ (as above).
Because $c$ is adjacent to $u_1\in X$, by Lemma~\ref{lem:one-bag-attached} we infer that one of $A,B$ is attached to $u_1$.
Since $B$ is unattached, $A$ is attached to $u_1$, so $u_1u_2$ is an edge in $G$. Since $u_1u_2$ is not an edge in $G-F'$, we have that $u_1u_2\in F'\subseteq F$.
By Corollary~\ref{cor:no-big-bag-deletions} we infer that $A$ is a small bag.
It is also attached, and therefore $B$ is marked, again a contradiction.

If both $u_1,u_2$ are in $X$, then note that $B$ is an unattached bag of size at least two that has a vertex (namely $c$) in the common neighborhood of $u_1$ and $u_2$.
By the definition of marked bags and as $B$ was not marked in the third point, at least $k+1$ different marked bags $B_1,\dots,B_{k+1}$ are unattached, have size at least two, and have some vertex, respectively $c_1,c_2,\dots,c_{k+1}$, in the common neighborhood of $u_1$ and $u_2$.
If $c_i=c_j$ for some $i,j$ with $1\leq i<j\leq k+1$, then $B_i,B_j$ are the two bags that contain $c_i$. Since $c_i$ is adjacent to $u_1$, one of those bags is attached to $u_1$ by Lemma~\ref{lem:one-bag-attached}, a contradiction. Hence, $c_i\neq c_j$ for all $1\leq i<j\leq k+1$.
Let $w_i$ be any vertex different from $c_i$ in $B_i$.
Since $B_i$ is unattached, $w_i$ is non-adjacent to $u_1$ and $u_2$ in $G$ by Lemma~\ref{lem:two-in-bag-implies-all}.
Clearly, $c_i$ is adjacent to $w_i,u_1,u_2$ in $G$.
Therefore, vertices $c_i,w_i,u_1,u_2$ induce $k+1$ edge-disjoint claws in $G-u_1u_2$.
Since $u_1,u_2$ are leaves of the claw $H$ in $G-F'$, they are non-adjacent in $G-F$.
Hence, for each $i$ with $1\leq i\leq k+1$, one of the edges $c_i w_i,c_i u_1,c_i u_2$ must be deleted by $F$.
But $|F|\leq k$, a contradiction.
\maybeqed\end{proof}
\onlyfull{\iflncs{\input{figureCompressionProof-wg}}{\begin{figure}[t]
\centering

\begin{subfigure}[b]{0.33\textwidth}
\centering
\begin{tikzpicture}
\draw[bagl] (-1.2,0) ellipse (1.2 and 1);
\draw[bagl] ( 1.2,0) ellipse (1.2 and 1);
\node at (2.8,0.8) {$<2k+2$};
\node[p,label=right:$x\in X$] (pc) at (-0.3,-2) {};
\node[p,label=right:$v$] (p1) at ($(1.2,0)+(180:1.2 and 1)$) {};
\node[s] (p2) at ($(1.2,0)+(-90:1.2 and 1)$) {};
\node[s] (p3) at ($(1.2,0)+(-50:1.2 and 1)$) {};
\node[s] (p4) at ($(1.2,0)+(-110:1.2 and 1)$) {};
\node[s] (p5) at ($(1.2,0)+(-130:1.2 and 1)$) {};
\draw[E] (pc) to (p1);
\draw    (pc)--(p2) (pc)--(p3) (pc)--(p4) (pc)--(p5);
\end{tikzpicture}
\caption{$e=vx$ has an endpoint in $X$}
\end{subfigure}
\begin{subfigure}[b]{0.3\textwidth}
\centering
\begin{tikzpicture}[scale=0.9]
\draw[bagl] (0,0) ellipse (1.2 and 1);
\node[p] (p1) at ($(0,0)+(  10:1.2 and 1)$) {};
\node[p] (p2) at ($(0,0)+( 100:1.2 and 1)$) {};
\node[p] (p3) at ($(0,0)+(190:1.2 and 1)$) {};
\node[p] (p4) at ($(0,0)+(-80:1.2 and 1)$) {};
\draw[E] (p1)--(p2) (p2)--(p3) (p3)--(p4) (p4)--(p1) (p2)--(p4);
\draw[E,non] (p1)--(p3);
\node at (0,-1.5) {};
\end{tikzpicture}
\caption{diamond with edge $e$ in a bag}
\end{subfigure}
\begin{subfigure}[b]{0.3\textwidth}
\centering
\begin{tikzpicture}
\draw[bagl] (0,0) ellipse (1.2 and 1);
\node[p,label=right:$c$] (p1) at ($(0,0)+(  10:1.2 and 1)$) {};
\node[p,label=left:$v$] (p2) at ($(0,0)+( 130:1.2 and 1)$) {};
\node[p,label=left:$u_i$] (p3) at ($(0,0)+(-110:1.2 and 1)$) {};
\draw[E] (p1)--(p2) (p1)--(p3);
\draw[E,non] (p2)--(p3);
\node at (0,-1.5) {};
\end{tikzpicture}
\caption{claw with two legs in $B$}
\end{subfigure}
\\
\begin{subfigure}[t]{0.26\textwidth}
\centering
\begin{tikzpicture}[scale=0.85,rotate=-20]
\draw[bagl] (-1.2,0) ellipse (1.2 and 1);
\draw[bagl] (1.2,0) ellipse (1.2 and 1);
\node[p,label={[label distance=3]-180:$c$}] (p1) at (0,0) {};
\node[p,label=below:$v$] (p4) at ($(-1.2,0)+( 130:1.2 and 1)$) {};
\node[p,label=right:$u_1$] (p2) at ($(1.2,0)+(50:1.2 and 1)$) {};
\node[p,label=right:$u_2$] (p3) at ($(1.2,0)+(-20:1.2 and 1)$) {};
\draw[E] (p1)--(p4) (p1)--(p2) (p1)--(p3);
\draw[E,non] (p2)--(p3);
\node at (0,-3) {};
\end{tikzpicture}
\caption{claw with two legs in $A$}
\end{subfigure}
\hspace{1em}
\begin{subfigure}[t]{0.26\textwidth}
\centering
\begin{tikzpicture}[scale=0.85]
\draw[bagl] (-1.2,0) ellipse (1.2 and 1);
\draw[bagl] (1.2,0) ellipse (1.2 and 1);
\node at (2.9,0.8) {$<2k+2$};
\node[p,label={[label distance=4]left:$c$}] (c) at (0,0) {};
\node[p,label=below:$v$] (v) at ($(-1.2,0)+( 130:1.2 and 1)$) {};
\node[p,label=right:$u_2$] (u2) at ($(1.2,0)+(10:1.2 and 1)$) {};
\node[p,label=right:$u_1\in X$] (u1) at (0.3,-2) {};
\node[s] (p5) at ($(1.2,0)+(-45:1.2 and 1)$) {};
\node[s] (p6) at ($(1.2,0)+(-105:1.2 and 1)$) {};
\node[s] (p7) at ($(1.2,0)+(-130:1.2 and 1)$) {};
\draw[E] (c)--(v) (c)--(u2) (c)--(u1);
\draw (u1)--(p5) (u1)--(p6) (u1)--(p7);
\draw[E,non] (u1)--(u2);
\node at (0,-3) {};
\end{tikzpicture}
\caption{claw with one leg in $A$}
\end{subfigure}
\hspace{1em}
\begin{subfigure}[t]{0.4\textwidth}
\centering
\begin{tikzpicture}[scale=0.95]
\begin{scope}[rotate around={45:(-1.3,-1.6)}]
	\draw[bagl] (-1.3,-1.6) ellipse (0.8 and 0.5);	\node[p,label={left:$v$}] (pv) at ($(-1.3,-1.6)+(-162:0.8 and 0.5)$) {};
	\node[p,label=170:$c$] (pc) at ($(-1.3,-1.6)+(0:0.8 and 0.5)$) {};
\end{scope}
\begin{scope}[rotate around={-60:(1.7,0)}]
	\draw[bagl] (1.7,0) ellipse (1 and 0.8);
\end{scope}
\begin{scope}[rotate=60]
	\draw[bagl] (0,0) ellipse (1 and 0.8);
	\node[p,label={right:$w_1,w_2$}] (pw1) at ($(0,0)+(  -55:1 and 0.8)$) {};
	\node[p,label=above:$c_1$] (pc1) at ($(0,0)+(-150:1 and 0.8)$) {};
\end{scope}
\begin{scope}[rotate around={-60:(1.7,0)}]
	\node[p,label=above:$c_2$] (pc2) at ($(1.7,0)+(-10:1 and 0.8)$) {};
\end{scope}
\begin{scope}[rotate around={60:(3.5,-0.2)}]	
	\draw[bagl] (3.5,-0.2) ellipse (0.8 and 0.5);
	\node[p,label=above:$c_3$] (pc3) at ($(3.5,-0.2)+(180:0.8 and 0.5)$) {};
	\node[p,label=right:$w_3$] (pw3) at ($(3.5,-0.2)+(-15:0.8 and 0.5)$) {};
\end{scope}
\begin{scope}[rotate around={30:(3.5,-2)}] 
	\draw[bagl] (3.5,-2) ellipse (0.8 and 0.5);
	\node[p,label=60:$c_4$] (pc4) at ($(3.5,-2)+(180:0.8 and 0.5)$) {};
	\node[p,label=right:$w_4$] (pw4) at ($(3.5,-2)+(-15:0.8 and 0.5)$) {};
\end{scope}

\node[p,label={[label distance=8]left:$u_1\in X$}] (pu1) at (0.6,-3.2) {};
\node[p,label=right:$u_2\in X$] (pu2) at (1.8,-3.2) {};
\draw[E] (pc)--(pv) (pc)--(pu1) (pc)--(pu2);
\draw    (pc1)--(pw1) (pc1)--(pu1) (pc1)--(pu2)
         (pc2)--(pw1) (pc2)--(pu1) (pc2)--(pu2)
         (pc3)--(pw3) (pc3)--(pu1) (pc3)--(pu2)
         (pc4)--(pw4) (pc4)--(pu1) (pc4)--(pu2);
\draw[E,non] (pu1)--(pu2) (pu1)--(pv) (pu2)--(pv);
\end{tikzpicture}
\caption{claw with two leaves in $X$}
\end{subfigure}
\caption{The different situations where a claw or diamond (thick blue edges and dashed non-edges) might appear in $G-F'$, each leading to a contradiction.}
\label{fig:compression_proof}
\end{figure}
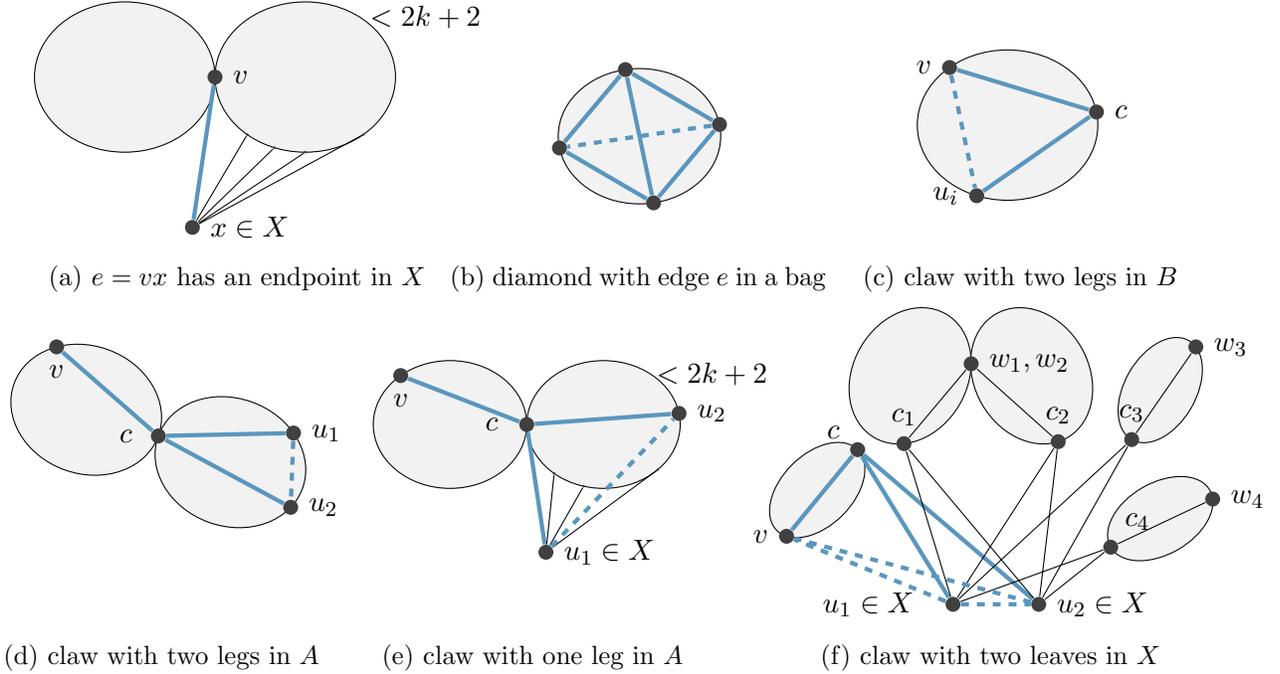 
}}

To prove Theorem~\ref{thm:main}, we now plug $(G,k)$ and the set $S$ obtained in Lemma~\ref{lem:compression} into the construction of Lemma~\ref{lem:compression-to-annotated} in order to obtain the polynomial-time compression. Observe that the resulting instance of \textsc{Annotated $\cH$-free Edge Deletion} has $\Oh(k^{12})$ vertices and $\Oh(k^{24})$ edges. 

\section{Hardness}\label{sec:hardness}
In this section, we prove Theorem~\ref{thm:lower-bound}, which states that the problems we consider cannot be solved in subexponential time, under the Exponential Time Hypothesis (ETH).
Let us recall that this hypothesis, formulated by Impagliazzo, Paturi and Zane~\cite{ImpagliazzoPZ01}, states that there exists a positive real number $s$  such that \textsc{3Sat} with $n$ variables cannot be solved in time $\Oh(2^{sn})$.
The Sparsification Lemma of~\cite{ImpagliazzoPZ01} allows to strengthen this assumption to functions subexponential in the size of the formula (the number variables $n$ plus the number of clauses $m$ of the input formula), and not just the number of variables. More precisely, unless ETH fails, \textsc{3Sat} cannot be solved in time $\Oh(2^{s(n+m)})$ for some $s>0$. In Theorem~\ref{thm:lower-bound}, we give a reduction that, given a \textsc{3Sat} instance $\phi$, outputs in polynomial time an equivalent instance $(G,k)$ of \cdedgedeletion where $k$ (the number of allowed deletions) is linear in the size of $\phi$. Composing this reduction with any subexponential parameterized algorithm for the problem would imply a subexponential algorithm for \textsc{3Sat}, contradicting ETH; this shows how Theorem~\ref{thm:lower-bound} implies Corollary~\ref{cor:lower-bound}

Our approach to proving Theorem~\ref{thm:lower-bound} is to consider \textsc{Claw-free Edge Deletion} in graphs were diamonds are not present and cannot appear after any edge deletions. That is, we shall actually prove the following result.

\begin{theorem}
\label{thm:claw_ETH}
There exists a polynomial-time reduction that, given an instance $\phi$ of \textsc{3Sat} with $n$ variables and $m$ clauses, outputs an instance $(G,k)$ of \textsc{Claw-free Edge Deletion} such that (a) $(G,k)$ is a yes-instance if and only if $\phi$ is satisfiable, (b) $|V(G)|,k=\Oh(n+m)$, (c) $G$ is $\{K_4,\mbox{diamond}\}$-free, and (d) $\Delta(G)=6$.
\end{theorem}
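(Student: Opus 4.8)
The plan is to give a reduction from \textsc{3Sat} to \textsc{Claw-free Edge Deletion} that stays entirely within the family of $\{K_4,\text{diamond}\}$-free graphs, both before and after any cheap edge deletion, so that the target property ``claw-free'' coincides on the relevant instances with ``\clawdiamond-free''. The core idea, following the constructions cited in the introduction (e.g.~\cite{drange2014exploring,DrangeP14,komusiewicz2012cluster}), is: build for each variable a cyclic \emph{variable gadget} that the solution is forced to ``break'' and can break in exactly two ways, corresponding to the two truth values; build for each clause a constant-size \emph{clause gadget} that, for the literals it contains, taps into the corresponding variable gadgets; and set the budget $k$ equal to the total cost that any solution is forced to pay just to make the graph claw-free, so that a solution of size $\le k$ exists iff it can route the forced deletions consistently, which happens iff $\phi$ is satisfiable. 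Since every claw in the construction will have all its vertices of degree at most $6$, and each gadget is a bounded-size graph glued along a bounded number of interfaces, both $\Delta(G)=6$ and $|V(G)|,k=\Oh(n+m)$ will follow by construction.

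First I would design the variable gadget as a cycle-like structure whose only induced claws are ``local'' and whose unique minimal ways of destroying all of them correspond to two alternating deletion patterns around the cycle; a natural realization is a ring of short cliques (triangles or edges) sharing vertices, arranged so that a claw is centered at each shared ``hub'' vertex and a claw is killed only by deleting one of a designated pair of edges incident to that hub. The alternation constraint (deleting at one hub forces a deletion at the next) is what forces the two global patterns; I would verify (i) that the minimum number of deletions in an isolated variable gadget is some fixed $c_{\text{var}}$, achieved exactly by the two ``consistent'' patterns, and (ii) that deleting any edge of a triangle creates no diamond, which holds because the gadget is built from cliques of size $\le 3$ glued at single vertices (a diamond would require two triangles sharing an edge, which I forbid by construction). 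Next I would build the clause gadget on a constant number of vertices, attaching it to the three variable gadgets of its literals in such a way that one of the three ``literal interfaces'' must already be ``satisfied'' by the variable gadget's chosen pattern, on pain of an extra, un-budgeted claw inside the clause gadget; again the gadget is assembled from cliques of size $\le 3$ glued at cut vertices to preclude diamonds, and I would check the degree of every vertex stays $\le 6$ (this is where the interface vertices — shared between a variable ring and up to a bounded number of clause gadgets — need the most care, so I would cap literal occurrences or add copies of variable gadgets if necessary, keeping the blow-up linear).

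The main obstacle, as in all such constructions, is proving the two-way ``exactly'' statement for the combined instance: that any optimal HDS, when restricted to a variable gadget, is forced to be one of the two clean patterns (no ``wasteful'' or ``mixed'' solution can do better or even tie), \emph{and} that no claw that spans the boundary between a variable gadget and a clause gadget can be killed more cheaply than by respecting these patterns. I would handle this by a careful case analysis of where a hypothetical cheaper deletion set can differ from the canonical one, using the rigidity of the clique-ring structure; the key lemma will be that deleting any single edge either destroys exactly one of the forced claws or destroys none, so the budget is tight and every deletion must be ``useful.'' Once rigidity is established, correctness is routine: given a satisfying assignment, take the corresponding pattern in every variable gadget and observe all clause gadgets are satisfied, giving an HDS of size exactly $k=c_{\text{var}}\cdot n + (\text{clause costs})$; conversely, an HDS of size $\le k$ must be canonical in every variable gadget by tightness, the induced assignment is well-defined, and it satisfies every clause because otherwise the corresponding clause gadget retains a claw. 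Finally, since $G$ is $\{K_4,\text{diamond}\}$-free by the gluing-at-cut-vertices construction, and no claw-free deletion set of size $\le k$ can create a diamond (again because every edge lies in a triangle glued at a cut vertex, so its removal cannot merge two triangles into a diamond), ``claw-free'' and ``\clawdiamond-free'' agree on $G-F$ for all $F$ with $|F|\le k$; composing with a trivial identity observation this same $(G,k)$ also witnesses Theorem~\ref{thm:lower-bound}, and Corollary~\ref{cor:lower-bound} follows from the Sparsification Lemma since $|V(G)|,k=\Oh(n+m)$.
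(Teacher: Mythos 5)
Your overall strategy coincides with the paper's: cyclic variable gadgets that must be broken in one of two alternating ways, constant-size clause gadgets tapping into three variable gadgets, a budget equal to the total forced cost, and $\{K_4,\mbox{diamond}\}$-freeness maintained so that claw-free and \clawdiamond-free coincide. However, what you have written is a plan rather than a proof, and the parts you defer are exactly the parts that carry the difficulty. Most importantly, the clause gadget is never constructed: you state the property it must have (``one of the three literal interfaces must already be satisfied, on pain of an extra un-budgeted claw'') but give no graph realizing it. In the paper this is a specific $19$-vertex graph built from $9$ triangles glued at cut vertices, and the corresponding tightness claim (any HDS deletes at least $7$ edges outside the three literal edges, via $7$ explicit edge-disjoint claws, and deleting exactly $7$ forces one literal edge to survive) is a genuinely delicate case analysis — it is the heart of the reduction and cannot be taken on faith.

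There is also a concrete flaw in your variable gadget as described. A ring of triangles or of edges glued at single vertices contains no induced claw at all: a hub shared by two triangles has four neighbours inducing two disjoint edges (independence number $2$ in the neighbourhood), and a plain cycle has only degree-$2$ vertices. So the gadget as you sketch it forces nothing. The paper's fix is to attach a pendant vertex to every cycle vertex, making each cycle vertex the centre of a claw whose leaves are its two cycle neighbours and its pendant; then a counting argument (each cycle vertex centres a distinct claw, each deleted edge hits at most two of them) forces at least half the cycle edges to go and, in the tight case, forces exactly the even or exactly the odd edges. Finally, the converse direction — verifying that the canonical deletion set leaves no claw — is not ``routine'': the paper does an explicit case analysis over every possible claw centre ($u$-, $v$-, $t$-, and $\tilde{t}$-vertices), and the degree bound $\Delta(G)=6$ is achieved only because of the precise way the $v$-vertices and the cycle interface are wired, not ``by construction'' in general. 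Until the gadgets are written down and these claims verified, the reduction does not exist.
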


Theorem~\ref{thm:lower-bound} follows, since an instance of \textsc{Claw-free Edge Deletion} with no diamond or $K_{4}$ as an induced subgraph is a yes-instance if and only if it is a yes-instance of \cdedgedeletion (the solution sets are even identical, because deleting an edge from a $\{K_4,\mbox{diamond}\}$-free graph cannot create a diamond). Note that since $\Delta(G)=6$, both in Theorem~\ref{thm:claw_ETH} and in Theorem~\ref{thm:lower-bound}, we have that $|E(G)|\leq 3|V(G)|$ and under ETH there is even no subexponential-time algorithm in terms of the number of edges of the graph.
We remark that the original NP-hardness reduction for \textsc{Claw-free Edge Deletion} of Yannakakis~\cite{Yannakakis81}\footnote{Yannakakis~\cite{Yannakakis81} proves NP-hardness of \textsc{Line Graph Edge Deletion}, but the same reduction works also for \textsc{Claw-free Edge Deletion}.} actually implies that this problem cannot be solved in subexponential parameterized time; however, the constructed graph contains a lot of diamonds and the reduction cannot be easily adapted to our setting.

The remainder of this section is devoted to the proof of Theorem~\ref{thm:claw_ETH}.

Let $\phi$ be an instance of \textsc{3Sat} -- a formula in conjunctive normal form whose every clause has exactly three literals with three different variables (formulas with clauses of at most three, possibly equal literals can easily be transformed to this form via standard reductions; see e.g.~\cite{fomin2012subexponential}).
Let $\cV(\phi)$ be the set of variables of $\phi$; write $c\in \phi$ for clauses $c$ of $\phi$ and write $x\in \cV(c)$ for the three variables occurring in the clause. We write $\bot,\top$ for the \texttt{false},\texttt{true} values assigned to the variables, respectively.

Let us begin by defining the clause gadget for each clause $c\in\phi$.
We construct the following graph $G_c$ (see Figure~\ref{fig:gadgets}).
$G_c$ has 19 vertices: a central vertex $u_c$ and
vertices $v_c^x$, ${\tilde{v}}_c^x$, $w_c^x$, ${\tilde{w}}_c^x$, $t_c^x$, ${\tilde{t}}_c^x$ for each $x\in \cV(c)$;
and 27 edges arranged in 9 triangles:
$\{u_c,v_c^x,{\tilde{v}}_c^x\}$,
$\{v_c^x,w_c^x,{\tilde{w}}_c^x\}$,
$\{v_c^x,t_c^x,{\tilde{t}}_c^x\}$ for each $x\in \cV(c)$.

Not satisfying a literal of variable $x\in \cV(c)$ will correspond to deleting the edge $t_c^x{\tilde{t}}_c^x$ (thick blue edges on Figure~\ref{fig:gadgets}).
We show that in a minimum HDS at least one thick edge shall not be deleted.

\begin{myclaim}
\label{claim:clause_gadget}
Let $F$ be an HDS of $G_c$.
Then $F$ deletes at least 7 edges in $E(G_c)\setminus\{t_c^x {\tilde{t}_c}^x  \mid x\in \cV(c)\}$.
Furthermore, if it deletes exactly 7 edges in this set,
then $F$ does not delete $t_c^x {\tilde{t}}_c^x$ for some $x\in \cV(c)$.
\end{myclaim}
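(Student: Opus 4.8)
The plan is to analyze the structure of $G_c$, which consists of $9$ triangles glued together in a tree-like fashion: the three ``outer'' triangles $\{u_c,v_c^x,\tilde v_c^x\}$ share the common vertex $u_c$, and for each variable $x\in\cV(c)$ the triangle $\{v_c^x,w_c^x,\tilde w_c^x\}$ and the ``thick'' triangle $\{v_c^x,t_c^x,\tilde t_c^x\}$ both hang off $v_c^x$. The key local fact I would establish first is that in a $\clawdiamond$-free graph (in fact, already in a claw-free graph) a vertex of degree $3$ whose three neighbors induce a single edge plus an isolated vertex, or three independent vertices, forces deletions; more to the point, at $u_c$ the three pairs $\{v_c^x,\tilde v_c^x\}$ are pairwise non-adjacent, so $u_c$ together with one vertex from each of two such pairs can only avoid being a claw center if one leg is removed. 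I would phrase this as: for each of the ``branch'' vertices $u_c$, $v_c^{x_1}$, $v_c^{x_2}$, $v_c^{x_3}$ (four vertices in total), the deletion set $F$ must contain at least one edge incident to that vertex that ``separates'' two of its three attached triangles.

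Concretely, first I would observe that $u_c$ has its three neighbor-pairs lying in pairwise non-adjacent pairs, so to kill all claws centered at $u_c$ the set $F$ must delete at least two of the six edges $u_c v_c^x, u_c \tilde v_c^x$ (picking fewer than two would leave two of the three triangle-pairs still fully attached at $u_c$, yielding an induced claw). By the same argument applied at each $v_c^x$, whose three ``sides'' are the edge toward $u_c$'s triangle (via $u_c$ or $\tilde v_c^x$), the $w$-triangle, and the $t$-triangle, $F$ must delete at least two edges incident to $v_c^x$ that separate these three sides — and crucially, if $F$ does \emph{not} delete $t_c^x\tilde t_c^x$, then these two separating deletions at $v_c^x$ must come from the remaining four non-thick edges incident to $v_c^x$. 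Summing these lower bounds carefully, accounting for which edges can be shared between the requirements at $u_c$ and at the $v_c^x$'s (an edge like $u_c v_c^x$ is incident to both), I would get that $F$ restricted to $E(G_c)\setminus\{t_c^x\tilde t_c^x : x\in\cV(c)\}$ has size at least $7$; and if it is exactly $7$, then the ``savings'' can only be realized by having at least one thick edge $t_c^x\tilde t_c^x$ \emph{also} deleted, forcing the complementary statement that if no more than $7$ non-thick edges are used then some thick edge is \emph{not} deleted — i.e., one must re-read the bound as: deleting all three thick edges would still require $\ge 7$ (or more) non-thick deletions in a way that makes ``$7$ non-thick'' incompatible with ``all three thick deleted,'' hence exactly $7$ non-thick forces at least one thick edge kept.

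The cleanest way to make the counting rigorous is probably a discharging / direct case analysis on the four branch vertices: set up, for each branch vertex $b$, a ``demand'' $d(b)\ge 2$ equal to the minimum number of $F$-edges incident to $b$ needed so that no two of $b$'s attached triangles remain simultaneously fully present, then note each edge of $G_c$ is incident to at most one branch vertex among $\{v_c^{x}\}_{x}$ and at most one among $\{u_c\}$ — actually each non-pendant edge $u_c v_c^x$ is incident to both $u_c$ and $v_c^x$ — so I would charge each edge of $F$ to at most two branch vertices and solve the resulting inequality. The arithmetic has to be done with care to land exactly on $7$, and to see that the case where exactly $7$ non-thick edges are deleted forces a thick edge to survive: if all three thick edges were deleted, then at each $v_c^x$ the $t$-triangle is already broken, reducing the demand at $v_c^x$ from $2$ to $1$, but then the global count of non-thick deletions needed is still $\ge 2$ (at $u_c$) plus $\ge 1$ at each $v_c^x$ with sharing — and I'd check this still exceeds or ties $7$ only in configurations that contradict ``exactly $7$ and all three thick deleted,'' delivering the ``furthermore'' clause.

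The main obstacle I anticipate is bookkeeping the shared edges correctly so that the bound is tight at $7$ rather than off by one in either direction, and handling the diamond constraint (not just claws): deleting, say, $u_c v_c^x$ could in principle create a diamond elsewhere, so I need to verify that the ``demand'' arguments only ever appeal to \emph{induced} claws/diamonds that are genuinely present in $G-F$, and that no clever $F$ circumvents a demand by introducing a different obstruction that is cheaper to fix. I expect this to be manageable because $G_c$ is very sparse and locally tree-like (every edge is in exactly one triangle, so $G_c$ is itself $\clawdiamond$-free and in particular diamond-free, and small deletions cannot create a $K_4$), so the only obstructions in $G-F$ are induced claws, and these are controlled entirely by the branch-vertex analysis above.
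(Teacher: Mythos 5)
Your overall strategy (local ``demand'' lower bounds at the four branch vertices $u_c, v_c^{x_1}, v_c^{x_2}, v_c^{x_3}$, then careful accounting of the shared edges $u_cv_c^x$) is a genuinely different route from the paper, which instead exhibits seven edge-disjoint induced claws avoiding the thick edges (one centered at $u_c$ with leaves $\tilde v_c^x$, and two centered at each $v_c^x$) and then does a short case analysis using the fact that exactly one edge of each packed claw is deleted. Your route can be made to work, and with the correct local lemma the arithmetic does land on $7$: each branch vertex must have one of its three neighbor-\emph{pairs} fully detached (two incident deletions forming a matched pair), the three pairs at the $v_c^x$'s are disjoint from one another, and only the single edge $u_cv_c^{x_0}$ can be shared with the pair detached at $u_c$, giving $2+6-1=7$ non-thick deletions. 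However, two points in your write-up are wrong as stated and the second one breaks the ``furthermore'' clause.

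First, your demand condition is not the right necessary condition. Two attached triangles being ``simultaneously fully present'' at a branch vertex $b$ does not by itself create a claw: the surviving neighborhood of $b$ is a subgraph of a perfect matching on three pairs, and what claw-freeness forces is that its independence number be at most $2$. Since each pair contributes at least $1$ unless both its spokes to $b$ are deleted (and contributes $2$ if both spokes survive but the pair's internal edge is deleted), the correct conclusion is that at least one pair must be \emph{fully} detached from $b$ and no pair may survive with its internal edge removed. You need this sharper form both to justify $d(b)\ge 2$ and to control which edges realize the demand. Second, and more seriously, your claim that deleting all three thick edges ``reduces the demand at $v_c^x$ from $2$ to $1$'' is backwards. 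Deleting $t_c^x\tilde t_c^x$ makes $\{t_c^x,\tilde t_c^x\}$ an independent pair in $N(v_c^x)$, so it \emph{raises} the burden: the only way to satisfy the constraint at $v_c^x$ with just $2$ incident deletions is then to delete both spokes $v_c^xt_c^x$ and $v_c^x\tilde t_c^x$, and any other choice costs at least $3$. This is precisely the mechanism that yields the contradiction: if all thick edges are deleted and every $v_c^x$ spends only $2$, then none of the edges $u_cv_c^x$ is deleted, so the pair detached at $u_c$ contributes $2$ fresh edges and the total is $2+6=8$; if some $v_c^x$ spends $3$ or more, the total is again at least $1+3+2+2=8$. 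Either way $|F\setminus\{t_c^x\tilde t_c^x\}|\ge 8 > 7$. With your stated direction of the inequality the count drops to about $5$ and no contradiction with ``exactly $7$'' is obtained, so as written the second half of the claim is not proved.
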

\begin{proof}
We drop the subscript $c$ of all vertices for clarity.
To see that $F$ deletes at least 7 edges in  $E(G_c)\setminus\{t^x {\tilde{t}}^x \mid x\in \cV(c)\}$ it suffices to notice that this set contains all edges of 7 edge-disjoint induced claw subgraphs of $G_c$:
$\{u,{\tilde{v}}^x \mid x\in \cV(c)\}$ and
$\{v^x,u,w^x,t^x\}$,
$\{v^x,{\tilde{v}}^x,{\tilde{w}}^x,{\tilde{t}}^x\}$ for $x\in \cV(c)$.

Suppose now that $F$ deletes exactly 7 edges in the above set and deletes all edges $t^x {\tilde{t}}^x$ for $x\in \cV(c)$.
Then it must delete exactly one edge of each of the 7 claws above.
In particular, for every $x \in \cV(c)$ it does not delete both $v^x u$ and $v^x w^x$, which belong to the claw $\{v^x,u,w^x,t^x\}$.
Then, it needs to delete $v^x t^x$ or $v^x {\tilde{t}}^x$, as these edges form a claw both with $v^xu$ and with $v^x w^x$.
Consider now claws $\{v^x,u,w^x,t^x\}$ and $\{v^x, {\tilde{v}}^x , {\tilde{w}}^x, {\tilde{t}}^x\}$, and observe that in at least one of them the (unique) deleted edge is $v^x t^x$ or $v^x {\tilde{t}}^x$. This implies that $F\cap \{v^x u,v^x w^x\}=\emptyset$ or $F\cap \{ v^x {\tilde{v}}^x,v^x {\tilde{w}}^x\}=\emptyset$.
However, in both cases the two undeleted edges ($\{v^x u,v^x w^x\}$ or $\{ v^x {\tilde{v}}^x,v^x {\tilde{w}}^x\}$) form a claw both with $v^x t^x$ and with $v^x {\tilde{t}}^x$,
and therefore $F$ has to delete both $v^x t^x$ and $v^x {\tilde{t}}^x$, for every $x\in \cV(c)$.
But as $|F\setminus \{t^x {\tilde{t}}^x \mid x\in \cV(c)\}|=7$, it follows that $F$ deletes at most one edge incident to $u$, leaving a claw with center $u$ in $G_c-F$: the one with leaves $\{v^x \mid x\in \cV(c)\}$ or the one with leaves $\{{\tilde{v}}^x \mid x\in \cV(c)\}$.
\cqed\maybeqed\end{proof}

\iflncs{\input{figureGadgets-wg}}{\begin{figure}[t]
\centering
\begin{tikzpicture}
\begin{scope}[yscale=-1,rotate=-60]
\node[v] (u)   at (0:0) {$u$};
\node[v] (v1)  at (0:2) {$v^x$};
\node[v] (v1p) at (60:2) {${\tilde{v}}^x$};
\node[v] (v2)  at (120:2) {$v^y$};
\node[v] (v2p) at (180:2) {${\tilde{v}}^y$};
\node[v] (v3)  at (-120:2) {$v^z$};
\node[v] (v3p) at (-60:2) {${\tilde{v}}^z$};
\draw (u)--(v1)--(v1p)--(u)--(v2)--(v2p)--(u)--(v3)--(v3p)--(u);

\begin{scope}[shift={(0:2)}]
	\node[v] (w1)  at (-60:1)   {$w^x$};
	\node[v] (w1p) at (-120:1) {${\tilde{w}}^x$};
	\draw (v1)--(w1)--(w1p)--(v1);
	\node[v] (t1)  at (45:2)  {$t^x$};
	\node[v] (t1p) at (15:2) {${\tilde{t}}^x$};
	\draw (v1)--(t1) (t1p)--(v1);
	\draw[E] (t1)--(t1p);
\end{scope}

\begin{scope}[shift={(120:2)}]
	\node[v] (w2)  at (60:1) {$w^y$};
	\node[v] (w2p) at (0:1) {${\tilde{w}}^y$};
	\draw (v2)--(w2)--(w2p)--(v2);
	\node[v] (t2)  at (165:2)  {$t^y$};
	\node[v] (t2p) at (135:2) {${\tilde{t}}^y$};
	\draw (v2)--(t2)  (t2p)--(v2);
	\draw[E] (t2)--(t2p);
\end{scope}

\begin{scope}[shift={(-120:2)}]
	\node[v] (w3)  at (180:1)  {$w^z$};
	\node[v] (w3p) at (120:1) {${\tilde{w}}^z$};
	\draw (v3)--(w3)--(w3p)--(v3);
	\node[v] (t3)  at (-75:2)   {$t^z$};
	\node[v] (t3p) at (-105:2) {${\tilde{t}}^z$};
	\draw (v3)--(t3)  (t3p)--(v3);
	\draw[E] (t3)--(t3p);
\end{scope}

\end{scope}

\begin{scope}[shift={(9,0)}]
	\node[v] (t) at (180:3) {$t_\top$};
	\node[v] (s) at (180:2) {$s_\top$};
	\draw (t)--(s);
	\node[v] (tp) at (0:3) {$t_\bot$};
	\node[v] (sp) at (0:2) {$s_\bot$};
	\draw (tp)--(sp);
	\node[v] (t1) at (155:3) {$t_{c_1}$};
	\node[v] (s1) at (155:2) {$s_{c_1}$};
	\draw (t1)--(s1);
	\node[v] (t1p) at (130:3) {${\tilde{t}}_{c_1}$};
	\node[v] (s1p) at (130:2) {${\tilde{s}}_{c_1}$};
	\draw (t1p)--(s1p);
	\node[v] (t2) at (102.5:3) {$t_{c_2}$};
	\node[v] (s2) at (102.5:2) {$s_{c_2}$};
	\draw (t2)--(s2);
	\node[v] (t2p) at (77.5:3) {$t_{c_2}$};
	\node[v] (s2p) at (77.5:2) {${\tilde{s}}_{c_2}$};
	\draw (t2p)--(s2p);
	\node[v] (t3) at (50:3) {$t_{c_3}$};
	\node[v] (s3) at (50:2) {$s_{c_3}$};
	\draw (t3)--(s3);
	\node[v] (t3p) at (25:3) {${\tilde{t}}_{c_3}$};
	\node[v] (s3p) at (25:2) {${\tilde{s}}_{c_3}$};
	\draw (t3p)--(s3p);
	\draw (t)--(t1)  (t1p)--(t2)  (t2p)--(t3)  (t3p)--(tp);
	\draw [E] (t1)--(t1p) (t2)--(t2p) (t3)--(t3p);

	\node[v] (t4) at (-36:3) {$t_{d_1}$};
	\node[v] (s4) at (-36:2) {$s_{d_1}$};
	\draw (t4)--(s4);
	\node[v] (t4p) at (-72:3) {${\tilde{t}}_{d_1}$};
	\node[v] (s4p) at (-72:2) {${\tilde{s}}_{d_1}$};
	\draw (t4p)--(s4p);
	\node[v] (t5) at (-108:3) {$t_{d_2}$};
	\node[v] (s5) at (-108:2) {$s_{d_2}$};
	\draw (t5)--(s5);
	\node[v] (t5p) at (-144:3) {${\tilde{t}}_{d_2}$};
	\node[v] (s5p) at (-144:2) {${\tilde{s}}_{d_2}$};
	\draw (t5p)--(s5p);
	\draw (tp)--(t4) (t4p)--(t5) (t5p)--(t);
	\draw [E] (t4)--(t4p) (t5)--(t5p);
\end{scope}

\end{tikzpicture}
\caption{The clause gadget $G_c$ (left) and variable gadget $G^x$ (right) used in the reduction.}
\label{fig:gadgets}
\end{figure}
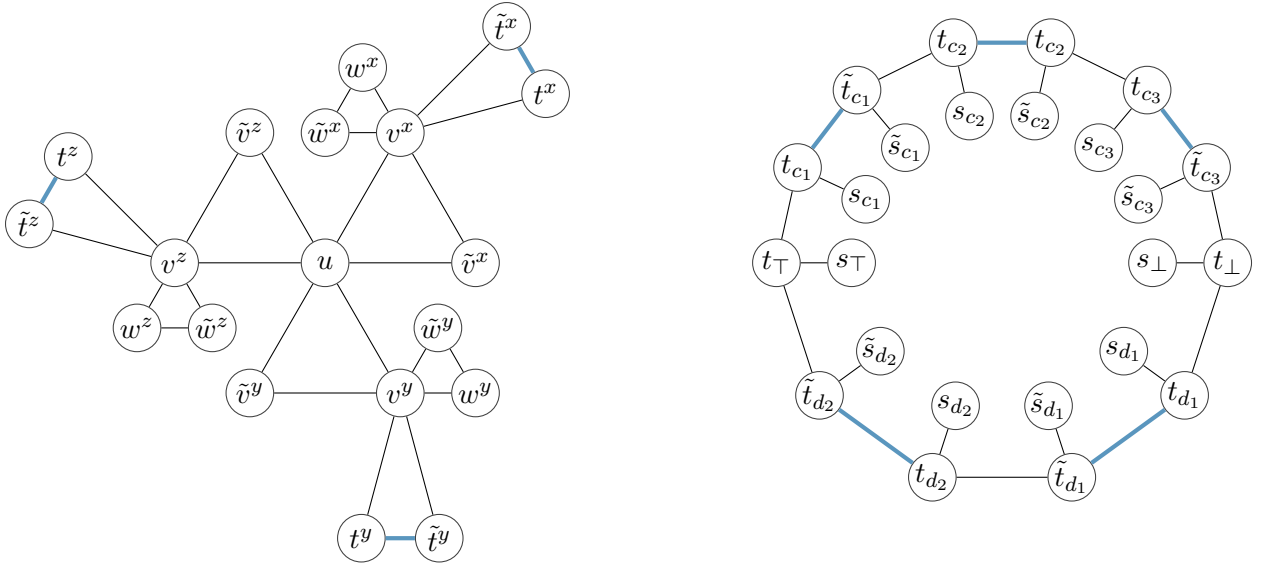

}

Let us now define the variable gadget, as a part of the final graph $G$ constructed for $\phi$.
Take $G$ to be the disjoint union of $G_c$ for all clauses $c$ of $\phi$.
For each variable $x$ of $\phi$, 
add four vertices $t_{\top}^x,t_{\bot}^x,s_{\top}^x,s_{\bot}^x$ to $G$
and then for each $t$-vertex (with tilde or not) in a clause gadget add a corresponding $s$-vertex to $G$.
That is,
\iflncs{%
\begin{align*}
V(G) &= \{u_c \mid c\in \phi\} \cup \{v_c^x,{\tilde{v}}_c^x,w_c^x,{\tilde{w}}_c^x,t_c^x,{\tilde{t}}_c^x,s_c^x,{\tilde{s}}_c^x \mid x\in \cV(c), c\in\phi\} \\
     &\qquad \cup \{t_\top^x,t_\bot^x,s_\top^x,s_\bot^x \mid x\in\cV(\phi)\}.
\end{align*}%
}{%
$$V(G) = \{u_c \mid c\in \phi\} \cup \{v_c^x,{\tilde{v}}_c^x,w_c^x,{\tilde{w}}_c^x,t_c^x,{\tilde{t}}_c^x,s_c^x,{\tilde{s}}_c^x \mid x\in \cV(c), c\in\phi\}
\cup \{t_\top^x,t_\bot^x,s_\top^x,s_\bot^x \mid x\in\cV(\phi)\}.$$}

Let $c_1,c_2,\dots,c_{p(x)}$ be the clauses in which $x$ occurs positively, and
let $d_1,d_2,\dots,d_{q(x)}$ be the clauses in which $x$ occurs negatively.
To the edges defined before (between vertices with the same subscript $c$) we add the following ones (between vertices with the same superscript $x$):
an edge between every $t$-vertex and its $s$-counterpart and
an edge between every two consecutive vertices in the following cycle of length $1+2p(x)+1+2q(x)$:
\iflncs{%
\begin{align*}
&t_\top^x,\quad  t^x_{c_1},{\tilde{t}}^x_{c_1},\ t^x_{c_2},{\tilde{t}}^x_{c_2},\ \dots,\ t^x_{c_{p(x)}},{\tilde{t}}^x_{c_{p(x)}},\\
&\qquad t_\bot^x,\quad t^x_{d_1},{\tilde{t}}^x_{d_1},\ t^x_{d_2},{\tilde{t}}^x_{d_2},\ \dots,\ t^x_{d_{q(x)}},{\tilde{t}}^x_{d_{q(x)}},\quad t_\top^x.
\end{align*}%
}{%
$$t_\top^x,\quad  t^x_{c_1},{\tilde{t}}^x_{c_1},\ t^x_{c_2},{\tilde{t}}^x_{c_2},\ \dots,\ t^x_{c_{p(x)}},{\tilde{t}}^x_{c_{p(x)}},\quad t_\bot^x,\quad t^x_{d_1},{\tilde{t}}^x_{d_1},\ t^x_{d_2},{\tilde{t}}^x_{d_2},\ \dots,\ t^x_{d_{q(x)}},{\tilde{t}}^x_{d_{q(x)}},\quad t_\top^x.$$}
The variable gadget $G^x$ is the subgraph of $G$ induced by the $2(1+2p(x)+1+2q(x))$ vertices named $t$ or $s$ with superscript $x$ (so $G^x$ is a cycle on the $t$-vertices, each with a pendant $s$-vertex attached).

Define $E_\bot^x, E_\top^x$ to be the set of even and odd edges on the above cycle, respectively (so that for all $i$, $t^x_{c_i} {\tilde{t}}^x_{c_i}\in E_\bot^x$ and $t^x_{d_i} {\tilde{t}}^x_{d_i} \in E_\top^x$).
The crucial property is that for any clause $c$ containing $x$, assigning $x\mapsto b$ satisfies the clause $c$ if and only if $t_c^x {\tilde{t}}_c^x$ is \emph{not} in $E_b^x$.
We show that in a minimum HDS exactly one of $E_\bot^x,E_\top^x$ is deleted.

\begin{myclaim}
\label{claim:variable_gadget}
Let $F\subseteq E(G^x)$ be an HDS of $G^x$.
Then $F$ deletes at least $p(x)+q(x)+1$ edges of $G^x$.
Furthermore, if it deletes exactly that many,
then either $F = E_\bot^x$ or $F = E_\top^x$.
\end{myclaim}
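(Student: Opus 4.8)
The plan is to analyze the structure $G^x$: it is a cycle $C$ on the $2(p(x)+q(x)+2)$ vertices named $t$ (with superscript $x$), together with a pendant $s$-vertex attached to each of them. Every $t$-vertex $v$ on the cycle has degree $3$ in $G^x$: two cycle-neighbors and its pendant $s$-counterpart $s(v)$. Since the two cycle-neighbors of $v$ are non-adjacent (the cycle has length at least $4$, as $p(x)+q(x)\geq 1$), and each of them is non-adjacent to $s(v)$, the closed neighborhood $N_{G^x}[v]$ together with any two of its three neighbors forms an induced claw centered at $v$. Hence for \emph{every} $t$-vertex $v$, at least one of the three edges at $v$ must be deleted by $F$; that is, $F$ is an (edge) dominating set when restricted to the $t$-vertices, in the sense that every $t$-vertex is incident to an edge of $F$.

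The first step is the lower bound. Let $C$ denote the cycle on the $t$-vertices, of length $\ell := 2(p(x)+q(x)+2)$. Consider the pendant edges $t s(t)$: each is incident to exactly one $t$-vertex, so a pendant edge in $F$ ``covers'' only one $t$-vertex, whereas a cycle edge in $F$ covers two. Since all $\ell$ many $t$-vertices must be covered, if $F$ contains $a$ cycle edges and $b$ pendant edges then $2a + b \geq \ell$, so $|F| = a + b \geq a + (\ell - 2a) = \ell - a$. This is not yet strong enough on its own, so instead I would argue directly on the cycle: the set of cycle edges in $F$ must form an edge cover of $C$ when we additionally delete the pendant vertices not covered — more cleanly, partition $C$ greedily into $\lceil \ell/2 \rceil = p(x)+q(x)+2$ edge-disjoint paths on $3$ vertices (a $P_3$) plus possibly leftover; actually, since $\ell$ is even, one can exhibit $\ell/2 = p(x)+q(x)+2$ vertex-disjoint induced claws in $G^x$ by taking, for alternate cycle-vertices $v$, the claw on $v$ and its three $G^x$-neighbors. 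Wait — consecutive such claws would share a cycle vertex, so instead take the $\ell/2$ claws centered at the even-indexed cycle vertices: these are vertex-disjoint, hence edge-disjoint, giving $|F| \geq \ell/2 = p(x)+q(x)+2$. That overshoots the claimed bound by one, so the honest route is: take the $\ell/2$ claws centered at even-indexed vertices; they are edge-disjoint, so $F$ deletes $\geq \ell/2 = p(x)+q(x)+2$ edges — but this contradicts the claim's ``$\geq p(x)+q(x)+1$''. So I must be more careful: the correct lower-bound argument should produce $p(x)+q(x)+1$ edge-disjoint claws, e.g.\ by leaving out one well-chosen claw, OR the claimed bound is simply not tight and $p(x)+q(x)+1$ is what the authors need downstream. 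I would therefore prove the clean bound $|F|\geq \ell/2 = p(x)+q(x)+2$ if true, and otherwise fall back to: every $t$-vertex is $F$-covered, and a short counting/parity argument on $C$ (an edge cover of a cycle $C_\ell$ has size $\geq \lceil \ell/2\rceil$, with pendant edges strictly worse) gives the stated bound.

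The second step is the characterization of extremal solutions. Suppose $|F| = p(x)+q(x)+1$ (or whatever the exact extremal value turns out to be). First, $F$ uses no pendant edge: a pendant edge covers only one $t$-vertex, so replacing it or the tightness of the counting forces $F \subseteq E(C)$, and $F$ is then a minimum edge cover of the cycle $C_\ell$. A minimum edge cover of an even cycle $C_\ell$ has exactly $\ell/2$ edges and is precisely a perfect matching of $C_\ell$; there are exactly two perfect matchings of an even cycle, namely the ``even'' edges and the ``odd'' edges. By the definition of $E_\bot^x$ and $E_\top^x$ as the two alternating edge-classes of this very cycle, these two perfect matchings are exactly $E_\bot^x$ and $E_\top^x$. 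Finally I must check that $E_\bot^x$ and $E_\top^x$ really \emph{are} HDSs of $G^x$: after deleting an alternating class, each $t$-vertex has its cycle-degree dropped to $1$, so its remaining $G^x$-neighborhood is $\{$one cycle-neighbor, $s(v)\}$, which is an independent set of size $2$ — no claw centered there — and $s$-vertices have degree $\leq 1$; since $G^x$ has maximum degree $\leq 2$ after the deletion it is claw-free (indeed diamond-free).

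The main obstacle I anticipate is pinning down the exact extremal count and the matching argument simultaneously: the cycle has even length, its minimum edge cover size is $\ell/2 = p(x)+q(x)+2$, but the claim states $p(x)+q(x)+1$, so either the bound accounts for the fact that one claw can be ``shared'' via a pendant edge (giving a saving of one) or the statement is using a slightly different counting; resolving this discrepancy — and in particular showing that the \emph{only} way to achieve the extremal value is a perfect matching of $C$, ruling out solutions that mix a pendant edge with cycle edges — is the delicate part. Everything else (the claw-at-every-$t$-vertex observation, the two-perfect-matchings fact for even cycles, and the verification that the alternating classes are valid HDSs) is routine.
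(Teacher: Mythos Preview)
Your approach is exactly the paper's approach, and your confusion stems entirely from a miscount of the cycle length. The cycle is
\[
t_\top^x,\; t^x_{c_1},\tilde t^x_{c_1},\dots,t^x_{c_{p(x)}},\tilde t^x_{c_{p(x)}},\; t_\bot^x,\; t^x_{d_1},\tilde t^x_{d_1},\dots,t^x_{d_{q(x)}},\tilde t^x_{d_{q(x)}},
\]
which has $1+2p(x)+1+2q(x)=2(p(x)+q(x)+1)$ vertices, not $2(p(x)+q(x)+2)$. With the correct $\ell=2(p(x)+q(x)+1)$, your ``claw at every $t$-vertex'' argument gives $\ell$ claws, each edge lies in at most two of them (cycle edges in exactly two, pendant edges in one), hence $|F|\geq \ell/2 = p(x)+q(x)+1$, matching the claim precisely. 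There is no off-by-one to resolve and no ``saving via a pendant edge'' to worry about.

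For the equality case, your argument is again the paper's: if $|F|=\ell/2$ then every edge of $F$ lies in two of the claws (so $F$ contains only cycle edges) and no claw is hit twice (so $F$ contains no two incident cycle edges); thus $F$ is a perfect matching of the even cycle, i.e., $E_\bot^x$ or $E_\top^x$. The verification that these are HDSs is correct as you wrote it. In short, once you fix the count of cycle vertices, your proof is complete and coincides with the paper's.
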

\begin{proof}
To show that $F$ deletes at least $p(x)+q(x)+1$ edges of $G^x$ it suffices to notice that each of the $2p(x)+2q(x)+2$ vertices of the cycle is a center of a different claw and each edge deletion hits at most two such claws.

If $F$ deletes exactly that many edges in $G^x$, then each edge of $F$ must be contained in exactly two such claws (hence only edges of the cycle get deleted), and no claw may be hit by two edges (hence no two incident edges get deleted). This means either exactly the even edges $E_\bot^x$ or exactly the odd edges $E_\top^x$ of the cycle are deleted by $F$.
\cqed\maybeqed\end{proof}

We let $k=7m + \sum_{x\in\cV(\phi)}(p(x)+q(x)+1) = 7 m + 3 m + n$, and it is straightforward to verify that $G$ is $\{K_4,\mbox{diamond}\}$-free and $\Delta(G)=6$. Thus, the following lemma encapsulates the final check needed to conclude the proof of Theorem~\ref{thm:claw_ETH}.
\begin{lemma}
$G$ has an HDS of size at most $k$ if and only if $\phi$ is satisfiable.
\end{lemma}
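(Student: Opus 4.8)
The plan is to prove both directions of the equivalence, using the ``crucial property'' highlighted just before the statement: for a clause $c$ and a variable $x\in\cV(c)$, the partial assignment $x\mapsto b$ satisfies $c$ if and only if $t_c^x{\tilde t}_c^x\notin E_b^x$. Two structural observations will do most of the work. First, $E(G)$ is the disjoint union of the sets $E(G_c)\setminus\{t_c^x{\tilde t}_c^x:x\in\cV(c)\}$ ranging over clauses $c$ and the sets $E(G^x)$ ranging over variables $x$; in particular the bottleneck edges $t_c^x{\tilde t}_c^x$ are accounted for on the variable side. Second, each $G_c$ and each $G^x$ is an induced subgraph of $G$ (no extra edges appear among the vertices of one gadget), so whenever $G-F$ is claw-free, $F\cap E(G_c)$ is an HDS of $G_c$ and $F\cap E(G^x)$ is an HDS of $G^x$.

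For the ``if'' direction, fix a satisfying assignment $\alpha\colon\cV(\phi)\to\{\bot,\top\}$. For every variable $x$ we put the set $E_{\alpha(x)}^x$ into $F$; by the crucial property the bottleneck edges of $E(G^x)$ deleted this way are exactly the $t_c^x{\tilde t}_c^x$ for occurrences of $x$ in clauses $c$ not satisfied by $x\mapsto\alpha(x)$, so each clause $c$ admits a variable $x_c\in\cV(c)$ with $t_c^{x_c}{\tilde t}_c^{x_c}\notin F$. For each clause $c$ we further add to $F$ the seven edges $v_c^{x_c}u_c$, $v_c^{x_c}{\tilde v}_c^{x_c}$, $u_c{\tilde v}_c^{x_c}$, and $v_c^{y}t_c^{y}$, $v_c^{y}{\tilde t}_c^{y}$ for the two variables $y\in\cV(c)\setminus\{x_c\}$. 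Then $|F|=\sum_x(p(x)+q(x)+1)+7m=(3m+n)+7m=k$. It remains to check that $G-F$ is claw-free (it stays diamond- and $K_4$-free automatically). This is a bounded, local verification: each $G^x-E_{\alpha(x)}^x$ is a disjoint union of paths (the $t$-cycle becomes a perfect matching, each $t$-vertex keeping only its pendant $s$-vertex); in each $G_c$ the modified neighbourhood of every $v$-vertex and of $u_c$ splits into at most two cliques, hence contains no independent triple; and the only vertices lying in two gadgets, the $t_c^x$ and ${\tilde t}_c^x$, end up either with degree at most two or with two of their (at most three) remaining neighbours adjacent.

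For the ``only if'' direction, let $F$ be an HDS of $G$ with $|F|\le k$. By the second observation, each $F\cap E(G_c)$ is an HDS of $G_c$ and each $F\cap E(G^x)$ is an HDS of $G^x$; Claim~\ref{claim:clause_gadget} gives $|F\cap(E(G_c)\setminus\{t_c^x{\tilde t}_c^x:x\in\cV(c)\})|\ge 7$ and Claim~\ref{claim:variable_gadget} gives $|F\cap E(G^x)|\ge p(x)+q(x)+1$. Since these sets are pairwise disjoint and cover $E(G)$ (first observation), summing yields $|F|\ge 7m+\sum_x(p(x)+q(x)+1)=k$, so equality holds throughout. By the ``furthermore'' part of Claim~\ref{claim:variable_gadget}, $F\cap E(G^x)$ equals $E_\bot^x$ or $E_\top^x$; let $\alpha(x)\in\{\bot,\top\}$ be the corresponding value. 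By the ``furthermore'' part of Claim~\ref{claim:clause_gadget} applied to $F\cap E(G_c)$ (which deletes exactly seven edges outside the bottleneck set), there is a variable $x\in\cV(c)$ with $t_c^x{\tilde t}_c^x\notin F$, hence $t_c^x{\tilde t}_c^x\notin E_{\alpha(x)}^x$, and by the crucial property $x\mapsto\alpha(x)$ satisfies $c$. As $c$ was arbitrary, $\alpha$ satisfies $\phi$.

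The main obstacle is the claw-freeness check in the ``if'' direction, and within it the handling of the shared $t$-vertices. A vertex $t_c^x$ sits both in the clause triangle $\{v_c^x,t_c^x,{\tilde t}_c^x\}$ and on the variable cycle of $G^x$, so if the variable gadget deletes the bottleneck edge $t_c^x{\tilde t}_c^x$ then $t_c^x$ would still see $v_c^x$, its pendant $s_c^x$, and a surviving cycle neighbour, forming an induced claw — unless the clause part of $F$ also removes $v_c^x t_c^x$. This is precisely why the seven clause edges are chosen relative to $x_c$: for the two variables $y\neq x_c$ we delete $v_c^{y}t_c^{y}$ and $v_c^{y}{\tilde t}_c^{y}$, which simultaneously destroys the claws at $v_c^{y}$ and detaches $t_c^{y},{\tilde t}_c^{y}$ from the clause gadget, whereas for $x_c$ (whose bottleneck edge survives) we instead remove $v_c^{x_c}u_c$ and $v_c^{x_c}{\tilde v}_c^{x_c}$, the extra edge $u_c{\tilde v}_c^{x_c}$ then killing the claw centred at $u_c$. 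Making this bookkeeping land at exactly seven non-bottleneck deletions per clause — matching the lower bound of Claim~\ref{claim:clause_gadget} and hence the global budget $k$ — is the delicate point; the reverse direction, by contrast, is essentially just the disjointness-plus-counting argument above.
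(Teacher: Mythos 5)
Your proposal is correct and follows essentially the same route as the paper: the same deletion set (the matching $E^x_{b(x)}$ per variable plus the same seven clause-gadget edges chosen relative to a satisfying literal), the same degree-based case analysis for claw-freeness of $G-F$, and the same disjointness-plus-counting argument combined with the ``furthermore'' parts of Claims~\ref{claim:clause_gadget} and~\ref{claim:variable_gadget} for the converse. The only difference is that your verification of claw-freeness is stated more tersely, but it covers the same cases (centers of degree at least $3$ are $u$-, $v$-, $t$-, or $\tilde t$-vertices, and each surviving neighbourhood splits into at most two cliques), so there is no gap.
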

\begin{proof}
(Left to Right) Since $G_c, G^x$ are induced subgraphs of $G$ and the edge sets $E(G^x)$ for $x\in\cV(\phi)$ and $ E(G_c)\setminus\{t_c^x {\tilde{t}}_c^x \mid x\in \cV(c)\}$ for $c\in\phi$ are pairwise disjoint, Claims~\ref{claim:clause_gadget} and~\ref{claim:variable_gadget} imply than any HDS of $G$ has size at least $k$.
Furthermore, equality holds only if it holds in both claims.
Let $F$ be an HDS of size equal to $k$.
From Claim~\ref{claim:variable_gadget}, we infer that $F\cap E(G^x) = E_{b(x)}^x$ for some $b: \cV(\phi) \to \{\bot,\top\}$.
From Claim~\ref{claim:clause_gadget}, for each clause $c$ of $\phi$, 
there is a variable $x$ in $c$ such that $t_c^x {\tilde{t}}_c^x$ is not deleted.
This means $F\cap E(G^x)=E_{b(x)}^x$ does not contain this edge, so by construction we infer that assigning $x \mapsto b(x)$ satisfies clause $c$.
Therefore, each clause is satisfied by assignment $b$.

\medskip

\noindent(Right to Left)
Let $b : \cV(\phi) \to \{\bot,\top\}$ be a satisfying assignment for $\phi$.
We show that by deleting 
$\bigcup_{x\in\cV(\phi)} E^x_{b(x)}$ and 7 more edges in each clause gadgets we can get a claw-free graph.
For each clause $c\in\phi$, there is a variable $y$ such that assigning $y\mapsto b(y)$ satisfies $c$, which by the construction means that $t_c^{y} {\tilde{t}}_c^{y} \not\in E^{y}_{b(y)}$.
Let $F_c = \{u_c v_c^{y}, u_c {\tilde{v}}_c^{y}, v_c^{y} {\tilde{v}}_c^{y}\} \cup \{v_c^x t_c^x, v_c^x {\tilde{t}}_c^x \mid x\in\cV(c), x\neq y\}$.
We claim the set $F:=\bigcup_{x\in\cV(\phi)} E^x_{b(x)} \cup \bigcup_{c\in\phi} F_c$ is an HDS of $G$; note that we have that $|F|=k$.

Observe that if $t_c^x {\tilde{t}}_c^x$ gets deleted by $F$ for some $c\in\phi,x\in\cV(c)$, then $t_c^x {\tilde{t}}_c^x\in E^x_{b(x)}$. Hence, by the construction we infer that assigning $x\mapsto b(x)$ does not satisfy $c$,
which by the definition of $F_c$ implies that edges $v_c^x t_c^x, v_c^x {\tilde{t}}_c^x$ do not get deleted by $F$.

Suppose now that $G-F$ contains a claw.
The center of this claw has degree at least 3 in $G$, so it is a $u$-, $v$-, $t$-, or $\tilde{t}$-vertex.
It cannot be a $u$-vertex, since the closed neighborhood of $u_c$ for any $c\in\phi$ forms three triangles in $G$, exactly one of whose edges are deleted, completely, by $F_c$.
So the closed neighborhood of $u_c$ in $G-F$ forms two triangles joined at $u_c$.
The center also cannot be a $v$-vertex, since the closed neighborhood of $v_c^x$ for any $c\in\phi, x\in\cV(c)$ forms three triangles $\{v_c^x, u_c, {\tilde{v}}_c^x\}$, $\{v_c^x, w_c^x, {\tilde{w}}_c^x\}$ and $\{v_c^x, t_c^x, {\tilde{t}}_c^x\}$ in $G$, for which either the first one gets completely deleted by $F_c$, or $v_c^x t_c^x$ and $v_c^x {\tilde{t}}_c^x$ get deleted by $F_c$. Additionally, $t_c^x {\tilde{t}}_c^x$ may be deleted by $E^x_{b(x)}$, but by the observation of the previous paragraph, this only occurs in the second case (i.e., when $v_c^x t_c^x$ and $v_c^x {\tilde{t}}_c^x$ get deleted by $F_c$).
So the closed neighborhood of $v_c^x$ in $G-F$ again forms two triangles joined at $v_c^x$.
Finally, the center cannot be $t_\bot^x$ or $t_\top^x$ for some $x\in \cV(\phi)$, since these vertices have degree $2$ in $G-F$.

Hence, the center of the claw in $G-F$ must be $t_c^x$ or ${\tilde{t}}_c^x$ for some $c\in\phi, x\in\cV(c)$.
Suppose without loss of generality that it is $t_c^x$.
There are four neighbors of $t_c^x$ in $G$: $s_c^x$, ${\tilde{t}}_c^x$, $v_c^x$, and either ${\tilde{t}}_d^x$ for some $d\in \phi$ or $t_r^x$ for $r\in\{\bot,\top\}$.
Either the edge to ${\tilde{t}}_c^x$ or the edge to ${\tilde{t}}_d^x$ (resp. $t_r^x$) gets deleted by $E^x_{b(x)}$, so for three edges to remain, no others can be deleted. In particular, the edge to $v_c^x$ cannot be deleted and, by the definition of $F$, this occurs only if $v_c^x {\tilde{t}}_c^x$ is also not deleted. By the above observation, the edge to ${\tilde{t}}_c^x$ cannot be deleted.
Hence, only $t_c^x {\tilde{t}}_d^x$ (resp. $t_c^x t_r^x$) gets deleted by $F$ in the neighborhood of $t_c^x$ in $G$, so this neighborhood forms in $G-F$ a triangle with a pendant vertex. 
We have obtained a contradiction in all the cases, so $F$ is indeed an HDS of $G$.
\maybeqed\end{proof}

\section{Conclusions}\label{sec:conc}
In this paper we have charted the parameterized and kernelization complexity of \cdedgedeletion by proving that (i) the problem admits a polynomial kernel, and (ii) the simple $5^k\cdot n^{\Oh(1)}$ branching algorithm following from the observation of Cai~\cite{cai1996fixed} cannot be improved to a subexponential parameterized algorithm, unless the ETH fails.

It should not be a surprise for the reader that the results of this paper were obtained while working on kernelization for {\sc{Claw-free Edge Deletion}}. In this problem, by applying the same vertex modulator principle we arrive at the situation where we have a modulator $X\subseteq V(G)$ with $|X|\leq 4k$, and $G-X$ is a claw-free graph. Then, one can use the structural theorem of Chudnovsky and Seymour~\cite{ChudnovskyS08c,ChudnovskyS08e} (see also variants suited for algorithmic applications, e.g.,~\cite{HermelinML14}) to understand the structure of $G-X$ and of the adjacencies between $X$ and $G-X$. In essence, the structural theorem yields a decomposition of $G-X$ into \emph{strips}, where each strip induces a graph from one of several basic graph classes; each strip has at most two distinguished cliques (possibly equal) called \emph{ends}, and strips are joined together by creating full adjacencies between disjoint sets of ends. Thus, the whole decomposition looks like a line graph, where every vertex is replaced by a possibly larger strip; indeed, the degenerate case where all the strips are single vertices exactly corresponds to the case of line graphs. As far as base classes are concerned, probably the ones most important for understanding the whole decomposition are proper interval graphs and graphs with independent sets of size at most 2 or 3, in particular, co-bipartite graphs. Thus, we believe that for the sake of showing a polynomial kernel for {\sc{Claw-free Edge Deletion}}, one needs to understand the three special cases when $G-X$ is (a) a line graph, (b) a proper interval graph, and (c) a co-bipartite graph.

We believe that the results of this paper present a progress towards this goal by providing a toolbox useful for tackling case (a). In our proof we have used in several places the fact that we exclude also diamonds. However, much of the structural analysis can translated also to the case when only claws are forbidden, so we hope that similar ideas can be also used for understanding case (a), and consequently how the whole decomposition structure should be dealt with in a polynomial kernel for {\sc{Claw-free Edge Deletion}}. Unfortunately, we are currently unable to make any significant progress in cases (b) and (c), of which case (c) seems particularly difficult. 

From another perspective, our positive result gives high hopes for the existence of a polynomial kernel for {\sc{Line Graph Edge Deletion}}, which seems much closer to the topic of this work than {\sc{Claw-free Edge Deletion}}. The problem is that \clawdiamond-free graphs, or equivalently line graphs of triangle-free graphs, have much nicer structural properties than general line graphs. These properties, encapsulated in Lemma~\ref{lem:bag-decomposition}, were used several times to simplify the analysis, which would become much more complicated in the case of general line graphs. Also, note that in this paper the considered graph class can be characterized using only two relatively simple forbidden induced subgraphs. In the case of general line graphs, the classic characterization via forbidden induced subgraphs of Beineke~\cite{Beineke70} involves $9$ different obstacles with up to $6$ vertices.

\bibliographystyle{abbrv}
\bibliography{claw-diamond-deletion}

\end{document}